\documentclass[12pt]{article}
\usepackage[utf8]{inputenc}
\usepackage{amsmath}

\usepackage{amssymb}
\usepackage{amsthm}
\usepackage{mathtools}
\usepackage{color}
\usepackage{bigints}
\usepackage{setspace}
\usepackage{enumerate}
\usepackage{comment}
\usepackage{natbib}
\usepackage{tikz}
\usepackage{pgfplots}
\usepackage{subcaption}
\usepackage{graphicx,pgfarrows,pgfnodes}
\usepackage{caption}
\usepackage{babel}
\usepackage{array}
\usepackage{verbatim}
\usepackage{indentfirst}

\onehalfspacing
\allowdisplaybreaks
\usepackage[top=1in, bottom=1in, left=1in, right=1in]{geometry}

\usepackage{hyperref}
\newtheorem{axiom}{\sc Axiom}
\newtheorem{defn}{\sc Definition}
\newtheorem{theorem}{\sc Theorem}
\newtheorem{lemma}{\sc Lemma}
\newtheorem{proposition}{\sc Proposition}
\newtheorem{cor}{\sc Corollary}

\title{{\bf Revealed Bayesian Persuasion}%
\footnote{This paper was supported by the Israel Science Foundation 
(grant 2365/25). I would like to thank Henrique de Oliveira, Mark Dean, Piotr 
Dworczak, Yoram Halevy, Emir Kamenica, Elliot Lipnowski, Hannah Trachtman, 
and Roee Levy for useful comments and discussions, as well as various 
participants in seminars at Bar Ilan University, Technion, and Hebrew 
University. All remaining errors are my own.}}

\author{Jeffrey Mensch\footnote{\texttt{jeffrey.mensch@mail.huji.ac.il, https://sites.google.com/site/jeffreyimensch/.}} \\ \centering \it \small Hebrew University of Jerusalem}

\date{\vspace{0.8cm} \today}
\begin{document}

\maketitle
\begin{abstract}

\noindent 

How does one test empirically the hypothesis that a decision maker (DM)
is being influenced by information via Bayesian persuasion? In this paper, 
I consider a DM whose state-dependent preferences are known to an analyst, 
who sees the conditional distribution of choices given the state. I 
provide necessary and sufficient conditions for the dataset to be 
consistent with the DM being Bayesian persuaded by an unobserved sender 
who generates a distribution of signals to ex-ante optimize the sender's 
expected payoff. I thereby provide a tool for empirical work on 
information design.

\vspace{5 mm}

\noindent {\it Keywords}: Bayesian persuasion, information design, revealed preference

\vspace{20 mm}

\end{abstract}

\pagebreak{}

\section{Introduction}

Since the seminal paper of \cite{Kamenica2011}, a large literature
has developed exploring the topic of Bayesian persuasion, in which a sender
ex-ante optimally sends information to a receiver in order to induce an optimal
distribution of actions. In this paper, I provide necessary and sufficient 
techniques to test whether such a model of interaction is upheld in a dataset,
that is, the data can be explained by preferences of a sender and a 
receiver, where the former is optimally persuading the latter. This 
provides a tool for empirical work to examine the use of information 
design in practice.

The setup I consider is as follows. An analyst observes the distributions
of choices,  as a function of the state, by a decision maker (DM) from 
respective sequence of menus, drawn from a common set of actions. The 
analyst also observes the DM's state-dependent payoffs and the (common) 
prior. The question is then whether the analyst can conclude that the data 
that he observes regarding the DM's choices is consistent with the 
existence of a third party, namely, a ``sender," who is persuading the DM 
based on the sender's own preferences. To make the problem nontrivial, I 
assume that the sender only uses a nontrivial persuasion strategy if 
sending the degenerate distribution of the prior is strictly suboptimal,
i.e., the sender ``benefits from persuasion."

This setup is consistent with two possible environments, both of 
economic interest. The first is where the sender is hidden: i.e., we
do not know of the sender's existence other than through his influence 
on the DM. This interpretation of the model is more consistent with
the reasoning of the literature on stochastic choice, in which one
tries to explain why one is choosing randomly from a given menu.
Interpreted in this way, the model contributes to this literature by
providing an additional testable hypothesis to explain this phenomenon.

An alternative environment in which this setup is relevant is 
where one directly observes a sender choosing a distribution of signals. 
In this case, the conditions of the paper describe a direct check of the 
sender's choice, namely, that he has chosen the optimal distribution of 
signals for his (unobserved) preference. Indeed, much of the literature on 
laboratory experiments on Bayesian persuasion 
\citep{nguyen2017bayesian,au2018bayesian,frechette2022rules,coricelli2023incentives,wu2023competition}
have attempted to test this directly, albeit where the 
sender's preference is known. Little experimental work has been done 
without such structure.

In addition, the non-experimental empirical literature on Bayesian 
persuasion is also quite sparse. \cite{vatter2025quality} empirically 
examines the assignment of quality scores in a context of health insurance 
plans. However, the estimation is parametric, as it assume the functional 
form of the optimization problem, and then estimates these parameters to 
best fit the data. Moreover, unlike Bayesian persuasion, the optimization 
problem for assigning scores is not fully flexible; rather, it partitions
the state space monotonically into convex regions. Similarly, 
\cite{hopenhayn2026optimal} parametrically examine optimal partitions 
for coarse rating design. A major contribution of the present paper
is to allow applied researchers to test for Bayesian persuasion without 
any assumptions about the sender's preferences.

My main result states that the dataset can be rationalized by 
Bayesian persuasion if and only if it satisfies two axioms. The first, 
``No Improving Action Switches" (NIAS), is a standard axiom in the 
literature on information-based choice, first introduced by 
\cite{caplin2015testable}. Roughly, it exploits that, based 
on the pattern of choice, the analyst can infer what the information of 
the DM would be if she chooses a given action at a single posterior only. 
The axiom then states that, at these respective posteriors, the action in 
question is indeed optimal for the DM.

The second axiom, novel to this paper, states as follows. Consider the 
original distributions of posteriors in each respective menu, as inferred
from the state-dependent probabilities of each action. Then there is no
Bayes-plausible way to choose different posteriors such that, on average
across menus, the state-dependent probabilities of each action remains
the same, while placing positive probability at the prior when this was 
not the case originally. The intuition relies on the suboptimality of the 
prior under the hypothesis of Bayesian persuasion: if the prior is 
suboptimal for the sender, then there ought to be no way of maintaining 
the same, optimal payoff while putting positive weight on the prior. 
Analogously to \cite{afriat1967construction}, it is sufficient to consider 
these alternatives, which yield the same original payoffs as a (weighted 
average) across menus, yet are clearly suboptimal after this rearrangement 
of choices.

The axiomatization relies on a characterization by 
\cite{lipnowski2017simplifying} of a minimal set of posteriors that need 
to be considered for Bayesian persuasion, given a menu and known 
preferences of the receiver. They show that one need only consider the 
extreme points of the (convex) regions where each respective action is 
optimal for the receiver. As a result, one need only consider a finite 
number of alternative supports instead of the observed distribution in 
the dataset. This finite characterization allows for the use of the 
standard theorem of the alternative/Farkas' lemma techniques to 
demonstrate the result.

In the case that the sender's preferences are state-independent,
the conditions of the test simplify. Here, the sender does not care about
the correlation between action and state, but only the probabilities that
the respective actions are taken. Therefore, the test drops the 
the state-dependency from the requirement that the probabilities remain 
the same on average across menus. As long as one can preserve such 
probabilities, while placing positive weight on the prior, the dataset is
inconsistent with Bayesian persuasion for a sender with state-independent
preferences (Theorem \ref{siperstest}).

One can use similar methods to analyze sender-optimal Bayesian 
persuasion when all the payoffs are measurable in the posterior mean
\citep{dworczak2019simple}. In this case, the parameters of interest
must be adapted to represent posterior means instead of posteriors. Yet
this is not sufficient, as the characterization of ``benefiting" from
persuasion differs in this environment. I therefore provide necessary and 
sufficient conditions for the sender to benefit from persuasion in
the posterior-mean environment (Proposition \ref{pmben}), as well as 
a minimal set of posterior means needed to be checked (Proposition 
\ref{outmsuff}), thereby providing an analogue to \cite{lipnowski2017simplifying}
for the posterior-mean case. I use this characterization to provide an 
alternative axiom that takes both these conditions and the minimal set
into account. Along with a straightforward modification of NIAS, this
is then necessary and sufficient to rationalize the dataset via
Bayesian persuasion over posterior means (Theorem \ref{PMperstest}).

\section{Related Literature}

The paper builds on related work on stochastic choice in the context 
of information design, specifically that from rational inattention,
which shares many techniques with Bayesian persuasion 
\citep{caplin2013behavioral,caplin2022rationally}. The works of 
\cite{caplin2015revealed}, \cite{chambers2020costly}, 
\cite{denti2022posterior}, and \cite{mensch2026posterior} develop 
revealed preference characterizations of various models of rational 
inattention, in which a DM first acquires information about an unknown 
state, and then chooses optimally conditional on the signal she observes.%
\footnote{I will discuss the relationship between the key axioms in
Section 8.}
Each of these rely on the analyst's access to state-dependent stochastic
choice data and the knowledge of the DM's payoff function. They then use 
the NIAS axiom in combination with one other respective
key axiom to provide necessary and sufficient conditions for their 
characterizations. However, to the present date, there has been no 
revealed preference characterization of Bayesian persuasion.

\cite{doval2024revealed} explores how, in the presence of stochastic 
choice (but not state-dependent) data, one can rationalize the pattern of 
choices as if they arise from a DM who first learns something about the state 
before making their choice. They do so by noting that, if the DM is behaving
in this way, they must (a) have posteriors in the regions where the respective
choices are optimal, and (b) the prior must be a weighted average of the 
posteriors according to the observed probabilities. Since the set of 
priors consistent with this is convex, one can characterize this set by 
test inequalities that characterize its boundary. Thus, testing for the 
hypothesis is equivalent to testing for the inconsistency of the inequalities. 
The work then develops several applications, including to multiple
menus and explicitly constructing information structures that rationalize
the data. However, while some of their results may ``be used to further the
study of Bayesian persuasion,"%
\footnote{Indeed, an earlier title of their paper was ``The Core of Bayesian 
Persuasion.}
it does not explicitly provide any results regarding consistency of a data 
set with a Bayesian persuasion setting.

Closely related is the work by \cite{jakobsen2021axiomatic, 
jakobsen2024temptation}, which provides an axiomatic decision-theoretic
characterization of preferences consistent with a Bayesian persuasion 
representation. In the former paper, he takes as primitives the 
preferences of the sender and the choice correspondence of the receiver, 
and provides an axiomatic characterization of sender preferences over 
information structures that have a Bayesian persuasion representation. 
In the latter, he considers several different potential collections of 
choice or preference primitives, including menu preferences, stochastic 
choice, and choice correspondences.

As in axiomatic decision-theoretic characterizations in general, these 
provide much insight into the economics of the sender's problem in 
Bayesian persuasion, and provide testable implications of the theory. At 
the same time, they are of limited use for incomplete choice data, for 
which the revealed preference approach of the present paper is adapted. 
For instance, when the primitive is stochastic choice, the construction 
of \cite{jakobsen2024temptation} requires the analyst to be able to 
consider \emph{all} possible finite menus of acts involving the finite 
lotteries over the set of outcomes. This allows his analyst to precisely 
pin down, up to an equivalence relation, the preferences of the sender 
and receiver (Theorems 4 and 6). However, it is unclear how to use these 
axioms to test for consistency when the data is incomplete: while the 
axioms are still necessary conditions, and so would still detect a 
direct violation of one of the axioms in the choices of such a menu, it 
is also possible that there could be an \emph{indirect} violation of the 
hypothesis due to inconsistency of the choices from different menus; thus 
the conditions are not demonstrably sufficient for finite data. By 
contrast, as the present paper uses a revealed preference approach, it 
is well-suited to check for such indirect violations via an intuitive 
connection to benefiting from persuasion, by appropriately adapting 
techniques developed by \cite{afriat1967construction}.

\section{Preliminaries}

There are $N$ states $\omega\in\Omega$, and a grand set of actions
$X$. The prior is $p_{0}$. The state-dependent utility for the DM, 
$v(a,\omega)$, is known to the analyst. For any finite menu $A\subset X$, 
the analyst observes the conditional distribution of actions given the state, 
$\sigma_{A}: \Omega\rightarrow \Delta(A)$. Let the set of menus be 
$\mathcal{A}\subset 2^X$.

\begin{defn}
A \emph{state-dependent stochastic choice (SDSC) dataset} is a set 
of pairs $\{(A,\sigma_A)\}_{A\in\mathcal{A}}$.
\end{defn}

The analyst aims to test the hypothesis that the SDSC dataset arises
from the presence of a third party, namely, a \emph{sender}, who
has utility $u: X\times\Omega\rightarrow \mathbb{R}$, and sends
signals to the DM to optimize his payoff ex ante. This would make 
the DM a receiver in a game of Bayesian persuasion. The DM, upon 
receiving a signal realization $s$, chooses the optimal action given
the posterior belief conditional on $s$. It is well known (e.g.,     
\cite{Kamenica2011}, Proposition 1) that it is 
without loss to view signals as recommendations for actions, and 
so we can express the set of signal realizations for menu $A$ as 
$A$ itself. Hence under the hypothesis of Bayesian persuasion, it is 
without loss of optimality for the sender, for each such $a\in A$, to 
induce some associated posterior $p_{A,a}\in \Delta(\Omega)$.%
\footnote{I will shortly explain the connection between the observed data and this hypothetical posterior $p_{A,a}$.} 
As a result, one can write the overall set of recommendations as 
consisting of a distribution $\pi_A\in\Delta(A\times \Delta(\Omega))$ 
over pairs $\{(a,p_{A,a})\}_{a\in A}$ that satisfies Bayes' rule with 
respect to the prior $p_0\in \Delta(\Omega)$.%
\footnote{Note that some of these posteriors may be induced with 
probability $0$ if the action $a$ is not among those that are optimally
recommended.}

In order for the DM to be incentivized to obey this recommendation, 
it is required that 
\begin{equation}
\sum_{\omega\in \Omega} v(a,\omega)p_{A,a}(\omega)\geq  \sum_{\omega\in \Omega} v(b,\omega)p_{A,a}(\omega),\forall b\in A
\end{equation}
The theory standardly assumes that the DM breaks ties in favor of the 
sender; thus, if $\sum_{\omega\in \Omega}u(a,\omega)p(\omega)>\sum_{\omega\in \Omega}u(b,\omega)p(\omega)$ 
and both $a$ and $b$ are optimal for the receiver at $p$, the receiver 
does not choose $b$. One can therefore define the action chosen by the 
receiver, $a^*_A:\Delta(\Omega)\rightarrow A$, and the indirect utility 
function $\phi_A:\Delta(\Omega)\rightarrow\mathbb{R}$ as the expected
payoff that the sender receives given posterior $p$. This will become
useful when comparing the choices in the data with a hypothetically 
optimal choice.

The sender's problem can therefore be written as 
\begin{equation}
\label{obj}
\max_{\pi\in\Delta(A\times \Delta(\Omega))} \sum_{a\in A}\sum_{\omega\in\Omega} u(a,\omega)p_{A,a}(\omega)\pi(a,p_{A,a})
\end{equation}
\[
\mbox{s.t. } \sum_{a\in A} p_{A,a}(\omega)\pi(a,p_{A,a})=p_0(\omega),\,\forall \omega
\]
\[
\sum_{\omega\in \Omega} v(a,\omega)p_{A,a}(\omega)\geq  \sum_{\omega\in \Omega} v(b,\omega)p_{A,a}(\omega),\forall b\in A
\]

Now, one can rationalize this hypothesis for any dataset (assuming 
that $a$ is indeed optimal for the DM conditional on recommendation 
$a$) by having the sender be indifferent between all possible actions
$a\in X$, and therefore all incentive-compatible recommendations are 
optimal. In this case, the Bayesian persuasion problem would become
degenerate. To rule out the possibility that indifference is driving 
the sender's provision of information, I require that the sender
persuade the DM if and only if it is strictly optimal to do so, i.e., for 
optimal signal $\pi_A\in\Delta(A\times \Delta(\Omega))$ such that $\pi_A\neq \delta_{p_0}$,
\begin{equation}
\label{nontrivial}
\sum_{a\in A}\sum_{\omega\in\Omega} u(a,\omega)p_{A,a}(\omega)\pi_A(a,p_{A,a})> \sum_{\omega\in\Omega} u(a^*_A(p_0),\omega)p_0(\omega)
\end{equation}

\begin{defn}
    The sender uses \emph{nontrivial Bayesian persuasion} if he only 
    provides $\pi_A\neq \delta_{p_0}$ when (\ref{nontrivial}) is satisfied,
    and provides $\pi_A=\delta_{p_0}$ otherwise.
\end{defn}

I thus test for the hypothesis that the sender uses nontrivial Bayesian 
persuasion. As shown in \cite{Kamenica2011}, Proposition 2, the case where 
the sender is indifferent between providing information and providing no 
information is nongeneric. Thus the assumption of the sender breaking his 
indifference in this particular way only applies to knife-edge cases.%
\footnote{By contrast, in Proposition 5, \cite{Kamenica2011} show that 
the need for \emph{receiver} tiebreaking is not restricted to a nongeneric 
class of persuasion games.}

To connect the choices of the DM, as represented by the SDSC 
function $\sigma_A$, to the hypothesis that there is a sender who
is using recommendations, one can write the \emph{revealed posterior} 
(Caplin and Martin, 2015) as the conditional distribution of states 
given that a certain action has been chosen. With a slight abuse of 
notation, define
\begin{equation}
\sigma_A(a)\coloneqq \sum_{\omega\in\Omega} \sigma_A(a\vert \omega)p_0(\omega)
\end{equation}

\begin{defn}
For each $a\in A$, the \emph{revealed posterior} $p_{A,a}$ is 
given by 
\begin{equation}
\label{revpos}
p_{A,a}\coloneqq\begin{cases}
\frac{\sigma_A(a\vert \omega)p_0(\omega)}{\sigma_A(a)}, & a\in\mbox{supp}(\sigma_A)\\
p_0, & \mbox{otherwise}
\end{cases}
\end{equation}
\end{defn}
As discussed above, it is without loss for the sender to use recommendation
signals. Therefore, for any optimal $\sigma_A$, there is an optimal 
$\pi_A\in \Delta(A\times \Delta(\Omega))$ over recommendation signals. 
One can then rewrite $\pi_A$ as follows.%
\footnote{This definition differs somewhat from that provided in the 
literature on revealed preference tests for rational inattention 
\citep{denti2022posterior}. It is not sufficient to look only 
at the probability of the signal being chosen, as we shall see that 
the sender's preference for different actions $a,b$ chosen at a given 
posterior within a menu $A$ may differ; when testing for alternative 
distributions, these distinctions may matter. By contrast,
in the case of posterior-separable attention costs, the DM only cares
about the indirect utility, which under the hypothesis is maximized 
by the revealed choice; it therefore does not matter whether multiple 
actions are chosen at the same revealed posterior in the SDSC dataset.}

\begin{defn}
    For each $a\in A$, the \emph{distribution of revealed recommendations}
    $\pi_{\sigma_A}\in \Delta(A\times \Delta(\Omega))$ is given by 
    \begin{equation}
    \label{revdist}
    \pi_{\sigma_A}(a,p)\coloneqq\begin{cases}
        \sigma_A(a), & p=p_{A,a}\\
        0, & \mbox{otherwise}
    \end{cases}
    \end{equation}
\end{defn}

In words, the revealed recommendations are the signals that are imputed
from the action $a\in A$ being chosen precisely when these signal 
realizations recommend to choose $a$. It follows that the revealed
posteriors for $a$ are precisely the posteriors in these recommendations, 
$p_{A,a}$, and the realizations occur with probabilities 
$\pi_{\sigma_{A}}$. Testing the hypothesis of Bayesian persuasion is 
therefore a question of whether one can rationalize $\sigma_A$ as 
deriving from recommendation signals.

\section{Illustrative Example}

In this section, I illustrate the idea that drives the main results of 
the paper through a simple, single-menu example. I then informally
describe how this intuition relates to that of \cite{afriat1967construction}. Finally, I describe how, analogously to 
\cite{afriat1967construction}, the single-menu intuition extends to 
multiple menus.

\noindent\textbf{Example 1:} Consider a binary-state environment 
$\Omega=\{\omega_1,\omega_2\}$ with a menu 
$A=\{a,b,c,d\}$. With some abuse of notation, let $u(\cdot,p)$ 
and $v(x,p)$ be the expected payoffs of the sender and
receiver, respectively, when action $x$ is selected at posterior 
probability $p\in[0,1]$ that $\omega=\omega_2$. Let the payoffs 
be given by 
\[
v(x,p)=\begin{cases}
1-p, & x=a\\
0.9-0.5p, & x=b\\
0.6, & x=c\\
-1+2p, & x=d\\
\end{cases}
\]
\[
u(x,p)=\begin{cases}
0.3-p, & x=a\\
1.2-2p, & x=b\\
2p-0.8, & x=c\\
3p-2.3, & x=d
\end{cases}
\]
The sender's indirect utility, $\phi_A(p)$, is depicted in Figure 
1 below. Suppose that the sender sends a binary signal $\pi_A$
with support on $p\in\{0.4,0.8)$. We draw a line through the respective 
posteriors/indirect utilities, as indicated by the dashed green line in Figure 1, to capture the expected payoff for the sender from this signal.

\begin{figure}[h]
\begin{center}
\resizebox{100mm}{75mm}{
\begin{tikzpicture}
\begin{axis}[
	axis lines=center,
  ymin=-0.02,ymax=1.02,
  xmin=-0.02,xmax=1.02,
  xlabel={$p$}, ylabel={$u$},
  ymajorticks=false,
  xtick = {0.4, 0.5, 0.8}
]
\addplot[blue, line width = 2,samples=200][domain=0:0.2] {0.3-x};
\addplot[blue, line width = 2,samples=200][domain=0.2:0.6] {0.8-2*(x-0.2)};
\addplot[blue, line width = 2,samples=200][domain=0.6:0.8] {0.4+2*(x-0.6)};
\addplot[blue, line width = 2,samples=200][domain=0.8:1] {0.1+3*(x-0.8)};
\addplot[olive,line width = 2,dashed,samples=200]{x};
\draw[olive] (axis cs: 0.6, 0.6) node[above] {$\lambda_A$};
\draw[orange, dashed] (axis cs: 0.5,0) -- (axis cs: 0.5,1);
\node[fill, olive, circle, inner sep=1.5pt] 	at (axis cs:0.4,0.4){};
\node[fill, olive, circle, inner sep=1.5pt] 	at (axis cs:0.8,0.8){};
\end{axis}
\end{tikzpicture}
}

\caption{Signaling $\pi_A$ in menu $A$}
\end{center}
\end{figure}

Now suppose that the sender is a Bayesian persuader; in that case,
the chosen posteriors, and the sender's utility at these points, must
lie along the concavification of the sender's utility function. 
It is clear, then, that $\pi_A$ is suboptimal, as there are 
posteriors for which the indirect utility of the sender lies above the line given by $\lambda_A$. Indeed, it is
not hard to see that the optimal distribution induces the posteriors
$p\in \{0.2,0.8\}$ with equal probability.

Indeed, it turns out that the distribution in Figure 1 is suboptimal 
for \emph{any} sender preferences where the sender benefits from 
persuasion. Under this hypothesis, the prior must give a lower indirect
utility than the expected indirect utility from the posteriors. If so, 
then it shouldn't be possible to achieve the same indirect utility as this
distribution by choosing a different distribution that places positive 
probability on the prior. Yet this is not the case in the present example:
one can, instead of inducing posterior $p=0.4$, induce 
$p=0.2$ with probability $\frac{1}{3}$ and $p=0.5=p_0$
with probability $\frac{2}{3}$. In Figure 2, I depict splitting 
$p=0.4$ in the latter way, as indicated by the red dashed line.

\begin{figure}[h]
\begin{center}
\resizebox{100mm}{75mm}{
\begin{tikzpicture}
\begin{axis}[
	axis lines=center,
  ymin=-0.02,ymax=1.02,
  xmin=-0.02,xmax=1.02,
  xlabel={$p$}, ylabel={$u$},
  ymajorticks=false,
  xtick = {0.2, 0.4, 0.5, 0.8}
]
\addplot[blue, line width = 2,samples=200][domain=0:0.2] {0.3-x};
\addplot[blue, line width = 2,samples=200][domain=0.2:0.6] {0.8-2*(x-0.2)};
\addplot[blue, line width = 2,samples=200][domain=0.6:0.8] {0.4+2*(x-0.6)};
\addplot[blue, line width = 2,samples=200][domain=0.8:1] {0.1+3*(x-0.8)};
\addplot[olive,line width = 2,dashed,samples=200]{x};
\draw[olive] (axis cs: 0.6, 0.6) node[above] {$\lambda_A$};
\draw[orange, dashed] (axis cs: 0.5,0) -- (axis cs: 0.5,1);
\node[fill, olive, circle, inner sep=1.5pt] 	at (axis cs:0.4,0.4){};
\node[fill, olive, circle, inner sep=1.5pt] 	at (axis cs:0.8,0.8){};
\node[fill, red, circle, inner sep=1.5pt] 	at (axis cs:0.2,0.8){};
\node[fill, red, circle, inner sep=1.5pt] 	at (axis cs:0.5,0.2){};
\addplot[red,line width = 2,dashed,samples=200][domain=0.2:0.5] {0.8-2*(x-0.2)};

\end{axis}
\end{tikzpicture}
}

\caption{Equivalent signaling in menu $A$, placing positive weight on $p_0$}
\end{center}
\end{figure}

Notice that this split maintains the same state-dependent
probability of choosing all actions (in particular, $x=b$), while
placing positive probability on the prior $p_0=0.5$. Yet, 
since the prior is suboptimal, this must be compensated at the other 
posterior, $p=0.2$. Thus, $p=0.2$ must give higher indirect utility than 
that along the line connecting the posteriors/indirect utilities at 
$p=0.4$ and $p=0.8$, contradicting the optimality of these posteriors 
for Bayesian persuasion. In my main theorem, I develop a finite set 
of conditions that can be used to check whether such an alternative 
distribution of posteriors is possible.

The force driving the incompatibility of a dataset in a single menu 
parallels the force in \cite{afriat1967construction} for a single 
budget/price.%
\footnote{Afriat-style analyses often assume that the analyst just sees
the price and the choice. With such data, any choice is rationalizable
for a single menu. However, if one strengthens the analyst's observation
to include the budget as well, this is no longer the case.}
For a single menu, one can rule out particular choices due to them being 
\emph{dominated}. In \cite{afriat1967construction}, this is due to them
being contained strictly inside of the budget for the menu; under 
locally nonsatiated preference, this is not possible. This is true 
regardless of what these preferences within the class are.%
\footnote{See, for instance, \cite{nishimura2017comprehensive} for a 
discussion of how to check implications of theories that state that 
certain choices are dominated for all theories in a class.}
Similarly, the hypothesis of the present model states that the prior is 
dominated whenever the sender persuades the DM, regardless of such sender 
preferences. As a result, any revealed distribution that can be replaced, 
as in the example, with one that places the prior in the support is also 
dominated.

To extend to multiple menus, one can similarly take a chapter from 
\cite{afriat1967construction} in considering \emph{all of the implications}
of the theory for preferences that arise indirectly across menus. In 
\cite{afriat1967construction}, these additional implications derive from
the property of \emph{transitivity} of rational preferences. As a result,
it is necessary and sufficient to consider chains of preferences revealed
\emph{within} given menus, which, when appended to each other, form a 
cycle. This can lead to an indirect contradiction of the theory, as this 
constructed preference would lead to a violation of transitivity.

Analogously here, one can expand the set of implications of the theory
by considering multiple menus, because there can be additional ways
to construct a contradiction of the theory for the same factors as in 
Example 1. In particular, aside from direct contradictions from a single 
menu, one can check for indirect contradictions by preserving the same 
the same state-dependent distributions of choices, \emph{on average}, across menus. In turn, this preserves, on 
average, the expected utility. If such an 
alternative distribution results in placing positive weight on the prior 
where there was not such weight originally in the revealed distribution, 
then this contradicts the theory in the same way as in Example 1. Thus, 
since the prior was suboptimal, as indicated by the hypothesis that the 
sender only persuades when it is strictly optimal to do so, the ability to 
get the same expected utility by placing positive weight at the prior 
implies that the sender could obtain higher utility by a different signal 
in the original persuasion problem. So, it is not possible that the sender is 
optimally persuading the DM. I spell this out in more detail in the 
discussion surrounding Theorem 1 in Section 5.

\section{Rationalization of Bayesian Persuasion}

As informally illustrated in Example 1, in order to rationalize the SDSC 
dataset, two conditions must be met. First, the choices of acts must be 
optimal from the perspective of the DM, given the revealed posterior. 
Second, there must not be an alternative way of distributing posteriors
that achieves a greater payoff for the sender. I will capture 
these conditions in two axioms.

\subsection{Axiomatization}

The first axiom, ``No Improving Action Switches" (NIAS), has been
a staple of the revealed preference literature since 
\cite{caplin2015testable} for decisions that the DM takes based on
information that she receives. 

\begin{axiom}[No Improving Action Switches (NIAS)]
\label{NIAS}
For all $A\in\mathcal{A}$ and $a\in\mbox{supp}(\sigma_A)$,
$b\in A$,
\begin{equation}
\sum_{\omega\in\Omega} v(a,\omega)p_{A,a}(\omega)\geq \sum_{\omega\in\Omega} v(b,\omega)p_{A,a}(\omega)
\end{equation}
\end{axiom}

Informally, this axiom states that the DM chooses optimally
at each posterior $p_{A,a}$ that appears in the dataset. So,
if the revealed posterior for action $a$ is $p_{A,a}$, it must
indeed be optimal for the DM to be choosing $a$.

The second axiom relies on a property of Bayesian persuasion,
pointed out by \cite{lipnowski2017simplifying}, that restricts 
the set of posteriors that suffices to be considered in order to find 
the optimal persuasion strategy. Notably, this restriction 
\emph{does not} depend on the sender's preferences.
Thus, without knowing yet what the sender's preferences are,
it is possible to restrict the possible deviations from $\sigma_A$
to this set in order to test whether the sender chooses information
optimally. In the sequel, I follow Lipnowski and Mathevet's
terminology in defining this set and its properties.

The crucial insight is that, for any menu $A$ and each $a\in A$, 
the set of posteriors $p$ at which $a$ is optimal for the DM
is a finite convex polytope. One can thus divide the space of 
posteriors $\Delta(\Omega)$ into these convex sets, each of 
which is described by their extreme points. Since any point in
a convex set can be expressed as a convex combination of 
the extreme points, and payoffs are linear in probabilities
within each set, it follows that an optimal persuasion strategy 
can be found by restricting consideration to these extreme points.

\begin{defn}
\label{poscov}
For any menu $A$, the \emph{posterior cover} $\mathcal{C}_A$
is a family of closed convex subsets $C_{A,a}\subset \Delta(\Omega)$
such that, for each $a,b\in A$, 
\begin{equation}
\sum_{\omega\in \Omega} v(a,\omega)p(\omega)\geq \sum_{\omega\in \Omega} v(b,\omega)p(\omega),\,\forall p\in C_{A,a}
\end{equation}
and 
$$\Delta(\Omega)=\bigcup_{a\in A} C_{A,a}$$
\end{defn}

Thus the posterior cover defines the regions of posteriors over
which $a$ is optimal for the receiver. 

\begin{figure}
\begin{center}
\resizebox{70mm}{60mm}{
\begin{tikzpicture}
    \draw[thick] (0,0) -- (10,0);
    \draw[thick] (0,0) -- (5, 7.071);
    \draw[thick] (10,0) -- (5, 7.071);
    \draw[dashed, blue]  (5,7.071) -- (0, 0);
    \draw[dashed, blue]  (5,7.071) -- (10, 0);
    \draw[dashed, blue]  (7, 4.2426) -- (3, 4.2426);
    \draw[dashed, blue]  (7, 4.2426) -- (7, 0);
    \draw[dashed, blue]  (2, 2.8284) -- (3, 0);
    \draw[dashed, blue] (0,0) -- (10,0);
    \draw[dashed, blue] (6, 4.2426) -- (2.6, 1.13136);
    \draw (5,5) node {$a_1$};
    \draw (1.5,1) node {$a_2$};
    \draw (3.3,3.3) node {$a_3$};
    \draw (5.2,1.3) node {$a_4$};
    \draw (8,1.5) node {$a_5$};
    \draw[fill, red] (0,0) circle (0.1);
    \draw[fill, red] (10,0) circle (0.1);
    \draw[fill, red] (3,0) circle (0.1);
    \draw[fill, red] (7,0) circle (0.1);
    \draw[fill, red] (5,7.071) circle (0.1);
    \draw[fill, red] (7, 4.2426) circle (0.1);
    \draw[fill, red] (3, 4.2426) circle (0.1);
    \draw[fill, red] (2, 2.8284) circle (0.1);
    \draw[fill, red] (6, 4.2426) circle (0.1);
    \draw[fill, red] (2.6, 1.13136) circle (0.1);
\end{tikzpicture}
}

\caption{Posterior cover with outer points}
\end{center}
\end{figure}

\begin{defn}
Given posterior cover $\mathcal{C}_A$, let $\mbox{ext}(C_{A,a})$ be the set
of extreme points of $C_{A,a}$. The \emph{outer points}
of $\mathcal{C}_A$ are then given by
\[
\mbox{out}(\mathcal{C}_A)\coloneqq \{p\in \Delta(\Omega):\,\exists a\in A\mbox{ s.t. }p\in\mbox{ext}(C_{A,a})\}
\]
\end{defn}

\noindent\textbf{Example 2:} In Figure 3, I depict an example of a 
posterior cover for a case in which there are $3$ possible states, and the 
receiver can choose between $5$ possible actions 
$A=\{a_1,a_2,a_3,a_4,a_5\}$. The respective regions in which the 
actions are optimal compose the posterior cover; the outer points are 
depicted in red.

\begin{proposition}[\cite{lipnowski2017simplifying}, Proposition 1]
\label{LM2019}
For any sender preferences $u$, any posteriors in an optimal persuasion 
strategy can be replaced with a mean-preserving spread with support on 
$\mbox{out}(\mathcal{C}_A)$.
\end{proposition}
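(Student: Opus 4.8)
The plan is to reduce the statement to a pointwise \emph{splitting} lemma and apply it term by term to an arbitrary optimal strategy. Recall the standard fact (Kamenica--Gentzkow) that the sender's optimal value equals $\mbox{cav}\,\phi_A(p_0)$, the concave closure of $\phi_A$ at the prior, and that an optimal strategy is a finitely supported Bayes-plausible distribution with posteriors $\{p_k\}$ and weights $\{\lambda_k\}$ obeying $\sum_k \lambda_k p_k=p_0$ and $\sum_k \lambda_k \phi_A(p_k)=\mbox{cav}\,\phi_A(p_0)$. I work with the cover whose cells are the maximal optimality regions, $C_{A,a}=\{p:\,\sum_\omega v(a,\omega)p(\omega)\geq \sum_\omega v(b,\omega)p(\omega)\ \forall b\in A\}$; each is an intersection of finitely many half-spaces with the simplex, hence a compact polytope whose extreme points lie in $\mbox{out}(\mathcal{C}_A)$, and the cells cover $\Delta(\Omega)$ since some action is optimal for the DM at every posterior. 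Crucially, for this cover $p\in C_{A,a}$ holds exactly when $a$ is optimal for the DM at $p$.

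The key step is the following splitting lemma: for every $p\in\Delta(\Omega)$ there is a finitely supported distribution $\mu_p$ with support in $\mbox{out}(\mathcal{C}_A)$ and barycenter $p$ such that $\int \phi_A\,d\mu_p\geq \phi_A(p)$. To prove it, let $a$ be an action attaining the indirect utility under sender-preferred tie-breaking, so that $a$ is optimal for the DM at $p$, whence $p\in C_{A,a}$, and $\phi_A(p)=\sum_\omega u(a,\omega)p(\omega)$. Since $C_{A,a}$ is a polytope, Minkowski's theorem writes $p=\sum_j \alpha_j q_j$ as a convex combination of its extreme points $q_j\in\mbox{ext}(C_{A,a})\subseteq\mbox{out}(\mathcal{C}_A)$; set $\mu_p=\sum_j \alpha_j \delta_{q_j}$. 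Because each $q_j\in C_{A,a}$ makes $a$ optimal for the DM, the sender can induce $a$ there, so $\phi_A(q_j)\geq \sum_\omega u(a,\omega)q_j(\omega)$. Averaging and using linearity of the sender's payoff in the posterior once the action $a$ is fixed gives $\int \phi_A\,d\mu_p=\sum_j \alpha_j \phi_A(q_j)\geq \sum_j\alpha_j\sum_\omega u(a,\omega)q_j(\omega)=\sum_\omega u(a,\omega)\sum_j \alpha_j q_j(\omega)=\sum_\omega u(a,\omega)p(\omega)=\phi_A(p)$, as claimed.

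To conclude, I replace each posterior $p_k$ in the optimal strategy by $\mu_{p_k}$ and form $\hat\pi_A\coloneqq\sum_k \lambda_k \mu_{p_k}$. This is supported on $\mbox{out}(\mathcal{C}_A)$; its barycenter is $\sum_k \lambda_k p_k=p_0$, so it is Bayes-plausible and indeed a mean-preserving spread of the original distribution; and its value is $\sum_k \lambda_k \int \phi_A\,d\mu_{p_k}\geq \sum_k \lambda_k \phi_A(p_k)=\mbox{cav}\,\phi_A(p_0)$. Since no Bayes-plausible distribution can exceed the optimal value, the inequality is an equality, so $\hat\pi_A$ is itself optimal and supported on the outer points.

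I expect the main obstacle to be the careful treatment of tie-breaking on the boundaries of the cover, where $\phi_A$ is only upper semicontinuous because it jumps to the sender-preferred action. This forces two points of care: the cell must be selected through the action that \emph{attains} $\phi_A(p)$ rather than an arbitrary cell containing $p$ (which is why the maximal cover, where $p\in C_{A,a}$ iff $a$ is DM-optimal at $p$, is the right object), and the one-sided bound $\phi_A(q_j)\geq \sum_\omega u(a,\omega)q_j(\omega)$ must be verified at every extreme point, including those lying on faces shared with other cells. Everything else reduces to the affinity of the sender's payoff once the DM's action is fixed and to the standard vertex representation of a polytope, both of which are routine.
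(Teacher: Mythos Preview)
The paper does not prove this proposition; it is quoted verbatim as Proposition~1 of \cite{lipnowski2017simplifying} and used as a black box (the only accompanying argument is the one-line Corollary~\ref{outsuff}). Your proof is correct and is essentially the standard argument behind that result: pick the sender-preferred DM-optimal action $a$ at each posterior $p$, split $p$ over $\mbox{ext}(C_{A,a})$ via Minkowski, and use that $\phi_A\geq \sum_\omega u(a,\omega)\,\cdot\,(\omega)$ on $C_{A,a}$ together with affinity of the latter to conclude the split weakly improves, hence preserves optimality. Your handling of tie-breaking---choosing the cell via the action that \emph{attains} $\phi_A(p)$, and invoking only the one-sided bound $\phi_A(q_j)\geq \sum_\omega u(a,\omega)q_j(\omega)$ at the vertices---is exactly the right way to deal with the upper-semicontinuity of $\phi_A$ on cell boundaries, and it is the point where a careless argument would fail. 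There is nothing to compare against in the paper itself.
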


\begin{cor}
\label{outsuff}
For any sender preferences $u$ and menu $A$, there is an optimal 
persuasion strategy that has support on a set of pairs 
$\{(a,p):a\in A, p\in \mbox{ext}(C_{A,a})\}$.
\end{cor}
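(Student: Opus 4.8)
The plan is to derive Corollary~\ref{outsuff} from Proposition~\ref{LM2019} in two moves: first attach sender-optimal recommendations to the mean-preserving spread that the proposition supplies, and then show that these recommendations land on extreme points of the \emph{correct} regions.

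First, by Proposition~\ref{LM2019} there is an optimal persuasion strategy whose induced distribution over posteriors, call it $\tau$, satisfies $\mbox{supp}(\tau)\subseteq\mbox{out}(\mathcal{C}_A)$, and whose value is $\sum_{p}\tau(p)\phi_A(p)$, where under sender-favorable tie-breaking $\phi_A(p)=\sum_{\omega}u(a^*_A(p),\omega)p(\omega)$. I would turn this distribution of posteriors into a distribution over pairs by recommending, at each $p\in\mbox{supp}(\tau)$, the action $a^*_A(p)$ chosen by the DM at $p$. Since $a^*_A(p)$ is optimal for the DM at $p$, obedience holds, so the resulting $\pi\in\Delta(A\times\Delta(\Omega))$ is a Bayes-plausible, incentive-compatible persuasion strategy; and because it attains $\sum_{p}\tau(p)\sum_{\omega}u(a^*_A(p),\omega)p(\omega)=\sum_{p}\tau(p)\phi_A(p)$, it is still optimal.

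It then remains to show $p\in\mbox{ext}(C_{A,a^*_A(p)})$ for every $p\in\mbox{supp}(\tau)$. The crux is the following lemma: if $p\in\mbox{ext}(C_{A,a})$ and $p\in C_{A,b}$, then $p\in\mbox{ext}(C_{A,b})$. I would prove it by contradiction. Suppose $p=\tfrac12(q_1+q_2)$ with $q_1\neq q_2$ in $C_{A,b}$, and consider the affine map $h(q)=\sum_{\omega}\bigl(v(b,\omega)-v(a,\omega)\bigr)q(\omega)$. Optimality of $b$ on $C_{A,b}$ gives $h\ge 0$ there, while $p\in C_{A,a}\cap C_{A,b}$ forces $h(p)=0$; midpoint convexity then yields $h(q_1)=h(q_2)=0$, so $a$ is also DM-optimal at $q_1$ and $q_2$, i.e.\ $q_1,q_2\in C_{A,a}$, contradicting $p\in\mbox{ext}(C_{A,a})$. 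Applying the lemma with $b=a^*_A(p)$---legitimate because $p\in\mbox{out}(\mathcal{C}_A)$ furnishes some $a$ with $p\in\mbox{ext}(C_{A,a})$, and $p\in C_{A,a^*_A(p)}$ since $a^*_A(p)$ is DM-optimal at $p$---yields the claim.

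The main obstacle is exactly this last step: Proposition~\ref{LM2019} only guarantees posteriors that are extreme in \emph{some} cell of the cover, whereas the corollary must pair each posterior with the sender-preferred action, whose cell could a priori contain $p$ in its relative interior; the lemma, resting on the affineness of the DM's payoff differences together with sender tie-breaking, is what rules this out. I would also flag that the argument reads $C_{A,a}$ as the full DM-optimality polytope of $a$ (as drawn in Figure~3), so that $p\in C_{A,b}$ is equivalent to $b$ being DM-optimal at $p$; if the cover were instead a strict subcover, one would first pass to this canonical cover before running the recommendation step and the lemma.
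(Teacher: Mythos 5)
Your proof is correct, but it takes a more roundabout route than the paper, which disposes of the corollary in one line by observing that $\mbox{out}(\mathcal{C}_A)=\bigcup_{a}\mbox{ext}(C_{A,a})$. The reason the paper can be so brief is that the mean-preserving spread in Proposition \ref{LM2019} is performed cell by cell: each recommendation $(a,p_{A,a})$ with $p_{A,a}\in C_{A,a}$ is split into a convex combination of points of $\mbox{ext}(C_{A,a})$ \emph{while retaining the recommendation} $a$, so the required pairing $p\in\mbox{ext}(C_{A,a})$ holds by construction, obedience is preserved because the spread stays inside $C_{A,a}$, and the sender's value is preserved by linearity of $u(a,\cdot)\cdot p$ on that cell. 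You instead discard the action labels, keep only the marginal $\tau$ over posteriors, reattach actions via the sender-favorable selection $a^*_A$, and are then forced to prove that extremality transfers across cells --- your lemma that $p\in\mbox{ext}(C_{A,a})$ and $p\in C_{A,b}$ imply $p\in\mbox{ext}(C_{A,b})$. That lemma is a genuine and correctly proved structural fact about the canonical cover (the affine-difference argument is exactly right, and you are right to flag that it needs $C_{A,b}$ to be the full DM-optimality polytope so that $h(q_i)=0$ puts $q_i$ back into $C_{A,a}$), and your reassignment step also quietly verifies that the re-labelled strategy still attains the optimal value under sender-favorable tie-breaking, which is a point worth making explicit. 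So your argument buys a self-contained derivation from the literal statement of Proposition \ref{LM2019} (which only speaks of posteriors, not pairs), at the cost of an auxiliary lemma that the cell-by-cell construction renders unnecessary.
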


\begin{proof}
    Follows immediately from the fact that $\mbox{out}(\mathcal{C}_A)=\bigcup \mbox{ext}(C_{A,a})$.
\end{proof} 

It is now possible to construct an axiom for a Bayesian persuasion
representation by reference to the set 
$\{(a,p)\}_{a\in A, p\in \mbox{ext}(C_{A,a})}$. The idea driving the
axiom is analogous to that of \cite{afriat1967construction}, which
considers a dataset comprising choices and prices, 
$(x_i,p_i)\in \mathbb{R}^{2N}$ in $M$ menus. In the next three paragraphs,
I give an overview of how the construction of \cite{afriat1967construction}
works, and how it connects to the results here.

\cite{afriat1967construction} shows that the choice data is consistent 
with utility maximization subject to a budget constraint if and only if 
the data satisfy the \emph{generalized axiom of revealed preference} 
(GARP). GARP is an acyclicality condition that ensures that one cannot 
switch around the choices that the DM makes between menus. Thus, if we 
view these cycles as perturbing the choices observed in the dataset, GARP 
asks whether one can perturb the choices in such a way that, on average, 
keeps them the same across menus. By doing so, the DM's utility across 
menus must remain the same: after all, she is choosing the same bundles on 
average. However, if after switching the choices between menus, one of the 
choices in menu $i$, $x_i$, is now strictly inside the budget set $i+1$, 
i.e., $x_i\cdot p_{i+1}<x_{i+1}\cdot p_{i+1}$, the choice $x_i$ must be
strictly suboptimal under the hypothesis of locally nonsatiated utility. 
As a result, since this perturbation is feasible and preserves the same
payoffs, yet yields suboptimal choices, the original choices must also
be suboptimal: namely, there is some menu $i$ in which $x_i$ is not the
optimal choice. 

Conversely, suppose that no such perturbation is feasible. If one can 
define monotone preferences such that, with respect to them, no 
improvement is possible, then one has rationalized the data. By the 
appropriate use of duality techniques, \cite{afriat1967construction} shows 
that GARP is also sufficient to rationalize the dataset.

In a similar manner, I consider perturbations of the choices of the 
DM at each menus that, \emph{on average}, preserves the expected utility 
of the DM. These are precisely the ones that, while satisfying Bayes' 
rule, preserve the average state-dependent probability of each $a\in X$ 
across menus $\{A_i\}_{i=1}^M$. If the perturbed signal is suboptimal, 
then we have a contradiction to the optimality of the original signal, 
the same way as keeping the same choices across menus in 
\cite{afriat1967construction} leads to a refutation of that model. By the 
hypothesis of nontrivial Bayesian persuasion, the prior $p_0$ is a 
suboptimal posterior for menu $i$ when, in the dataset, it is not the only 
revealed posterior for that menu. So, if one preserves the expected 
utility through a perturbation, but as a result has positive weight on the 
prior in menu $i$, the original choice at menu $i$ must be suboptimal. 
Conversely, if a finite set of linear inequalities guarantees that no such 
perturbation exists, then one can use duality techniques to ensure that 
there is a preference that rationalizes the dataset. 

By the hypothesis of the model, the sender only sends information if
he benefits from persuasion. Therefore, if the alternative choices of
signals put positive weight on a posterior equal to the prior, the choice
is suboptimal. So, if the perturbation preserves the state dependent 
probabilities of each action, and hence the expected utility, the model
is falsified by the above argument.%
\footnote{The theory is immediately falsified if both $p_0$
and some other $p\neq p_0$ are revealed posteriors for menu $A$,
so I ignore this case.}
Conversely, to rationalize the model, one needs a sufficient set of 
perturbations that cover any possible alternative signal that could 
be better for the sender. As argued in Corollary \ref{outsuff}, the set 
$\mbox{out}(\mathcal{C}_A)$ is sufficient for the posteriors in 
optimal persuasion strategies. In addition, one must also consider the
posterior $p_0$ for actions that contain $p_0\in C_{A,a}$ to check for 
the possibility of preserving the state-contingent probabilities of 
each action, while placing positive weight on the prior. Hence I define 
the set 
\begin{equation}
P_{A,a}\coloneqq \begin{cases}
\mbox{ext}(C_{A,a}), & a\notin \arg\max_{\hat{a}\in A} \sum_{\omega\in \Omega} v(\hat{a},\omega)p_0(\omega)\mbox{ or } p_{A,a}=p_0\\
\mbox{ext}(C_{A,a})\cup\{p_0\}, & a\in \arg\max_{\hat{a}\in A} \sum_{\omega\in \Omega} v(\hat{a},\omega)p_0(\omega)\mbox{ and } p_{A,a}\neq p_0
\end{cases}
\end{equation}

I next define a sufficient set of signals that need be considered 
as alternatives to the revealed posteriors, preserving the state-dependent
probabilities of each action.

\begin{defn}
    For any $A$, a finite signal distribution 
    $\pi\in \Delta(A\times \Delta(\Omega))$ is a \emph{prior/outer point 
    (POP)} signal if, for all 
    $a\in A,\,p\in \Delta(\Omega)$, 
    \[
    \pi(a,p)\neq 0\implies p\in P_{A,a}
    \]
\end{defn}

In words, a POP signal has realizations $(a,p)$, which tell
the receiver both (i) the action $a$ to choose, and (ii) the 
posterior $p$ conditional on the signal. By restricting to 
$p\in P_{A,a}$, the action $a$ is indeed optimal for the receiver
conditional on $p$.

\begin{axiom}[No Balanced
Prior/Outer Point Signals (NBPS)]
\label{NBPS}
For $i\in\{1,...,n\}$, let $A_i\in\mathcal{A}$, weights 
$\beta_i\in\mathbb{R}_+$, and distributions 
$\pi_i\in \Delta(A_i\times \Delta(\Omega))$ such that 
$\mbox{supp}(\pi_i)\subset\mbox{supp}(\pi_{\sigma_{A_i}})$.
There is no sequence of POP signals $\{\tilde{\pi}_i\}_{i=1}^n$
such that 
\begin{equation}
\label{balact}
\sum_{i=1}^n \beta_i p_{A_i,a}\pi_i(a,p_{A_i,a})=\sum_{i=1}^n \beta_i\sum_{p\in P_{A_i,a}} p\tilde{\pi}_i(a,p),\,\forall a\in X
\end{equation}
\begin{equation}
\label{balmenu}
\sum_{a\in A_i}\beta_i p_{A_i,a}\pi_i(a,p_{A_i,a})= \sum_{a\in A_i}\beta_i\sum_{p \in P_{A_i,a}} p\tilde{\pi}_i(a,p),\,\forall i
\end{equation}
\begin{equation}
\label{onprior}
\sum_{i=1}^n\sum_{a\in A_i} \beta_i\tilde{\pi}_i(a,p_0)>0
\end{equation}
\end{axiom}

Axiom \ref{NBPS} can be interpreted as follows. Vector 
$\beta\equiv (\beta_1,...,\beta_n)$ describes a perturbation of the 
distribution of revealed recommendations, where the probabilities of 
recommendations $(a,p)\in \mbox{supp}(\pi_{\sigma_{A_i}})$ decrease 
proportionally to $\beta_i\pi_i(a,p)$, while the probabilities of 
recommendations $a,p$ with $p\in  P_{A_i,a}$ increase proportionally to 
$\beta_i\tilde{\pi}_i(a,p)$. NBPS states that this perturbation cannot satisfy 
the following three properties at the same time:
\begin{enumerate}
    \item The overall state-dependent probability of choosing each 
    action $a$ is preserved on average across menus (\ref{balact}).
    \item This perturbation is Bayes-plausible for each menu 
    (\ref{balmenu}).
    \item The new, perturbed distribution now places positive weight
    on the prior for at least one menu $A$ for which it received probability $0$ according to the revealed posteriors (\ref{onprior}).
\end{enumerate}

The following example illustrates a three-menu case in which NBPS
is violated, even though the support of the revealed posteriors
for all menus is contained in the set of outer points.%
\footnote{We have already seen in Example 1 a case of a violation in 
one menu without support outside the outer points.}

\noindent\textbf{Example 3:} Consider a binary state space, 
with posteriors (with some abuse of notation) given by $p\in[0,1]$, and
a grand set of actions $X=\{a,b,c\}$, where (with further abuse of 
notation) 
\[
v(x,p)=\begin{cases}
    2-3p, & x=a\\
    1, & x=b\\
    3p-1, & x=c
\end{cases}
\]
Consider the menus $A_1=\{a,b,c\}$, $A_2=\{a,b\}$, and $A_3=\{b,c\}$,
with prior $p_0=\frac{1}{2}$. For menu $A_1$, assume that only $a$ and 
$c$ are chosen with positive probability, at revealed posteriors 
$\frac{1}{3}$ and $\frac{2}{3}$, respectively. For menu $A_2$, the 
revealed posteriors for $a$ and $b$ are $0$ and $1$, respectively; 
for menu $A_3$, the revealed posteriors for $b$ and $c$ are $0$ and $1$, 
respectively. 

I now show that given $\sigma_A$, there is a balanced prior/outer-point 
signal. Let $\pi_1=\pi_{\sigma_{A_1}}$; $\pi_2$ place probability 
$\frac{3}{5}$ on $p=0$ and $\frac{2}{5}$ on $p=1$; and $\pi_3$ place 
probability $\frac{2}{5}$ on $p=0$ and $\frac{3}{5}$ on $p=1$, where all 
the same actions are taken at the respective posteriors as before. 
Consider the following alternative distribution $\tilde{\pi}_i$: in menu 
$A_1$, select signals $(a,0)$ and $(c,1)$ with equal probability; in menu 
$A_2$, select signals $(a,\frac{1}{3})$ and $(b,\frac{1}{2})$ with 
probabilities $\frac{3}{5}$ and $\frac{2}{5}$; and in menu $A_3$, select 
signals $(b,\frac{1}{2})$ and $(c,\frac{2}{3})$ with probabilities 
$\frac{2}{5}$ and $\frac{3}{5}$, respectively. Then, set $\beta_1=6$, and $\beta_2=\beta_3=5$.
It is easy to verify algebraically that $\{\tilde{\pi}_i\}_{i=1}^3$ 
constitutes a balanced prior/outer-point signal. That is, for each menu 
$A_i$, Bayes' rule is preserved; the weighted average state-dependent 
probabilities of $a$ and $c$ remain the same (the menus at which the 
respective posteriors are chosen being switched, with the same 
weights), as do the state-dependent probabilities of $b$ (the expected 
posterior given $b$ is $\frac{1}{2}$ both before and after the 
switch of distribution). However, there is now positive weight at
the prior in menus $A_2$ and $A_3$. So, the SDSC violates NBPS. 

\subsection{Main Theorem}

NBPS exploits the following property of optimal persuasion:
there exists some hyperplane such that the revealed posteriors and the
sender's payoff lie on this hyperplane, and for all other posteriors,
the sender's payoff lies below. This follows from the solution of 
persuasion via concavification \citep{Aumann1995,Kamenica2011}.%
\footnote{Relatedly, see also the Lagrangian lemma of 
\cite{caplin2022rationally} for optimal information
choices with posterior-separable information costs.}

\begin{proposition}[\cite{Kamenica2011}, Corollary 2]
\label{KG}
A distribution of posteriors $\pi_A$, with $p_0$ in their convex hull,
is optimal for menu $A$ if and only if there exists 
$\lambda_A\in\mathbb{R}^{\vert \Omega\vert}$,
distinct from the affine hull of $\Delta(\Omega)$ itself,
such that for all $a\in A$ with $a\in\mbox{supp}(\sigma_A)$,  
\begin{equation}
\label{lagrange1}
\lambda_A\cdot p_{A,a}=\sum_{\omega\in\Omega}u(a,\omega)p_{A,a}(\omega)
\end{equation}
while for all $p\in\Delta(\Omega)$,
\begin{equation}
\label{lagrange2}
\lambda_A\cdot p\geq \sum_{\omega\in\Omega}u(a^*_A(p),\omega)p(\omega)
\end{equation}
\end{proposition}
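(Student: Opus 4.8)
The plan is to recognize Proposition \ref{KG} as the standard supporting-hyperplane characterization of the concavification of the sender's indirect utility $\phi_A$, in the tradition of \cite{Aumann1995} and \cite{Kamenica2011}. First I would record that the sender's problem (\ref{obj}) is equivalent to choosing a Bayes-plausible distribution of posteriors to maximize $\mathbb{E}_\pi[\phi_A(p)]$, where $\phi_A(p)=\sum_\omega u(a^*_A(p),\omega)p(\omega)$, and that its optimal value equals the concave closure $\mbox{cav}\,\phi_A(p_0)$. Since $\Delta(\Omega)$ is compact and $\phi_A$ is bounded and upper semicontinuous (it is the sender's payoff under sender-favorable tie-breaking, hence an upper envelope over the finitely many affine maps $p\mapsto\sum_\omega u(a,\omega)p(\omega)$ on the receiver-optimal regions $C_{A,a}$), this closure is finite and attained, so an optimal $\pi_A$ exists and $\mbox{cav}\,\phi_A$ is a finite concave function on the polytope $\Delta(\Omega)$.

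For the forward direction, suppose $\pi_A$ is optimal, so $\mathbb{E}_{\pi_A}[\phi_A]=\mbox{cav}\,\phi_A(p_0)$. I would invoke a supporting affine majorant to the concave function $\mbox{cav}\,\phi_A$ at $p_0$, obtained from a (relative) supergradient, which exists at every point of the polytope precisely because $\mbox{cav}\,\phi_A$ is finite-valued. On the simplex any affine function absorbs its constant into the linear part (as $\mathbf{1}\cdot p=1$), so I can write this majorant as $p\mapsto\lambda_A\cdot p$; finiteness guarantees it is non-vertical, which is exactly the content of the qualifier that $\lambda_A$ define a hyperplane distinct from the affine hull of $\Delta(\Omega)$ itself. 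Supporting from above gives $\lambda_A\cdot p\ge\mbox{cav}\,\phi_A(p)\ge\phi_A(p)$ for all $p$, which is (\ref{lagrange2}). To obtain (\ref{lagrange1}), affineness and Bayes-plausibility give $\mathbb{E}_{\pi_A}[\lambda_A\cdot p]=\lambda_A\cdot p_0=\mbox{cav}\,\phi_A(p_0)=\mathbb{E}_{\pi_A}[\phi_A]$; combined with the pointwise inequality, the nonnegative integrand $\lambda_A\cdot p-\phi_A(p)$ has zero mean under $\pi_A$ and so vanishes on $\mbox{supp}(\pi_A)$, i.e. at each revealed posterior $p_{A,a}$. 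Finally, incentive compatibility with sender-favorable tie-breaking identifies $\phi_A(p_{A,a})=\sum_\omega u(a,\omega)p_{A,a}(\omega)$, yielding (\ref{lagrange1}).

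For the converse I would argue directly, without concavification. Given $\lambda_A$ satisfying (\ref{lagrange1})--(\ref{lagrange2}), any Bayes-plausible $\pi'$ obeys $\mathbb{E}_{\pi'}[\phi_A]\le\mathbb{E}_{\pi'}[\lambda_A\cdot p]=\lambda_A\cdot p_0$, using (\ref{lagrange2}) and linearity together with $\mathbb{E}_{\pi'}[p]=p_0$; meanwhile $\pi_A$ attains $\mathbb{E}_{\pi_A}[\phi_A]=\sum_a\pi_A(a,p_{A,a})\,\lambda_A\cdot p_{A,a}=\lambda_A\cdot p_0$ by (\ref{lagrange1}) and the Bayes-plausibility of $\pi_A$. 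Hence $\pi_A$ weakly dominates every feasible distribution and is optimal.

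The hard part is the forward direction's hyperplane step, specifically securing non-verticality so that $\lambda_A$ genuinely represents an affine function of beliefs rather than a degenerate separator that merely reproduces the constraint $\sum_\omega p(\omega)=1$. This is where the real work and the nondegeneracy qualifier reside: one must use boundedness of $\phi_A$ on the compact simplex to guarantee a finite supergradient of $\mbox{cav}\,\phi_A$, handle the case where $p_0$ lies on the relative boundary of $\Delta(\Omega)$ (where a careless separation could produce a vertical hyperplane), and confirm attainment of the closure so that the zero-mean tightness argument applies. The identification $\phi_A(p_{A,a})=\sum_\omega u(a,\omega)p_{A,a}(\omega)$ through the receiver's tie-breaking is routine but should be stated explicitly, since it is what ties the abstract concavification back to the primitives $u$ and $a^*_A$.
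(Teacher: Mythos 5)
Your proof is correct and follows essentially the same route as the paper, which likewise derives $\lambda_A$ as a supporting hyperplane of the concave closure $\mathrm{cav}(\phi_A)$ at $p_0$ (citing \cite{rockafellar1970convex}) and then uses the fact that $\pi_A$ attains the concavified value to force equality on the support. Your write-up is more explicit about the converse direction and about non-verticality of the separating hyperplane, but these are elaborations of the same argument rather than a different approach.
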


Note that the proposition merely rephrases the concavification result
of \cite{Kamenica2011}, as the concave closure of $\phi_A$,
written $cav(\phi_A)$, is a concave function. As shown in 
\cite{rockafellar1970convex}, Theorem 18.8, any concave function
can be written as an envelope of hyperplanes. Therefore, for any $p_0\in\Delta(\Omega)$, there exists $\lambda_{A,p_0}$ 
such that (\ref{lagrange1}) and (\ref{lagrange2}) hold. Setting 
$\lambda_A=\lambda_{A,p_0}$, and noting that the value of the 
concavification is achieved by $\pi_A$, completes the argument.

We are now ready to present our main theorem. In the following 
definition, the reader is reminded that we rule out the trivial 
case of Bayesian persuasion, in which the sender is indifferent
between all actions.

\begin{defn}
A SDSC dataset \emph{is consistent with nontrivial Bayesian persuasion}
if there exists some sender utility function 
$u:X\times\Omega\rightarrow \mathbb{R}$ such that $\sigma_A$ 
solves (\ref{obj}) via its revealed posteriors as defined by 
(\ref{revpos}) and (\ref{revdist}), and satisfies inequality
(\ref{nontrivial}).
\end{defn}

\begin{theorem}
\label{KGperstest}
A SDSC dataset is consistent with nontrivial Bayesian persuasion
if and only if it satisfies NIAS and NBPS. 
\end{theorem}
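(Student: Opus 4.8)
The plan is to prove both directions using the characterization of optimality in Proposition \ref{KG} together with the finite reduction afforded by Corollary \ref{outsuff}, translating the economic content of NBPS into a standard theorem-of-the-alternative argument. The key conceptual bridge is that NBPS describes exactly the feasible, payoff-preserving perturbations that would place weight on the prior, and optimal persuasion forbids precisely these.

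First I would prove necessity (\emph{consistency} $\Rightarrow$ NIAS and NBPS). NIAS is immediate: if $\sigma_A$ solves (\ref{obj}) via its revealed posteriors, then the incentive-compatibility constraints in (\ref{obj}) are exactly the NIAS inequalities at each $p_{A,a}$. For NBPS, suppose toward contradiction that a balanced POP signal $\{\tilde\pi_i\}$ exists satisfying (\ref{balact})--(\ref{onprior}). Using the supporting hyperplane $\lambda_{A_i}$ from Proposition \ref{KG} for each menu, I would dot both sides of the menu-balance equation (\ref{balmenu}) with $\lambda_{A_i}$: the left side evaluates to the sender's achieved payoff on $\mathrm{supp}(\pi_{\sigma_{A_i}})$ via (\ref{lagrange1}), while the right side is an average of $\lambda_{A_i}\cdot p$ over POP points, each of which is \emph{weakly above} the sender's payoff by (\ref{lagrange2}). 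Summing over $i$ with weights $\beta_i$ and invoking the act-balance condition (\ref{balact}) to cancel the aggregate sender payoff, I expect to obtain that every POP point used must lie \emph{on} the hyperplane — in particular $p_0$ must, forcing $\lambda_{A}\cdot p_0 = \phi_A(p_0)$. But (\ref{onprior}) places positive weight on $p_0$ in some menu where $p_0$ was not a revealed posterior, and nontrivial persuasion (\ref{nontrivial}) says $p_0$ is \emph{strictly} suboptimal there, i.e. $\lambda_A\cdot p_0 > \phi_A(p_0)$ strictly, a contradiction.

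For sufficiency (NIAS and NBPS $\Rightarrow$ consistency), the plan is the Afriat/Farkas route. NBPS asserts the infeasibility of a finite system of linear equalities and inequalities in the nonnegative perturbation variables $(\beta_i\tilde\pi_i)$; by a theorem of the alternative (Farkas/Motzkin), the negation yields existence of dual multipliers — one vector $\lambda_A$ per menu over the act-balance and menu-balance constraints, and a strictly positive multiplier on the prior-weight constraint (\ref{onprior}). I would show these $\lambda_A$ serve simultaneously as the supporting hyperplanes required by Proposition \ref{KG} (so that the revealed $\sigma_A$ is optimal) and that the strict multiplier on (\ref{onprior}) delivers the strict inequality $\lambda_A\cdot p_0 > \phi_A(p_0)$ needed for nontrivial persuasion (\ref{nontrivial}). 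The sender utility $u$ is then reconstructed from the $\lambda_A$, using NIAS to guarantee the receiver's incentive constraints hold at the revealed posteriors and Corollary \ref{outsuff} to guarantee that checking only outer points suffices to certify global optimality across all of $\Delta(\Omega)$.

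The main obstacle I anticipate is the bookkeeping in sufficiency: the dual multipliers are produced \emph{per constraint}, and one must verify they assemble into a single consistent $\lambda_A$ for each menu that simultaneously satisfies (\ref{lagrange1}) at every revealed posterior and (\ref{lagrange2}) at every point of $\Delta(\Omega)$ — not merely at the finitely many outer points where the dual was set up. Here Corollary \ref{outsuff} is doing essential work, since linearity of the sender's payoff within each cell $C_{A,a}$ lets one extend the inequality from $\mathrm{ext}(C_{A,a})$ to all of $C_{A,a}$, and hence to all of $\Delta(\Omega)=\bigcup_a C_{A,a}$. A secondary subtlety is defining a single grand utility $u$ on $X\times\Omega$ consistent with every menu's hyperplane at once; I would expect to define $u(a,\cdot)$ from the behavior of $\lambda_A$ restricted to the cell where $a$ is chosen and then check cross-menu consistency is exactly what the act-balance condition (\ref{balact}) was designed to enforce.
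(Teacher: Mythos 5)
Your proposal is correct and follows essentially the same route as the paper: necessity by pairing the supporting hyperplanes $\lambda_{A_i}$ of Proposition \ref{KG} with the balance conditions (\ref{balact})--(\ref{balmenu}) to show aggregate payoff is preserved while the strict gap $\lambda_A\cdot p_0>\phi_A(p_0)$ from nontrivial persuasion forces a contradiction, and sufficiency via Farkas' lemma on the finite system over outer points, recovering $u$ and the $\lambda_A$ as dual variables and using Corollary \ref{outsuff} plus linearity on each cell $C_{A,a}$ to extend optimality to all of $\Delta(\Omega)$. The only cosmetic difference is that the paper encodes the prior-weight condition (\ref{onprior}) in the objective vector $\mathbf{b}$ rather than as a constraint with its own multiplier, which yields the same strict inequality at $p_0$.
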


The theorem exploits Proposition \ref{KG} in both the direction of 
necessity and sufficiency. In the former direction, consistency
with nontrivial Bayesian persuasion means that for each menu,
there exists some optimal $\lambda_A$ for each menu $A$. 
A Bayes-plausible perturbation in the direction of $\tilde{\pi}-\pi$ 
that preserves the state-dependent probabilities of actions, as described
in (\ref{balact}) and (\ref{balmenu}) in NBPS, should preserve the 
weighted average expected utility across menus. However, recall that the 
hypothesis states that $p_0$ is suboptimal if it is not chosen; thus, 
for the menu $A$ where $p_0$ is now chosen, 
$\lambda_A\cdot p_0>\phi_A(p_0)$.
To compensate for this, there must be some other menu $B$ at 
which the chosen posterior $p$ yields a payoff above the hyperplane, i.e.,
$\lambda_B\cdot p<\phi_B(p)$. But this contradicts the optimality of 
$\pi_B$.

We are now in a position to understand better how the data in Example 1 
was inconsistent with Bayesian persuasion for any preference. What led to 
the existence of a balanced POP signal in Example 1
is that one could split the revealed posterior at $p=0.4$ into posteriors
such that one of them lies on the prior, although the original signal was
informative. As I formalize in the following proposition, in the case 
of a single menu, this is the \emph{only} way that a distribution can be 
inconsistent with Bayesian persuasion.

\begin{proposition}
    \label{onemenu}
    When $\mathcal{A}=\{A\}$, any $\pi_{\sigma_A}$ is inconsistent with 
    nontrivial Bayesian persuasion if and only if
    \begin{enumerate}
        \item $\pi_{\sigma_A}$ is informative;
        \item For some $a^*$ such that $a^*\in\arg\max_{a\in A} \sum v(a,\omega)p_0(\omega)$, 
    there exists $p$ such that both
        \begin{enumerate}
            \item $(a^*,p)\in\mbox{supp}(\pi_{\sigma_A})$ 
            \item $p+\epsilon(p-p_0)\in C_{A,a^*}$ for some $\epsilon>0$.
        \end{enumerate}
    \end{enumerate}
\end{proposition}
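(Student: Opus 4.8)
The plan is to derive the result from Theorem \ref{KGperstest} by specializing NIAS and NBPS to a single menu. Throughout I maintain NIAS: it is necessary for consistency, its failure is not reflected in conditions 1--2, so the equivalence is to be read among NIAS-satisfying data, and under NIAS every revealed posterior satisfies $p_{A,a}\in C_{A,a}$, which is what lets me spread any $p_{A,a}$ onto $\mathrm{ext}(C_{A,a})$. I first record two reductions. Uninformativeness ($\pi_{\sigma_A}=\delta_{p_0}$) is always consistent, as the sender simply declines to persuade, so inconsistency forces condition 1. And if $p_0$ is itself a revealed posterior while the signal is informative, the data is immediately inconsistent: by Proposition \ref{KG} the optimal value equals $\lambda_A\cdot p_0=\phi_A(p_0)$, contradicting the strict inequality (\ref{nontrivial}). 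Since NIAS makes the action recommended at $p_0$ receiver-optimal there, this is exactly condition 2 with $p=p_0$ and the trivial ray. Hence I may otherwise assume $p_0$ is not a revealed posterior, in which case Theorem \ref{KGperstest} applies without the degenerate caveat and inconsistency is equivalent to a failure of NBPS.

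For sufficiency with $p\neq p_0$, I build a balanced POP signal directly from the ray condition. Writing $q:=p+\epsilon(p-p_0)\in C_{A,a^*}$, the identity $p=\tfrac{1}{1+\epsilon}q+\tfrac{\epsilon}{1+\epsilon}p_0$ exhibits $p_{A,a^*}=p$ as a mean-preserving spread onto $q$ and $p_0$. I take $n=1$, $\beta_1=1$, $\pi_1=\pi_{\sigma_A}$, and define $\tilde\pi_1$ by: for $a\neq a^*$, spread the mass $\sigma_A(a)$ at $p_{A,a}$ onto $\mathrm{ext}(C_{A,a})$, which is possible since $p_{A,a}\in C_{A,a}$ by NIAS; for $a^*$, spread $q$ onto $\mathrm{ext}(C_{A,a^*})$ carrying total mass $\tfrac{1}{1+\epsilon}\sigma_A(a^*)$ and place the remaining $\tfrac{\epsilon}{1+\epsilon}\sigma_A(a^*)$ at $p_0$. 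Because $a^*\in\arg\max_{\hat a}\sum_\omega v(\hat a,\omega)p_0(\omega)$ and $p_{A,a^*}\neq p_0$, we have $p_0\in P_{A,a^*}$, so $\tilde\pi_1$ is a genuine POP signal. By construction each action's posterior barycenter $\sigma_A(a)p_{A,a}$ is preserved, giving (\ref{balact}) and hence (\ref{balmenu}), while $\tilde\pi_1(a^*,p_0)>0$ gives (\ref{onprior}); NBPS therefore fails and Theorem \ref{KGperstest} yields inconsistency.

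For necessity I reverse this computation. Inconsistency with NIAS and $p_0$ not revealed gives, via Theorem \ref{KGperstest}, a balanced POP signal $(\beta,\pi,\tilde\pi)$ on the single menu. Summing (\ref{balact}) over $\omega$, each $p$ and $p_{A,a}$ being a probability vector, shows the per-action mass is preserved, $\sum_p\tilde\pi_1(a,p)=\pi_1(a,p_{A,a})$; then (\ref{onprior}) selects $a^*$ with $\tilde\pi_1(a^*,p_0)>0$, and POP-admissibility of $p_0$ forces $a^*\in\arg\max_{\hat a}\sum_\omega v(\hat a,\omega)p_0(\omega)$ and $p_{A,a^*}\neq p_0$. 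Reading (\ref{balact}) for this $a^*$ as a vector identity and normalizing by the positive total mass on $a^*$ expresses $p_{A,a^*}$ as a convex combination placing weight $\mu_0>0$ on $p_0$ and weight $1-\mu_0$ on some $\bar q\in C_{A,a^*}$. Solving for $\bar q$ gives $\bar q=p_{A,a^*}+\tfrac{\mu_0}{1-\mu_0}(p_{A,a^*}-p_0)\in C_{A,a^*}$, which is exactly condition 2(b) with $\epsilon=\mu_0/(1-\mu_0)$; positivity of the mass gives $(a^*,p_{A,a^*})\in\mathrm{supp}(\pi_{\sigma_A})$, i.e.\ condition 2(a), and informativeness follows from $p_{A,a^*}\neq p_0$.

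The main obstacle is this last extraction: an NBPS violation a priori spreads the compensating mass across \emph{several} extreme points and potentially across several actions, so the work lies in showing that the single scalar identity (\ref{balact}) for the one action landing on the prior collapses to the one-dimensional ray statement of condition 2(b) rather than some higher-dimensional configuration. The accounting that forces this collapse---mass preservation from summing over states, and the barycenter identity producing a strictly positive weight $\mu_0$ on $p_0$---is the crux. A secondary point needing care is the degenerate case $p=p_0$, where NBPS cannot place admissible weight on the prior through $a^*$ since then $P_{A,a^*}=\mathrm{ext}(C_{A,a^*})$; this is why that case must be routed through the direct falsification argument via Proposition \ref{KG} rather than through NBPS.
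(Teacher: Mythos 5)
Your proposal is correct, and the forward direction (conditions 1--2 imply a balanced POP signal, hence inconsistency via Theorem \ref{KGperstest}) is essentially the paper's argument: both split $p$ along the ray through $p_0$ into $q=p+\epsilon(p-p_0)$ and $p_0$, then spread onto extreme points. Where you genuinely diverge is the converse. The paper proves it directly: assuming conditions 1--2 fail, it builds an explicit rationalizing sender utility by taking, for each $a$ in the receiver's arg max at $p_0$, the face $\mbox{ext}(p_{A,a})$ of the polytope $C_{A,a}$ spanned by the usable extreme points, noting that the failure of 2(b) keeps $p_0$ out of its convex hull, and invoking \cite{rockafellar1970convex}, Theorem 11.6, to get an affine $\upsilon_a$ maximized exactly on that face; this yields optimality of $\pi_{\sigma_A}$ and strict benefit at $p_0$ without ever touching NBPS. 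You instead prove the contrapositive, leaning on the sufficiency half of Theorem \ref{KGperstest}: inconsistency plus NIAS yields a balanced POP signal, and your mass-conservation/barycenter extraction (summing (\ref{balact}) over states to pin the per-action mass, then normalizing the vector identity for the action carrying weight at $p_0$ to produce $\mu_0\in(0,1)$ and $\bar q\in C_{A,a^*}$) collapses it to the one-dimensional ray condition. Both are valid; the paper's route is self-contained and exhibits a concrete rationalizing $u$ (which feeds the partial-identification discussion), while yours is more economical and makes the equivalence between NBPS violations and the geometric ray condition transparent. Two small points to tighten: with $\mathcal{A}=\{A\}$ the axiom still permits sequences with $n>1$ repeats of $A$, so your extraction should be stated for the aggregated weights $\sum_i\beta_i\tilde\pi_i(a,p)$ (immediate, since $P_{A_i,a}$ is identical across $i$, but worth saying); and your explicit handling of the informative-with-$p_0$-revealed case via Proposition \ref{KG}, and of the NIAS caveat, matches what the paper relegates to footnotes, so no gap there.
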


It should be noted that the restrictiveness of Proposition \ref{onemenu}
does not hold more generally: as seen in Example 3, when testing for
consistency with more than one menu, there can still be violations of 
NBPS even if all of the posteriors in each menu $A$ are contained in 
$\mbox{out}(\mathcal{C}_A)$.

In the opposite direction of Theorem \ref{KGperstest}, the standard 
tool of revealed preference theory, Farkas' lemma, yields a vector of 
constants from the alternative of NBPS. Indeed, these constants can be 
interpreted directly as the values of $u(a,\omega)$ and $\lambda_A$
needed for optimal persuasion as in Proposition \ref{KG}. Thus one can 
find the relevant payoffs and hyperplanes by construction.

\noindent\textbf{Remark 1:} It is interesting to note that, unlike other 
characterizations of Bayesian persuasion (e.g. \cite{Kamenica2011}), 
the present axiomatization does not require an explicit tiebreaking rule 
of the receiver's choice%
\footnote{Of course, I do assume in favor of 
the sender in the case that the sender is indifferent to information
revelation that would lead to multiple actions being chosen. However, as 
stated in Section 3, this is more in the spirit of standard nondegeneracy 
axioms, in order to rule out indifference between all possible actions.} 
as appears in \cite{Kamenica2011}. Indeed, such a tiebreaking rule would 
be ill-defined in the absence of a direct preference relation over 
outcomes, as the dataset only consists of state-dependent stochastic 
choices.%
\footnote{Recall that we are referring to a breaking of ties of the 
\emph{decision maker's} preferences in favor of the sender, and so the 
flexibility to break indifferences of the \emph{sender} due to incomplete data is not relevant to the tie-breaking rule in question. If the sender is indifferent, then the tie-breaking rule for the purposes of Bayesian persuasion would not matter, while conversely, the issue of tiebreaking is present in the same manner even if more complete preferences indeed determine that the sender's preferences are strict.}
Instead, the tiebreaking rule is implicit from the nonexistence of balanced POP signals. If an action is chosen
with positive probability, then the way to falsify that it is
optimally chosen at its revealed posterior is by such an alternative 
signal. On the other hand, if an action $b$ is never chosen, which could 
be the result of an adverse tiebreaking rule, it could instead be 
rationalized by an alternative sender preference in which $b$ 
simply gives a poor payoff. The revealed preference characterization
therefore renders the condition on receiver behavior of sender-optimal tiebreaking superfluous. 

\noindent\textbf{Remark 2:} An appealing feature of Theorem 
\ref{KGperstest} is that it provides a system of linear inequalities that
can be used to falsify the hypothesis of Bayesian persuasion. From
a computational perspective, this allows for the use of linear programming
algorithms to check the solution, which are well-known to have efficient, 
polynomial-time complexity. At the same time, the solution of Bayesian 
persuasion problems in general is difficult to find except when $\Omega$ 
is small \citep{dughmi2017algorithmic}. This difficulty manifests itself 
in our results as well through the need to enumerate the elements of 
$\mbox{out}(\mathcal{C}_A)$, as this is equivalent to a \emph{vertex 
enumeration} problem. It is an open question whether polynomial time 
algorithms exist in general for such problems; the potential concern is 
that the potential number of vertices can explode in the size of the 
state space. However, for a state space of fixed size, there are 
polynomial-time algorithms that can enumerate $\mbox{out}(\mathcal{C}_A)$,
for instance \cite{avis1991pivoting}.

\subsection{Partial identification}

A common question that arises in the revealed preference literature
is the extent to which the characterization result limits the set 
of testable hypotheses. That is, certain classes of preferences are
indistinguishable in a dataset, and as a result, one can assume without
loss of generality that the preferences are of a certain form.%
\footnote{Take, for instance, the characterization of 
\cite{afriat1967construction}, that it is without loss to assume that 
preferences are convex when considering demand data.}
Below, I describe to what extent the preferences of the sender are
restricted by the identification in Theorem \ref{KGperstest}. 

\begin{proposition}
\label{partialID}
    The set of $\mathcal{U}^*\coloneqq \{(u(a,\cdot))_{a\in X}\}\subset \mathbb{R}^{\vert X\vert \cdot \vert \Omega\vert}$ 
    that is consistent with $\sigma$ satisfies the following:
    \begin{enumerate}[i.]
        \item $\mathcal{U}^*$ is convex.
        \item If $\{u(a,\cdot)\}_{a\in X}\in \mathcal{U}^*$, then all 
        affine transformations of $\{u(a,\cdot)\}_{a\in X}$ are also 
        contained in $\mathcal{U}^*$:
        \[
        \{u(a,\cdot)\}_{a\in X}\in \mathcal{U}^*\implies \{\alpha u(a,\cdot)+\beta\}_{a\in X}\in \mathcal{U}^*,\forall \alpha>0,     \beta\in\mathbb{R}^{\vert\Omega\vert}
        \]
        \item If $b\notin \mbox{supp}(\sigma_A)\,\,\forall A$, and 
        $(u(a,\cdot)_{a\in X\setminus\{b\}},u(b,\cdot))\in\mathcal{U}^*$, 
        then $(u(a,\cdot)_{a\in X\setminus\{b\}},u(b,\cdot)-\kappa_b)\in\mathcal{U}^*$ as well, 
        where $\kappa_b\in \mathbb{R}^{\vert \Omega\vert}_+$.
        \item If $C_{A,b}=\emptyset,\,\forall A\in\mathcal{A}$, then
        $u(b,\cdot)$ is unrestricted.
    \end{enumerate}
\end{proposition}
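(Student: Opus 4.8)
The plan is to bypass the axioms NIAS/NBPS and instead characterize $\mathcal{U}^*$ directly through the hyperplane conditions of Proposition \ref{KG}. Writing $u(a):=(u(a,\omega))_{\omega\in\Omega}$, I will use that $u\in\mathcal{U}^*$ if and only if NIAS holds (a restriction on $v$ alone, hence either always or never satisfied for the given data) and, for every menu $A$, there exists $\lambda_A\in\mathbb{R}^{|\Omega|}$ with $\lambda_A\cdot p_{A,a}=u(a)\cdot p_{A,a}$ for each $a\in\mathrm{supp}(\sigma_A)$, with $\lambda_A\cdot p\geq u(a)\cdot p$ for every $a\in A$ and $p\in C_{A,a}$, and --- whenever $\pi_{\sigma_A}$ is informative --- with $\lambda_A\cdot p_0>\phi_A(p_0)$. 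The key reformulation is to replace the domination condition (\ref{lagrange2}), which references the sender-dependent selection $a^*_A(p)$, by the family $\lambda_A\cdot p\geq u(a)\cdot p$ indexed by $(a,p)$ with $p\in C_{A,a}$; since both sides are affine in $p$, Corollary \ref{outsuff} lets me impose this only at $p\in\mathrm{ext}(C_{A,a})$, turning the whole description of $\mathcal{U}^*$ into the projection onto the $u$-coordinates of a finite system of linear equalities, weak inequalities, and (for informative menus) strict inequalities in $(u,\{\lambda_A\})$.

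Claims (i) and (ii) then follow almost mechanically. For (i), the feasible set in $(u,\{\lambda_A\})$-space is an intersection of hyperplanes, closed half-spaces, and open half-spaces, hence convex; $\mathcal{U}^*$ is its image under the linear projection forgetting the $\lambda_A$'s, and is therefore convex. For (ii), given a witness $(u,\{\lambda_A\})$ and $\alpha>0$, $\beta\in\mathbb{R}^{|\Omega|}$, I will verify that $(\alpha u(\cdot)+\beta,\{\alpha\lambda_A+\beta\})$ is again a witness: the additive term $\beta\cdot p$ cancels in every tangency equation and in the nontriviality gap $\lambda_A\cdot p_0-\phi_A(p_0)$, while $\alpha>0$ preserves the weak domination inequalities and the sign of the strict inequality. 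The transformed $\alpha\lambda_A+\beta$ remains a genuine affine functional on $\Delta(\Omega)$, so the nondegeneracy clause of Proposition \ref{KG} is maintained.

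For (iii) and (iv) I will hold the witnessing hyperplanes $\{\lambda_A\}$ fixed and perturb only the row $u(b,\cdot)$, showing that the perturbation slackens rather than tightens the constraints. If $b\notin\mathrm{supp}(\sigma_A)$ for all $A$, then $u(b,\cdot)$ never appears in a tangency equation, and replacing it by $u(b,\cdot)-\kappa_b$ with $\kappa_b\geq 0$ only lowers $u(b)\cdot p$ (as the entries of $p$ are nonnegative), which relaxes the domination inequality on $C_{A,b}$ and can only decrease $\phi_A(p_0)$, preserving nontriviality; since $b$ carries zero weight in $\pi_{\sigma_A}$, the attained value $\lambda_A\cdot p_0$ is unchanged, giving (iii). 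For (iv), $C_{A,b}=\emptyset$ for all $A$ forces $b\notin\mathrm{supp}(\sigma_A)$ (else NIAS would place $p_{A,b}\in C_{A,b}$), so again $u(b,\cdot)$ enters no tangency equation; moreover it enters no domination inequality and does not appear in any $\phi_A(p_0)$, since those involve only actions $a$ with $p\in C_{A,a}$. Hence $u(b,\cdot)$ may be set to an arbitrary vector with the same $\{\lambda_A\}$ still witnessing consistency, establishing that $u(b,\cdot)$ is unrestricted.

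The main obstacle is conceptual rather than computational: getting the reformulation of (\ref{lagrange2}) right so that $\mathcal{U}^*$ genuinely presents as the projection of a convex set, and in particular verifying that the sender-optimal selection $a^*_A(p)$ --- which itself depends on $u$ --- can be eliminated in favor of the action-by-action domination inequalities without changing the feasible set. Once this is in place, the remaining work is bookkeeping: checking that each of the three conditions survives the relevant transformation, with the only delicate point being the preservation of the strict nontriviality inequality (through scaling by $\alpha>0$ in (ii), and through the monotonicity of $\phi_A(p_0)$ in $u(b,\cdot)$ in (iii)).
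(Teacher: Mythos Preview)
Your approach is correct and essentially identical to the paper's. The paper's proof references the Farkas inequalities (\ref{farkopt})--(\ref{farkprior}) from the sufficiency argument of Theorem~\ref{KGperstest}, which are precisely your reformulated tangency, domination, and strict-at-prior conditions in $(u,\{\lambda_A\})$; it then argues (i)--(iv) exactly as you do, transforming $\lambda_A\mapsto\alpha\lambda_A+\beta$ for (ii) and holding $\lambda_A$ fixed while slackening the $u(b,\cdot)$ row for (iii)--(iv). Your write-up is more explicit about why the sender-dependent selection $a^*_A(p)$ can be eliminated in favor of action-by-action domination on $C_{A,a}$, and about the preservation of the nontriviality gap in (iii), points the paper leaves implicit.
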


Notice that there is considerable freedom for the payoffs for actions $b$
that are never chosen from any menu. Once one ensures that 
$u(b,\cdot)\leq \lambda_A$ for all $A\ni b$, one can subtract an arbitrary
amount from $u(b,\cdot)$ and it will still be rationalized for the same
$\sigma_A$. Thus, for instance, one can fix $p$, and assume without loss 
of generality that $u(b,\cdot)=\underline{\lambda}(p)$, where 
$\underline{\lambda}(p)$ defines the hyperplane that supports the 
convexification of $\min \{\lambda_A\cdot p: \, A\in\mathcal{A}\}$ at $p$.

\section{Transparent Motives}

Theorem \ref{KGperstest} shows us that the key axiom, NBPS, captures
necessary and sufficient conditions for the SDSC to be consistent with
the general model of Bayesian persuasion of \cite{Kamenica2011}. However,
in many prominent examples (including the famous judge-prosecutor example),
the preferences of the sender do not depend on the state. In such cases, the
sender is often said to have \emph{transparent motives} (see, e.g., 
\citet{lipnowski2020cheap}). I now explore how the axiom NBPS must be 
strengthened in order to test for the stronger hypothesis of state-independent
sender preferences. 

To illustrate the basic idea of this strengthening, consider the following
example.

\noindent\textbf{Example 4:} Suppose that $A=\{a,b,c\}$, and 
$\Omega=\{\omega_1,\omega_2,\omega_3\}$, with each state equally likely ex
ante. Let the DM's preferences be given by 
\[
v(a,\omega)=0,\,\forall \omega
\]
\[
v(b,\omega)=\begin{cases}
    2, & \omega=\omega_2\\
    -3, & \omega\in\{\omega_1,\omega_3\}
\end{cases}
\]
\[
v(c,\omega)=\begin{cases}
    2, & \omega=\omega_3\\
    -3, & \omega\in\{\omega_1,\omega_2\}
\end{cases}
\]
Hence the DM finds it optimal to choose $b$ if and only if 
$p(\omega_2)\geq 0.6$; similarly, it is optimal to choose $c$ if and only if 
$p(\omega_3)\geq 0.4$.

Now suppose that, according to the SDSC data, the DM chooses deterministically
according to the state: namely, $a$ at state $\omega_1$, $b$ at $\omega_2$,
and $c$ at $\omega_3$. This is clearly consistent with Bayesian persuasion: 
namely, if the sender shared identical preferences with the receiver, this
distribution of revealed posteriors would be optimal.

However, this is \emph{inconsistent} with a sender who has transparent motives.
Notice that, as the distribution of revealed posteriors corresponds to full
information, the sender must benefit from persuasion. If they were to have
transparent motives, then, as their payoffs do not depend on the state, but 
only on the distribution of choices of the receiver, then any distribution of
posteriors that achieves this distribution must be equally good for the sender.
Consider, then, the distribution of posteriors $\hat{\pi}$ which has support
on 
\[
p\coloneqq (p(\omega_1),p(\omega_2),p(\omega_3))\in\{(1/3,2/3,0),(1/3,1/3,1/3),(1/3,0,2/3)\},\] 
where $\pi(p)=1/3,\forall p.$ Notice that, given the receiver's 
preference, this also induces each $x\in A$ to be chosen with probability 
$1/3$. At the same time, there is now a posterior equal to the prior with 
positive probability. If the sender benefits from persuasion, then it 
would be possible to improve his payoff by revealing information relative 
to this posterior. So, the distribution of revealed posteriors, which 
yields the same utility as $\hat{\pi}$ under the hypothesis of transparent 
motives, cannot be optimal. $\square$

The intuition of the previous example indicates how one must modify NBPS in
order to test for transparent motives. Instead of (\ref{balact}), which 
mandates that the \emph{state-dependent} probabilities of each action remain
the same, on average, across menus, one can relax the condition to allow
for the \emph{unconditional} probabilities to remain the same as well. This 
strengthens NBPS by demanding that there not be additional possible 
POP switches.

\begin{defn}[State-Independent No Balanced POP Signals (SI-NBPS)]
\label{sinbps}
    For $i\in\{1,...,n\}$, let $A_i\in\mathcal{A}$, $\beta_i\in\mathbb{R}_+$,
and distributions $\pi_i\in \Delta(A\times \Delta(\Omega))$ such that 
$\mbox{supp}(\pi_i)\subset\mbox{supp}(\pi_{\sigma_{A_i}})$.
There is no sequence of POP signals $\{\tilde{\pi}_i\}_{i=1}^n$
such that 
\begin{equation}
    \label{sibalact}
    \sum_{i=1}^n \beta_i \pi_i(a,p_{A_i,a})=\sum_{i=1}^n \beta_i\sum_{p\in P_{A_i,a}} \tilde{\pi}_i(a,p),\,\forall a\in X
\end{equation}
as well as (\ref{balmenu}) and (\ref{onprior}).
\end{defn}

\begin{theorem}
\label{siperstest}
    A SDSC dataset is consistent with nontrivial Bayesian persuasion by 
    a sender with transparent motives if and only if it satisfies NIAS and
    SI-NBPS.
\end{theorem}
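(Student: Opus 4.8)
The plan is to prove both directions by adapting the argument for Theorem \ref{KGperstest}, the only change being that the \emph{state-dependent} balance condition (\ref{balact}) is replaced by the \emph{unconditional} one (\ref{sibalact}). The conceptual reason this works is that under transparent motives $u(a,\omega)\equiv u(a)$ the sender's expected payoff from a recommendation $(a,p)$ is $\sum_{\omega}u(a)p(\omega)=u(a)$, independent of the posterior; hence preserving the weighted \emph{unconditional} probability of each action already preserves the sender's total payoff, whereas in the general case one had to preserve the state-dependent probabilities. Note also that summing (\ref{balact}) over states yields (\ref{sibalact}), so every balanced signal in the sense of NBPS is one in the sense of SI-NBPS; thus SI-NBPS is the strictly stronger axiom, as one expects for the stronger hypothesis.

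For necessity, NIAS follows exactly as in the general case from receiver obedience at each revealed posterior. For SI-NBPS, suppose toward a contradiction that a sequence $\{A_i,\beta_i,\pi_i,\tilde{\pi}_i\}$ satisfies (\ref{sibalact}), (\ref{balmenu}), and (\ref{onprior}). Let $\{\lambda_{A_i}\}$ be the optimal hyperplanes from Proposition \ref{KG}, so that $\lambda_{A_i}\cdot p_{A_i,a}=u(a)$ for $a\in\mbox{supp}(\sigma_{A_i})$ and $\lambda_{A_i}\cdot p\geq \phi_{A_i}(p)\geq u(a)$ for every $p\in P_{A_i,a}$ (the last step using that $a$ is receiver-optimal at each such $p$). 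Dotting (\ref{balmenu}) with $\lambda_{A_i}$ and summing over $i$, the left-hand side equals $\sum_i\beta_i\sum_a u(a)\pi_i(a,p_{A_i,a})$ while the right-hand side is at least $\sum_i\beta_i\sum_a\sum_{p\in P_{A_i,a}}u(a)\tilde{\pi}_i(a,p)$; the inequality is strict because by (\ref{onprior}) some menu $A_i$ places new weight on $p_0$ at an action $a$ with $p_{A_i,a}\neq p_0$, so that menu's signal is informative and nontrivial persuasion gives $\lambda_{A_i}\cdot p_0>\phi_{A_i}(p_0)\geq u(a)$. But multiplying (\ref{sibalact}) by $u(a)$ and summing over $a$ gives exactly the equality of these two expressions, a contradiction.

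For sufficiency, I would mirror the Farkas'-lemma construction used for Theorem \ref{KGperstest}. SI-NBPS asserts the infeasibility of a finite linear system in nonnegative weights on original and POP recommendations, constrained by the scalar balance (\ref{sibalact}), the per-menu Bayes-plausibility (\ref{balmenu}), and strictly positive prior weight (\ref{onprior}); finiteness is guaranteed because Corollary \ref{outsuff} restricts the candidate posteriors to $\mbox{out}(\mathcal{C}_A)$, augmented by $p_0$ in $P_{A,a}$. Applying a theorem of the alternative yields dual multipliers that I read off as the rationalizing data: the multiplier attached to the scalar equation (\ref{sibalact}) for action $a$ is a \emph{single real number}, which I take to be $u(a)$, while the multipliers attached to (\ref{balmenu}) for menu $A$ form the vector $\lambda_A$. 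The decisive payoff of using the scalar condition (\ref{sibalact}) rather than the vector condition (\ref{balact}) is precisely that its dual is one scalar per action rather than one vector per action, so the recovered utility is automatically state-independent, i.e.\ a transparent-motives utility $u:X\to\mathbb{R}$. One then checks that these multipliers verify (\ref{lagrange1}) and (\ref{lagrange2}) at every menu — the latter only at outer points, which by convexity of $C_{A,a}$ and linearity of $\lambda_A\cdot p$ extends to all $p\in\Delta(\Omega)$ — so each revealed distribution is sender-optimal, with NIAS supplying receiver obedience.

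The step I expect to be the main obstacle is the bookkeeping in the sufficiency direction, specifically guaranteeing (i) that a \emph{single} consistent $u(a)$ emerges across all menus simultaneously, and (ii) that the strict multiplier forced by (\ref{onprior}) translates into strict benefit from persuasion, $\lambda_A\cdot p_0>\phi_A(p_0)$, for every menu whose revealed signal is informative, as demanded by nontrivial persuasion. As in Theorem \ref{KGperstest}, this strictness is where the inclusion of $p_0$ in $P_{A,a}$ and the strict inequality (\ref{onprior}) do their work; the delicate point is that it must hold uniformly over all informative menus rather than merely for the configuration that triggers infeasibility, and establishing this uniformly is the part that requires the most care.
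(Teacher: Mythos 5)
Your proposal is correct and follows essentially the same route as the paper: necessity via the optimal hyperplanes of Proposition \ref{KG} (the paper constructs the perturbed distribution $\hat{\pi}$ explicitly and computes its expected utility, while you dot the balance conditions with $\lambda_{A_i}$ directly, which is equivalent), and sufficiency via Farkas' lemma applied to a system whose action-indexed rows are scalars rather than vectors, so that the recovered $u(a)$ is automatically state-independent. The uniform strictness you flag as the main obstacle resolves itself in the paper's construction: the Farkas vector $\mathbf{b}$ carries a $-1$ entry in the $p_0$-column of \emph{every} informative menu, so the dual inequality $u(a)\leq\lambda_A\cdot p_0-1$ is delivered componentwise for all such menus simultaneously, not only for the configuration that triggers infeasibility.
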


\section{Posterior-Mean Bayesian Persuasion}

In this section, I extend the above results about rationalizing 
the dataset via Bayesian persuasion to the case where the receiver's
and sender's preferences depend only on posterior means. This requires
modification of the definitions and conditions in the problem. Below,
before presenting the main theorem of the section, I modify these as 
needed; all remaining variables are defined analogously to their 
previous definitions.

\subsection{Modification for posterior means}

Consider the state space $[0,1]$, with prior CDF $F_0$ with finite support 
$Z\subset [0,1]$ (without loss of generality including $\{0,1\}$), and
prior mean $z_0$. The set of actions is defined as before. The payoff of 
the DM depends only on the posterior mean, and the DM is an expected 
utility maximizer: 
\[
v: X\times[0,1]\rightarrow\mathbb{R}
\]
This is equivalent to the payoff $v(a,z)$ being affine in $z$. The 
function $u$ for the sender is defined analogously.

Given prior $F_0$, the set of feasible distributions of posterior 
means are those that are \emph{mean-preserving contractions} of the 
prior \citep{rothschild1970increasing,gentzkow2016rothschild}. For 
all $z\in[0,1]$,
\begin{equation}
    \label{MPC}
    I_{F_0,F}(z)\coloneqq\int_0^z [F_0(s)-F(s)]ds\geq0
\end{equation}
The set of feasible CDFs is then 
\[
\mathcal{I}_{F_0}=\{F\in \mathcal{F}: I_{F_0,F}(z)\geq0,\forall z\in [0,1], \mbox{ and } I_{F_0,F}(1)=0\}
\]

Analogously to before, we define the \emph{revealed posterior means} as
\[
z_{A,a}\coloneqq \begin{cases}
    \frac{\sum_{z\in Z} z\sigma_{A}(z)F_0(z)}{\sum_{z\in Z}\sigma_{A}(z)F_0(z)}
\end{cases}
\]
The recommendations, comprising actions $a$ and posterior means $z$, then
are distributed according to
\[
f_{\sigma_{A}}(a,z)=\begin{cases}
    \sigma_A(a), & z=z_{A,a}\\
    0, & \mbox{otherwise}
\end{cases}
\]
The \emph{CDF of revealed recommendations}, $F_{\sigma_A}$, is then given
by
\begin{align*}
    F_{\sigma_A}:\qquad  [0,1] \hspace{5 mm} & \rightarrow \hspace{5 mm} [0,1]\\
    z \hspace{5 mm} & \rightarrow \hspace{5 mm} \sum_{a\in A} f_{\sigma_A}(a,z_{A,a})\mathbf{1}[z_{A,a}\leq z]
\end{align*}
where $\mathbf{1}[\cdot]$ is the indicator function.

\subsection{Preliminary lemmas}

In order for a distribution of posterior means to be optimal, the
following conditions must be satisfied:
\begin{proposition}[\citet{dworczak2019simple}]
\label{DM2019}
Suppose that the sender's utility depends only posterior means of the 
receiver's belief. Then a CDF $F_{\sigma_A}\in\mathcal{I}_{F_0}$ is 
optimal if and
only if there exists 
``price function" $\Lambda_A: [0,1]\rightarrow \mathbb{R}$ such that:
\begin{enumerate}
    \item $\Lambda_A \mbox{ is convex, and affine on all intervals where } I_{F_0,F_{\sigma_A}}(z)>0$
    \item $\Lambda_A(z)\geq \phi_A(z),\forall z\in [0,1]$
    \item $\mbox{supp}(F_A)\subset \{z\in[0,1]:\phi_A(z)=\Lambda_A(z)\}$
\end{enumerate}
\end{proposition}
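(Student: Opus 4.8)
The plan is to treat the sender's problem as an infinite-dimensional linear program and to recognize conditions 1--3 as the dual-feasibility and complementary-slackness conditions of that program, with $\Lambda_A$ playing the role of the multiplier (the ``price function'') attached to the majorization constraint $I_{F_0,F}(z)\ge 0$. The sender maximizes the linear functional $F\mapsto \int_0^1 \phi_A\,dF$ over the convex, weak-$*$ compact set $\mathcal{I}_{F_0}$. I would prove sufficiency by a direct integration-by-parts identity and necessity by a strong-duality/separation argument.

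For sufficiency, the key is the identity, valid for any convex $\Lambda_A$ with associated second-derivative measure $\mu\coloneqq d\Lambda_A'\ge 0$ and any feasible $F\in\mathcal{I}_{F_0}$,
\begin{equation}
\int_0^1 \Lambda_A\,dF = \int_0^1 \Lambda_A\,dF_0 - \int_0^1 I_{F_0,F}(z)\,\mu(dz),
\end{equation}
obtained by integrating by parts twice and using $I_{F_0,F}(0)=I_{F_0,F}(1)=0$, so that the boundary terms vanish (the first because $F$ and $F_0$ share endpoints, the second by mean preservation). Given such a $\Lambda_A$, for any feasible $F'$ condition 2 gives $\int \phi_A\,dF' \le \int \Lambda_A\,dF'$, and since $\mu\ge 0$ and $I_{F_0,F'}\ge 0$ the identity yields $\int \Lambda_A\,dF' \le \int \Lambda_A\,dF_0$. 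For the candidate $F_{\sigma_A}$, condition 3 turns the first inequality into an equality, while condition 1 forces $\mu$ to be supported on $\{z:\,I_{F_0,F_{\sigma_A}}(z)=0\}$, so the correction term vanishes; hence $\int \phi_A\,dF_{\sigma_A} = \int \Lambda_A\,dF_0 \ge \int \phi_A\,dF'$ for every feasible $F'$, which is optimality.

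For necessity I would start from an optimal $F_{\sigma_A}$ and construct the certificate. Because there are finitely many actions, $\phi_A$ is piecewise affine with finitely many pieces and optimal distributions can be taken to have finite support; this lets me pass to a finite-dimensional linear program over the majorization polytope and extract $\Lambda_A$ from its dual, or equivalently invoke a separating-hyperplane theorem for the compact convex feasible set. Either route delivers a convex $\Lambda_A$ (dual feasibility, giving the first half of condition 1) that is affine off the binding set $\{I_{F_0,F_{\sigma_A}}=0\}$ (complementary slackness on the majorization constraint, the second half of condition 1) and equals $\phi_A$ on $\mathrm{supp}(F_{\sigma_A})$ (complementary slackness on the choice of $F$, i.e.\ condition 3). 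The global domination $\Lambda_A\ge\phi_A$ (condition 2) is itself dual feasibility; in an explicit-construction framing it instead follows from optimality by a no-profitable-deviation argument, namely that if $\Lambda_A(z^\dagger)<\phi_A(z^\dagger)$ somewhere, one can build a feasible mean-preserving spread placing mass near $z^\dagger$ (splitting existing support mass and rebalancing to preserve both the mean and majorization) that strictly raises $\int \phi_A\,dF$.

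The main obstacle is necessity, specifically ruling out a duality gap so that a single $\Lambda_A$ exists satisfying all three conditions at once. The delicate points are (i) justifying that an optimum is attained on a finite support, so the reduction to a tractable dual is legitimate, and (ii) upgrading domination from ``on the support'' to ``everywhere on $[0,1]$,'' which is exactly where the structure of feasible mean-preserving spreads, as opposed to arbitrary perturbations, must be exploited. Sufficiency, by contrast, reduces to the two integrations by parts above together with the signs of $\mu$ and $I_{F_0,F}$.
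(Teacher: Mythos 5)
This proposition is quoted from \citet{dworczak2019simple}; the paper itself offers no proof of it, so there is nothing internal to compare your argument against --- I can only assess it on its own terms. Your sufficiency direction is complete and correct, and it is the standard verification argument for this result: the double integration by parts giving $\int \Lambda_A\,dF = \int \Lambda_A\,dF_0 - \int I_{F_0,F}\,d\mu$ (with boundary terms killed by $I_{F_0,F}(0)=I_{F_0,F}(1)=0$), combined with conditions 2 and 3 to pin $\int\phi_A\,dF_{\sigma_A}$ to $\int\Lambda_A\,dF_0$ and with $\mu\ge 0$, $I_{F_0,F'}\ge 0$ to bound every competitor from above.

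The necessity direction, however, is a plan rather than a proof, and the two difficulties you flag are real and are not discharged. First, the reduction to a finite-dimensional LP is legitimate here only because $F_0$ has finite support $Z$ and $\phi_A$ is piecewise affine, so that an optimum with finite support exists and the continuum of majorization constraints collapses to finitely many; you assert this but do not establish it, and establishing the absence of a duality gap is the substantive content of the Dworczak--Martini theorem. Second, and more concretely, your step of ``upgrading domination from the support to all of $[0,1]$'' hides a trap: a convex $\Lambda_A$ that dominates the affine piece $\phi_A(\cdot)=u(a,\cdot)$ at the two endpoints of the interval $C_{A,a}$ need \emph{not} dominate it in the interior, because a convex function lies below its chords. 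Domination on $C_{A,a}$ requires checking $\Lambda_A\ge\phi_A$ at every kink of $\Lambda_A$ inside $C_{A,a}$ as well, i.e.\ at the points of $Z\cap C_{A,a}$ where the majorization constraint can bind --- which is exactly why the present paper's finite test set is $Z_{A,a}=\mbox{ext}(C_{A,a})\cup(Z\cap C_{A,a})$ rather than the endpoints alone (see Lemma \ref{outmsuff} and the construction of $\Lambda_A$ in the proof of Theorem \ref{PMperstest}). Your no-profitable-deviation fallback for condition 2 also needs care: a mean-preserving spread placing mass near an interior $z^\dagger$ must respect $I_{F_0,F}\ge 0$, and when the majorization constraint binds on both sides of $z^\dagger$ no such feasible spread exists, so the argument must instead show that at such points $\Lambda_A$ can be taken to kink and the domination is restored through the multiplier on the binding constraint.
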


The test, then, will be to see whether one can rationalize the CDFs
of revealed posterior means via an affine utility function 
$u:X\times[0,1]\rightarrow \mathbb{R}$ and a function $\Lambda_A$ as in
\ref{DM2019}.

In this context, I aim to reduce the set of potential points in the 
support of alternative distributions that one needs to consider for the in 
order to test for optimality. Thus, similar to the results of 
\cite{lipnowski2017simplifying}, I 
construct a finite set of posterior means for each menu that are 
sufficient for an optimal persuasion strategy, regardless of what
the sender's preferences are. To this end, define the set of points
\[
Z_{A,a}\coloneqq \mbox{ext}(C_{A,a})\cup (Z\cap C_{A,a})
\]
Since the set of points for which $a$ is optimal for the DM for menu 
$A$ is simply an interval, the set $Z_{A,a}$ therefore simply consists of
the endpoints of this interval, along with any values $z\in Z$ contained
in the interval. The reason that we need to consider the latter values
as well is that we may worry about running against the information
constraint (\ref{MPC}) at certain values of $z$, meaning that it might 
not be without loss to focus on the endpoints of the interval.

I now claim that the set of such points is sufficient for an optimal 
Bayesian persuasion distribution over posterior means. To demonstrate this,
for a given $z_{A,a}\in C_{A,a}$, define the values $z_1,z_2\in Z_{A,a}$
to be \emph{consecutive} if there does not exist 
$z\in (z_1,z_2)\cup Z_{A,a}$.

\begin{lemma}
    \label{outmsuff}
    It is without loss to consider persuasion strategies with support on 
    $Z_{A,a}$. Moreover, the values of $z\in Z_{A,a}$ for each 
    $a\in A$ can be consecutive.
\end{lemma}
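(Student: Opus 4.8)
The plan is to show two things: first, that restricting to supports in $Z_{A,a}$ is without loss for optimality, and second, that one can further take the realized posterior means to be consecutive elements of $Z_{A,a}$. The key tool is Proposition \ref{DM2019}, which characterizes optimality via a convex price function $\Lambda_A$ that lies weakly above $\phi_A$ everywhere, coincides with $\phi_A$ on the support, and is affine wherever the information constraint $I_{F_0,F}(z)>0$ slacks. The guiding intuition is that $\Lambda_A$ is piecewise affine with kinks only at points where the constraint binds, and the constraint can only bind at atoms of $F_0$, i.e. at points of $Z$; between consecutive atoms, $I_{F_0,F}$ is affine in $z$ with slope $F_0-F$, so $\Lambda_A$ is affine on any interval free of such binding points.

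First I would fix an optimal distribution $F$ and its price function $\Lambda_A$ from Proposition \ref{DM2019}. For any posterior mean $z^*$ in the support of $F$ that lies strictly inside the interval $C_{A,a}$ where $a$ is chosen, and strictly between two consecutive points of $Z_{A,a}$, I want to replace the mass at $z^*$ by a mean-preserving spread onto the two neighboring points of $Z_{A,a}$. Because $\Lambda_A$ is affine on the interval between consecutive points of $Z_{A,a}$ (these are exactly the candidate kink locations — extreme points of $C_{A,a}$ and atoms of $F_0$ inside $C_{A,a}$), splitting the mass at $z^*$ onto the two endpoints of that sub-interval preserves the mean, keeps all the new points in the same region $C_{A,a}$ (so $a$ remains receiver-optimal and $\phi_A$ is unchanged in value along the affine piece), and leaves the objective $\int \phi_A\,dF = \int \Lambda_A\,dF$ unchanged since $\Lambda_A$ agrees with $\phi_A$ on the support and is affine there. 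The delicate point is feasibility: I must verify the spread does not violate $I_{F_0,F}(z)\geq 0$. This is where the atoms of $Z$ matter and why $Z_{A,a}$ includes $Z\cap C_{A,a}$: by spreading only onto consecutive points of $Z_{A,a}$, the integral $I_{F_0,F}$ is perturbed only on the open sub-interval between them, where $F_0$ has no atoms, so the perturbation of $I_{F_0,F}$ is a ``tent'' that is zero at both endpoints and, since $\Lambda_A$ is affine there (so the constraint was slack in the interior under optimality, or the endpoints carry the binding behavior), stays nonnegative. Iterating this over all support points relocates the entire distribution onto $Z_{A,a}$ while preserving optimality.

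For the consecutiveness claim, I would argue that once the support lies in $Z_{A,a}$, any mass placed on a non-consecutive pair can be reallocated. Given two support points that are not consecutive — so some $z'\in Z_{A,a}$ lies strictly between them — I would again use affineness of $\Lambda_A$ on the relevant sub-intervals to re-split the mass onto consecutive pairs without changing the mean or the value, checking feasibility by the same tent-function argument at the intermediate atom $z'$. The upshot is a distribution whose realized posterior means, conditional on each recommended action $a$, use only consecutive points of $Z_{A,a}$.

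The main obstacle I anticipate is the feasibility verification, i.e. maintaining $I_{F_0,F}(z)\geq 0$ throughout the relocation. Preserving the objective is essentially mechanical once affineness of $\Lambda_A$ on inter-atom intervals is established, but the mean-preserving spreads must be shown to respect the majorization constraint globally, not just locally. The careful step is to confirm that when $\Lambda_A$ is affine on $(z_1,z_2)$ the corresponding segment of $I_{F_0,F}$ either is already strictly positive in the interior (so small spreads remain feasible) or that binding can be pushed to the endpoints $z_1,z_2\in Z_{A,a}$; this is exactly the content guaranteed by the complementarity between condition (1) and condition (3) of Proposition \ref{DM2019}, and it is why including the atoms $Z\cap C_{A,a}$ in $Z_{A,a}$ is indispensable. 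I would handle this by treating each maximal interval on which $I_{F_0,F}>0$ separately, noting that its endpoints are binding points and hence either extreme points of some $C_{A,a}$ or atoms of $F_0$, both of which are captured in the $Z_{A,a}$ construction.
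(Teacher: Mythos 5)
Your construction is essentially the paper's: replace an off-grid support point $z^*\in(z_1,z_2)$ by a mean-preserving spread onto the consecutive points $z_1,z_2\in Z_{A,a}$, and verify feasibility by noting that $F_0$ has no atoms strictly between consecutive points of $Z_{A,a}$, so $I_{F_0,\cdot}$ cannot go negative on $(z_1,z_2)$ without contradicting its (unchanged) nonnegative value at $z_2$ --- exactly the paper's ``tent'' argument. The one substantive difference is that the paper preserves the objective directly from affineness of $u(a,\cdot)$ on $C_{A,a}$ (the recommended action is unchanged and the spread is mean-preserving), rather than via affineness of $\Lambda_A$ between consecutive points of $Z_{A,a}$ --- a property that Proposition \ref{DM2019} does not immediately deliver (it gives affineness only where $I_{F_0,F}>0$) and that by itself only bounds the new value from above; your parenthetical observation that $\phi_A$ coincides with the affine $u(a,\cdot)$ on the relevant piece is the step that actually closes the value-preservation argument, so you should lead with that rather than with the price function.
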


It will turn out that Lemma \ref{outmsuff} implies that one can find a 
function $\Lambda_A$ that is piecewise affine over a finite set of 
intervals, by restricting the optimality check of a candidate solution 
to a finite set of points. Therefore, if we find a finite set of vectors 
$\{\lambda_A^k\}$ that generate the convex, piecewise affine price 
function $\Lambda_A$, it is sufficient for it to satisfy condition (2) 
of Proposition \ref{DM2019} with respect to $z\in\bigcup_{a\in A} Z_{A,a}$
(rather than check for all $z\in[0,1]$) for it to be a price 
function that rationalizes $\sigma_A$. For such a piecewise affine
function $\Lambda_A$, one can write the affine component formed by each
$\lambda_A^k$ as $\lambda_{A,1}^k z+\lambda_{A,0}^k$. Let $K_A$ be the 
number of such $\lambda_A^k$ for a given $A$, $k(z_{A,a})$ be the value 
of $k$ for which $\lambda_A^k$ gives the price at $z_{A,a}$, and 
$\lambda_{A}^0$ be the value of $\lambda_A^k$, for $k\in \{1,...,K_A\}$,
that gives the affine component of the price at $z_0$.

\subsection{Testing for posterior-mean Bayesian persuasion}

As before, we also need to rule out the trivial case where the sender
is indifferent between all actions. I therefore continue to assume
that the sender sends an informative signal if and only if he benefits
from persuasion. However, the characterization of a sender benefiting from 
persuasion differs when considering persuasion over posterior means,
due to the optimum being characterized by a convex function rather than
a hyperplane. I present this characterization in the following proposition.

\begin{proposition}
    \label{pmben}
    When sender and receiver preferences depend only on posterior means,
    persuasion is strictly beneficial if and only if, for the optimal CDF 
    of posterior means $F_A$, either $\Lambda_A$ is not affine, or 
    $\phi_A(z_0)<\Lambda_A(z_0)$.
\end{proposition}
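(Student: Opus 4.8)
The plan is to characterize when persuasion strictly improves on the degenerate distribution $\delta_{z_0}$ by working directly with the price function $\Lambda_A$ from Proposition \ref{DM2019}. First I would establish the baseline: the sender's payoff from no persuasion is exactly $\phi_A(z_0)$, while the payoff from the optimal CDF $F_A$ equals $\int \phi_A(z)\,dF_A(z)$, which by condition (3) of Proposition \ref{DM2019} equals $\int \Lambda_A(z)\,dF_A(z)$. The key observation I would exploit is that since $\Lambda_A$ is convex (condition 1) and $F_A\in\mathcal{I}_{F_0}$ is a mean-preserving contraction of the prior, while $\delta_{z_0}$ has the same mean $z_0$, integration against the convex function $\Lambda_A$ is monotone with respect to the mean-preserving-contraction order. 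This gives the chain $\int \Lambda_A\,dF_A \geq \Lambda_A(z_0) \geq \phi_A(z_0)$, where the first inequality is Jensen/convexity applied to the contraction and the second is condition (2). The sender strictly benefits precisely when at least one of these two inequalities is strict.

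Next I would analyze each inequality separately to match the two disjuncts in the statement. The second inequality $\Lambda_A(z_0)\geq \phi_A(z_0)$ is strict exactly when $\phi_A(z_0)<\Lambda_A(z_0)$, which is the second condition verbatim. For the first inequality, I would argue that $\int \Lambda_A\,dF_A = \Lambda_A(z_0)$ holds if and only if $\Lambda_A$ is affine on the relevant region. The ``if'' direction is immediate: if $\Lambda_A$ is affine everywhere (globally), then integrating it against any distribution with mean $z_0$ returns its value at $z_0$. For the ``only if'' direction, I would use condition (1) more carefully: $\Lambda_A$ is affine on every interval where $I_{F_0,F_A}(z)>0$, and $F_A$ places mass only where $\phi_A=\Lambda_A$. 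If $\Lambda_A$ fails to be affine (i.e.\ is genuinely strictly convex somewhere along the support structure of the spread induced by $F_A$ relative to $\delta_{z_0}$), then because $F_A$ is a strictly nondegenerate contraction, the strict version of the convexity inequality $\int \Lambda_A\,dF_A > \Lambda_A(z_0)$ holds.

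The step I expect to be the main obstacle is pinning down the precise sense in which ``$\Lambda_A$ is not affine'' is equivalent to strictness of the convexity inequality, rather than merely sufficient for it. The subtlety is that $\Lambda_A$ is affine on the intervals where $I_{F_0,F_A}>0$, so the curvature of $\Lambda_A$ lives exactly on the complementary region where $I_{F_0,F_A}=0$ --- but on that region $F_A$ and $F_0$ coincide in the integrated sense, so one must verify that strict convexity there still contributes a strict gap when comparing $\int\Lambda_A\,dF_A$ against $\int\Lambda_A\,d\delta_{z_0}=\Lambda_A(z_0)$. I would handle this by invoking the integration-by-parts identity $\int \Lambda_A\,dF_A - \int\Lambda_A\,dF_0 = \int \Lambda_A''(z)\,I_{F_0,F_A}(z)\,dz$ (interpreting $\Lambda_A''$ as the measure of the convex function's curvature), so that the gap is positive iff the curvature measure of $\Lambda_A$ puts positive weight on the set $\{z: I_{F_0,F_A}(z)>0\}$. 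But $\Lambda_A$ is affine precisely there by condition (1), forcing that particular integral to vanish; thus the correct reformulation is that $\int\Lambda_A\,dF_A=\int\Lambda_A\,dF_0$ always, and the only source of strict benefit from the spread itself is global non-affineness of $\Lambda_A$ combined with $F_0\neq\delta_{z_0}$.

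Having confronted this, I would reconcile it by noting that the comparison that matters is against $\delta_{z_0}$, not against $F_0$, and that $z_0$ is the common mean. I would therefore recompute using $I_{\delta_{z_0},F_A}$ in place of $I_{F_0,F_A}$: the integration-by-parts identity $\int\Lambda_A\,dF_A-\Lambda_A(z_0)=\int\Lambda_A''(z)\,I_{\delta_{z_0},F_A}(z)\,dz$ shows the gap is strict iff the curvature of $\Lambda_A$ overlaps the support of the spread of $F_A$ past the point mass at $z_0$. When $\Lambda_A$ is globally affine this gap is zero, leaving only the possibility of $\phi_A(z_0)<\Lambda_A(z_0)$; when $\Lambda_A$ is not affine, I would show the optimal $F_A$ must genuinely spread mass across a kink or curved region of $\Lambda_A$ (otherwise $\delta_{z_0}$ would already be optimal, contradicting condition (3) together with non-affineness), yielding a strict gap. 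Assembling the two cases gives that strict benefit holds iff $\Lambda_A$ is non-affine or $\phi_A(z_0)<\Lambda_A(z_0)$, as claimed.
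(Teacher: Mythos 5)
Your skeleton is the same as the paper's: write the sender's optimal value as $\int\Lambda_A\,dF_A$ using condition (3) of Proposition \ref{DM2019}, compare it to the no-persuasion payoff $\phi_A(z_0)$ through the chain $\int\Lambda_A\,dF_A\geq\Lambda_A(z_0)\geq\phi_A(z_0)$, and note that strict benefit is equivalent to strictness of at least one link. The second link is handled identically in both arguments. You are also right that the delicate point is the first link --- the paper simply asserts $\int\Lambda_A\,dF_A>\Lambda_A(z_0)$ whenever $\Lambda_A$ is not affine --- and your instinct to attack it via the integration-by-parts representation of the Jensen gap is the right one.

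However, your closing argument for that link is circular. You justify ``the optimal $F_A$ must spread mass across a kink'' by saying that otherwise ``$\delta_{z_0}$ would already be optimal, contradicting condition (3) together with non-affineness.'' Neither condition (3) (which only constrains $\mathrm{supp}(F_A)$) nor non-affineness of $\Lambda_A$ rules out $\delta_{z_0}$ being optimal; the optimality of $\delta_{z_0}$ is essentially the negation of the claim you are proving, so its falsity cannot be assumed. The argument that actually closes the gap is: by condition (1), the curvature measure of $\Lambda_A$ is supported on $\{z\in(0,1):I_{F_0,F_A}(z)=0\}$; at any such $z$ one has $I_{F_A,\delta_{z_0}}(z)=I_{F_0,\delta_{z_0}}(z)-I_{F_0,F_A}(z)=I_{F_0,\delta_{z_0}}(z)$, and $I_{F_0,\delta_{z_0}}(z)>0$ for every $z\in(0,1)$ because $F_0$ has atoms at $0$ and $1$. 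Hence every kink of $\Lambda_A$ automatically lies where the spread of $F_A$ over $\delta_{z_0}$ is strictly positive, and your integration-by-parts identity then yields $\int\Lambda_A\,dF_A-\Lambda_A(z_0)>0$. (Also check the signs in your identities: $F_A$ is the contraction of $F_0$, so the Jensen gap $\int\Lambda_A\,dF_0-\int\Lambda_A\,dF_A$ is the one that pairs the curvature of $\Lambda_A$ with $I_{F_0,F_A}\geq 0$; this does not change your correct conclusion that this particular integral vanishes by complementary slackness.)
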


In Figure 4, I illustrate how the sender benefits from
persuasion in each of the cases of Proposition \ref{pmben}. In the former
(Figure 4(a)), as $\Lambda_A$ is not affine, the values of 
$z\in\mbox{supp}(F)$ that lie in different affine components from that of 
$z_0$ must lie above the affine component containing $\phi_A(z_0)$. In the 
latter (Figure 4(b)), since $\Lambda_A$ is affine (i.e., defines a 
hyperplane), the optimal persuasion becomes concavification as in the 
model in Section 3. Thus, the sender benefits for the same reason, namely, 
that $\phi_A(z_0)$ lies below the optimal hyperplane.

\begin{figure}[t!]
\centering
\begin{subfigure}[t]{0.4\textwidth}
\begin{center}
\resizebox{60 mm}{45 mm}{
\begin{tikzpicture}
\begin{axis}[
	axis lines=center,
  ymin=0.84,ymax=2.04,
  xmin=-0.02,xmax=1.02,
  xlabel={$z$}, ylabel={$u$},
  ymajorticks=false,
    xtick = {0.5}
]
\addplot[blue, line width = 2,samples=200][domain=0:0.3] {2-10/3*x};
\addplot[blue, line width = 2,samples=200][domain=0.3:0.5] {1+0.1*(x-0.3)};
\addplot[blue, line width = 2,samples=200][domain=0.5:0.7] {1.02-0.1*(x-0.5)};
\addplot[blue, line width = 2,samples=200][domain=0.7:1] {1+10/3*(x-0.7)};
\addplot[olive,line width = 2,dashed,samples=200][domain=0:0.4]{2-2.45*x};
\addplot[olive,line width = 2,dashed,samples=200][domain=0.4:0.6]{1.02};
\addplot[olive,line width = 2,dashed,samples=200][domain=0.6:1]{1.02+2.45*(x-0.6)};
\draw[orange, dashed] (axis cs: 0.5,0.84) -- (axis cs: 0.5,2.04);
\node[fill, olive, circle, inner sep=1.5pt] 	at (axis cs:0,2){};
\node[fill, olive, circle, inner sep=1.5pt] 	at (axis cs:1,2){};
\node[fill, olive, circle, inner sep=1.5pt] 	at (axis cs:0.5,1.02){};
\end{axis}
\end{tikzpicture}
}

\caption{Convex $\Lambda_A$}
\end{center}
\end{subfigure}
\hspace{1 mm}
\begin{subfigure}[t]{0.4\textwidth}
\begin{center}
\resizebox{60 mm}{45 mm}{
\begin{tikzpicture}
		\begin{axis}[
			samples=500,
			ytick = {0},
			yticklabels={$0$},
			xtick={0.5},
			ymajorticks=false,
            ymax=1,
            ymin=-0.1,
			xmin=-0.1,
			xmax=1.1,
			axis on top=false,
			axis x line = middle,
			axis y line = middle,
			axis line style={black},
			ylabel={$u$},
			xlabel={$z$},
		]     
        \addplot[blue,samples=200,line width = 2][domain=0:0.25] {0.25+x};
        \addplot[blue,samples=200,line width = 2][domain=0.25:0.5] {0.5-(x-0.25)};
        \addplot[blue,samples=200,line width = 2][domain=0.5:0.75] {0.25+(x-0.5)};
        \addplot[blue,samples=200,line width = 2][domain=0.75:1] {0.5-(x-0.75)};
        \draw[blue] (axis cs: 0.4, 0.3) node[left] {$\phi_A$};
        \draw[orange, dashed] (axis cs: 0.5,0) -- (axis cs: 0.5,1);
        \node[fill, red, circle, inner sep=1.5pt] 	at (axis cs:0.5,0.25){};
        \addplot[olive, dashed, samples=200, line width = 1][domain=0:1] {0.5};
        \node[fill, olive, circle, inner sep=1.5pt] 	at (axis cs:0.25,0.5){};
        \node[fill, olive, circle, inner sep=1.5pt] 	at (axis cs:0.75,0.5){};
        \draw[olive] (axis cs: 0.9, 0.5) node[above] {$\lambda_A$};
        \end{axis}
\end{tikzpicture}
}

\caption{$\phi_A(z_0)<\Lambda_A(z_0)$}
\end{center}
\end{subfigure}

\caption{Benefiting from persuasion over posterior means}
\end{figure}
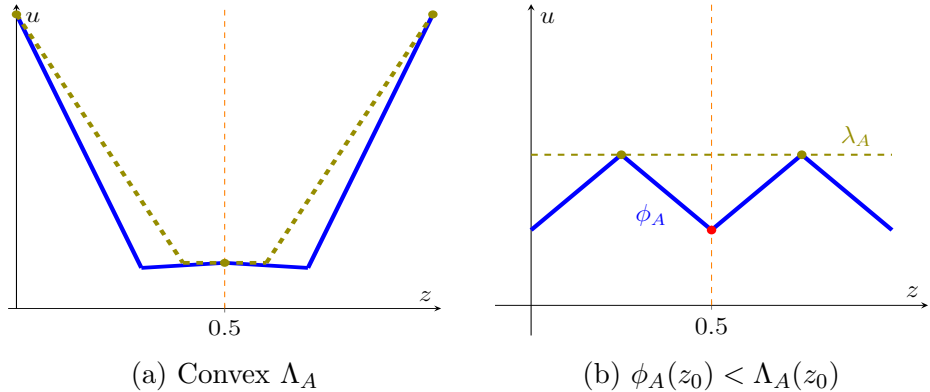

Proposition \ref{pmben} highlights what is needed for a violation of
optimality of the persuasion strategy: one needs to rule out, for a given
distribution, \emph{both} that the sender could be benefiting from a 
nonlinear price function $\Lambda_A$, and that the prior mean lies below the
price function. In addition, one needs to modify NBPS to account for
the differences in the directions one can perturb the distribution 
$F_A$: one can move away from the revealed recommendations only in 
directions that are consistent with the constraint (\ref{MPC}).%
\footnote{See also the discussion in \cite{mensch2026posterior} for 
the challenges in modifying axioms when replacing ``posteriors"
with ``posterior means."}
I encapsulate these in the analogue of NBPS for posterior means.

\begin{axiom}[No Balanced Prior/Outer-Point Signals over Means (NBPS-M)]
    \label{NBPS-M}
    For $i\in\{1,...,n\}$, consider the sequence of menus
$A_i\in\mathcal{A}$, coefficients $\beta_i\in \mathbb{R}_+$, priors 
$F_0^i\in \mathcal{F}$ such that $\mbox{supp}(F_0^i)\subset Z$, and 
distributions of revealed posterior means $F_i\in \mathcal{I}_{F_0^i}$ 
such that:
\begin{enumerate}[(i)]
    \item $\mbox{supp}(f_i)\subset \mbox{supp}(f_{\sigma_{A_i}})$, and
    \item For all $z\in [0,1]$, 
\[
I_{F_0,F_{\sigma_{A_i}}}(z)=0\implies I_{F^i_0,F_i}(z)=0
\]
\end{enumerate} 
Then there do not exist distributions of recommendations $\{g_i\}_{i=1}^n$,
with support over $\{(\bigcup_{a\in A_i}(a,Z_{A_i,a})\cup (z_0\cap C_{A_i,a}))\}_{i=1}^n$,
respectively, such that:
\begin{enumerate}
    \item $G_i\in\mathcal{I}_{F_0^i},\,\forall i$;
    \item $\sum_{i=1}^n z_{A_i,a} \beta_i f_i(a,z_{A_i,a})=\sum_{i=1}^n \sum_{z: a^*_{A_i}(z)=a} z\beta_i g_i(a,z),\,\forall a$
    \item $\sum_{i=1}^n \beta_i f_i(a,z_{A_i,a})=\sum_{i=1}^n \sum_{z: a^*_{A_i}(z)=a} \beta_i g_i(a,z),\,\forall a$
    \item For some $i$, $I_{F^i_0,G_i}(z)>0, \,\forall z\notin\{0,1\}$,
    and $\sum_{a: z_0\in Z_{A_i,a}} g_i(a,z_0)>0$;
\end{enumerate}
\end{axiom}

In words, NBPS-M states that one cannot choose alternative distributions
of posterior means $\{G_i\}$ (a) in a Bayes-plausible way that (b) 
preserves the (weighted average of the) joint distribution of actions and 
states across menus, while (c) for some menu $A_i$ where nontrivial
information was given under $F_i$, placing positive weight on the prior 
mean $z_0$ and making $I_{F_0,G_i}>0$ for all $z\in(0,1)$. Points (a) and 
(b) ensure that the alternative distributions keep the same utility as 
from $\{\sigma_{A_i}\}_{i=1}^n$, as the joint distribution of actions and 
states remains the same, and the sender's preferences are given by 
expected utility. Point (c) is then critical for ensuring against the 
violation of the hypothesis about benefiting from persuasion. $G_i$ cannot be optimal if $\Lambda_A$ is affine, since there is now weight on the prior under $G_i$. 
At the same time, $I_{F_0,G_i}>0$ ensures that $\Lambda_A$ is not affine. 
Since $\{G_i\}_{i=1}^n$ preserves the same payoffs as $\{F_i\}_{i=1}^n$,
and the former is suboptimal, the latter must be as well.

We summarize this intuition in the main theorem of the section.

\begin{theorem}
\label{PMperstest}
    A SDSC dataset is consistent with nontrivial posterior-mean
    Bayesian persuasion if and only if it satisfies NIAS and NBPS-M. 
\end{theorem}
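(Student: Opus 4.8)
The plan is to mirror the proof of Theorem \ref{KGperstest}, replacing the concavification characterization of Proposition \ref{KG} by the Dworczak--Martini price-function characterization of Proposition \ref{DM2019}, the outer-point reduction of \cite{lipnowski2017simplifying} by Lemma \ref{outmsuff}, and the benefiting-from-persuasion inequality \eqref{nontrivial} by its posterior-mean form in Proposition \ref{pmben}. Two facts are used throughout: that $u$ affine means $u(a,z)=u_0(a)+u_1(a)z$, and the twice-integrated-by-parts identity $\int \Lambda_A\,d(F_0-G)=\int_0^1 \Lambda_A''(z)\,I_{F_0,G}(z)\,dz$, valid for convex piecewise-affine $\Lambda_A$, where $\Lambda_A''\geq 0$ is a sum of nonnegative point masses at the kinks. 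Since $I_{F_0,G}\geq 0$ for any mean-preserving contraction $G$ of $F_0$, this identity gives $\int\Lambda_A\,dG\leq\int\Lambda_A\,dF_0$, with strictness driven precisely by kinks lying where $I_{F_0,G}>0$.

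For necessity, NIAS follows from incentive compatibility of the revealed recommendations. For NBPS-M, suppose it fails, fixing $A_i,\beta_i$, base distributions $F_i$ satisfying (i)--(ii), auxiliary priors $F_0^i$, and alternatives $\{G_i\}$ satisfying (1)--(4). Because $u$ is affine, conditions (2) and (3) make the $\beta$-weighted sender payoff, computed over the recommended $(a,z)$ pairs, identical under $\{F_i\}$ and $\{G_i\}$. I then compare menu by menu against the benchmark $\int\Lambda_{A_i}\,dF_0^i$: since $\mathrm{supp}(f_i)\subset\mathrm{supp}(f_{\sigma_{A_i}})$ lies on $\{\phi_{A_i}=\Lambda_{A_i}\}$ and condition (ii) forces $I_{F_0^i,F_i}$ to vanish at the kinks of $\Lambda_{A_i}$ (which, by Proposition \ref{DM2019}, occur only where $I_{F_0,F_{\sigma_{A_i}}}=0$), the identity yields that the $F_i$-payoff equals $\int\Lambda_{A_i}\,dF_0^i$, whereas each $G_i$, a mean-preserving contraction of $F_0^i$ with incentive-compatible recommendations, has payoff at most $\int\phi_{A_i}\,dG_i\leq\int\Lambda_{A_i}\,dG_i\leq\int\Lambda_{A_i}\,dF_0^i$. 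For the menu singled out in (4) this chain is \emph{strict}, via the two cases of Proposition \ref{pmben}: if $\Lambda_{A_i}$ is affine, benefiting from persuasion forces $\phi_{A_i}(z_0)<\Lambda_{A_i}(z_0)$, and the atom $G_i$ places on the prior mean $z_0$ gives $\int\phi_{A_i}\,dG_i<\int\Lambda_{A_i}\,dG_i=\int\Lambda_{A_i}\,dF_0^i$; if $\Lambda_{A_i}$ is not affine, its interior kink together with $I_{F_0^i,G_i}>0$ on $(0,1)$ makes the identity term strictly positive. Summing with weights $\beta_i$, the $\{G_i\}$ payoff falls strictly below the $\{F_i\}$ payoff, contradicting their equality.

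For sufficiency I use Lemma \ref{outmsuff} to restrict candidate supports to the finite set $\bigcup_{A,a}Z_{A,a}\cup\{z_0\}$ and, because feasibility \eqref{MPC} need only be checked at finitely many points, to recast the negation of NBPS-M as the infeasibility of a finite system of linear (in)equalities in the perturbation variables: the balancing constraints (2)--(3), the contraction inequalities $I\geq 0$, and the prior-weight requirement (4). A theorem of the alternative (Farkas' lemma) then produces dual multipliers, which I read off as the affine sender payoffs $u_0(a),u_1(a)$ (dual to (3) and (2)) and, from the multipliers on the contraction constraints, a convex piecewise-affine price function $\Lambda_A$ per menu. I then verify the three conditions of Proposition \ref{DM2019}---convexity with affineness of $\Lambda_A$ on $\{I_{F_0,F_{\sigma_A}}>0\}$, domination $\Lambda_A\geq\phi_A$ (reduced to the finite set by Lemma \ref{outmsuff}), and $\mathrm{supp}(F_{\sigma_A})\subset\{\phi_A=\Lambda_A\}$, with NIAS guaranteeing $z_{A,a}\in C_{A,a}$---and that the multiplier attached to (4) delivers the correct disjunct of Proposition \ref{pmben} whenever the revealed distribution is informative.

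The main obstacle is the sufficiency step of extracting a genuine Dworczak--Martini price function from the Farkas dual. Unlike Theorem \ref{KGperstest}, where a single hyperplane $\lambda_A$ suffices, here the multipliers on the nested family of contraction constraints \eqref{MPC} must be shown to assemble into one convex function that is affine \emph{exactly} on the intervals where the contraction slack $I_{F_0,F_{\sigma_A}}$ is positive. Establishing this affineness via complementary slackness with condition (ii), while simultaneously ensuring the beneficiary multiplier forces the right case of Proposition \ref{pmben}, and doing so having only checked the finite set $Z_{A,a}$ rather than all of $[0,1]$, is the delicate part with no counterpart in the flat-hyperplane argument.
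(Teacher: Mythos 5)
Your necessity argument is correct and in fact somewhat more direct than the paper's. Where the paper constructs an explicit perturbation $\hat{F}^\epsilon_i=F_{\sigma_{A_i}}-\epsilon\beta_i[F_i-G_i]$ and must verify its feasibility for small $\epsilon$ (invoking Lemma \ref{MM2025}), you compare the $\beta$-weighted payoffs of $\{F_i\}$ and $\{G_i\}$ directly against the benchmarks $\int\Lambda_{A_i}\,dF_0^i$, using the double-integration-by-parts identity together with the observation that the kinks of $\Lambda_{A_i}$ sit only where $I_{F_0,F_{\sigma_{A_i}}}=0$, so that condition (ii) forces $I_{F_0^i,F_i}$ to vanish there and the $F_i$-payoff hits the benchmark exactly. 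The strictness analysis for the menu singled out in condition (4) correctly tracks the two disjuncts of Proposition \ref{pmben}, and the contradiction with the payoff equality implied by conditions (2)--(3) and affineness of $u$ is the same contradiction the paper reaches. This direction stands on its own.

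The sufficiency direction, however, contains a genuine gap. You correctly identify the strategy (finite supports via Lemma \ref{outmsuff}, Farkas' lemma, reading off $u_a^0,u_a^1$ and a piecewise-affine $\Lambda_A$ from the dual), but you stop precisely where the work lies: you never show how to \emph{linearize} NBPS-M. The axiom quantifies existentially over auxiliary priors $F_0^i$ with support in $Z$, and its condition (4) is a conjunction for a single index $i$ --- $I_{F_0^i,G_i}>0$ on $(0,1)$ \emph{and} positive weight of $g_i$ at $z_0$ --- neither of which is, as stated, a linear (in)equality in the perturbation variables. The paper's Lemma \ref{pmcone} performs this elimination explicitly: it reconstructs $F_0^i$ from the $\hat{\beta}$'s via $\int_0^{z^*}F_0^i(z)\,dz=\tfrac{1}{\beta_i}\sum_{a,\,z_{A_i,a}<z^*}[z^*-z_{A_i,a}]\hat{\beta}_{i,a,z_{A_i,a}}$, and it introduces slack variables $y_{A_i}$ appearing simultaneously in the contraction inequalities (\ref{farkm2}) and the prior-weight bound (\ref{farkm6}), so that the single objective $-\sum_i y_{A_i}<0$ encodes the conjunction. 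It is exactly this device that lets the Farkas dual produce, for each informative menu, \emph{either} a strictly positive kink multiplier $\lambda_{z^*,A}$ (non-affine $\Lambda_A$) \emph{or} a strictly positive $\gamma_A^0$ (the gap $\phi_A(z_0)<\Lambda_A(z_0)$), matching the two cases of Proposition \ref{pmben}. You flag this assembly as ``the delicate part'' but do not carry it out, so as written the sufficiency half is a plan rather than a proof.
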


\section{Discussion and Extensions}

\subsection{Comparison with rational inattention}

It has been noted \citep{caplin2013behavioral,caplin2022rationally}
that the same techniques are used to solve both sender-optimal Bayesian 
persuasion and for information choices of rationally inattentive
decision makers with posterior-separable information costs. In each case, one solves for the posterior-%
dependent payoff for the respective agent, and then
 concavifies over these payoffs with respect to the posteriors
to find the optimal value (and, indirectly, the optimal posteriors).
Hence it is natural to ask how the two theories relate to
each other in terms of their testable implications.

\cite{denti2022posterior} provides necessary and sufficient conditions 
for a SDSC dataset to be consistent with posterior separable costs. The
key axiom, ``No Improving Posterior Cycles" (NIPC), states 
(informally) that there is no way to reallocate which 
posterior is chosen at each menu in a Bayes-plausible%
\footnote{\cite{denti2022posterior} phrases the axiom as counterfactually
considering alternative priors for each menu. However, the 
axiom can be rephrased to view these alternative priors merely
as directional changes of the revealed distributions that remain
Bayes-plausible. The latter approach is the one I use in the proofs.}
way while increasing expected utility from the decision on average.
He uses this axiom, via Farkas' lemma, to construct a posterior-%
separable cost of information that rationalizes the data.

By contrast, the key axiom in the present work, NBPS, does 
not make any reference to the average expected payoff from 
the choices of the DM. This is because the sender, when considering 
the choice of the DM, does not care about what payoff 
the DM receives, but only on the distribution of choices that 
she takes. As such, NBPS states that there is no Bayes-plausible
alternative way of getting the same joint distribution of choices
and states across decision problems while generating a distribution
of recommendations that is inferior under the hypothesis.

To highlight that NIAC and NBPS are two distinct axioms, 
neither of which subsumes the other, I present the following
two examples.%
\footnote{One could also ask whether the persuasion model can
be interpreted as one where the \emph{sender} faces an
implicit attention cost through the mediation of the choice
by the receiver. However, given known sender preferences,
this would immediately violate NIAS. To see this, take the 
well-known judge-prosecutor example of \cite{Kamenica2011}. 
The sender always prefers, regardless of
the posterior, that the defendant be convicted. For any possible 
attention cost, then, the defendant would always need to be 
convicted, in contrast to their result that the defendant is 
only somethimes convicted.}
In both cases, the state space is binary ($\Omega=\{\omega_1,\omega_2\}$),
and I abuse notation by writing $v(x,p)$ as in Example 1.

\noindent\textbf{Example 1, Revisited:} We already saw that the
SDSC function in Example 1 violated NBPS. However, this can
easily be justified by a measure of uncertainty 
$H: [0,1]\rightarrow \mathbb{R}$ where
\[
H(p)=\begin{cases}
-2+5p, & p\in[0,0.4]\\
0, & p\in(0.4,0.8]\\
4-5p, & p\in(0.8,1]
\end{cases}
\]
The net utility of the DM's choice minus costs of information is
given in Figure 5.

\begin{figure}[h]
\centering
\resizebox{70mm}{60mm}{
\begin{tikzpicture}
\begin{axis}[
	axis lines=center,
  ymin=-0.02,ymax=1.02,
  xmin=-0.02,xmax=1.02,
  xlabel={$p$}, ylabel={$u$},
  ymajorticks=false,
  xtick = {0.4, 0.5, 0.8}
]
\addplot[blue, line width = 2,samples=200][domain=0:0.2] {0.3-x-2+5*x};
\addplot[blue, line width = 2,samples=200][domain=0.2:0.4] {0.8-2*(x-0.2)-2+5*x};
\addplot[blue, line width = 2,samples=200][domain=0.4:0.6] {0.8-2*(x-0.2)};
\addplot[blue, line width = 2,samples=200][domain=0.6:0.8] {0.4+2*(x-0.6)};
\addplot[blue, line width = 2,samples=200][domain=0.8:1] {0.1+3*(x-0.8)+4-5*x};
\draw[orange, dashed] (axis cs: 0.5,0) -- (axis cs: 0.5,1);
\addplot[olive,line width = 2,dashed,samples=200]{x};
\draw[olive] (axis cs: 0.6, 0.6) node[above] {$\lambda_A$};
\end{axis}
\end{tikzpicture}
}

\caption{Rationalizing $\sigma_A$ via a posterior-separable cost of attention}
\end{figure}
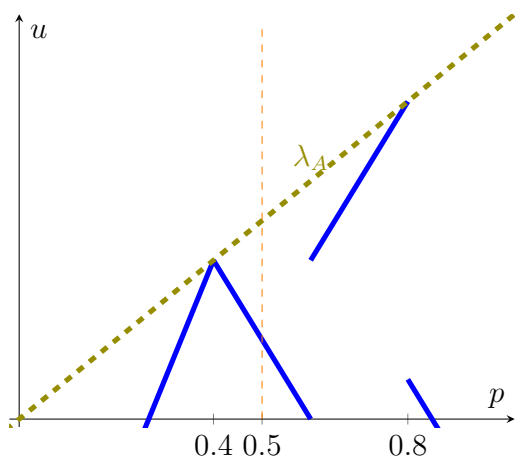

It is easily verified, as the hyperplane indicated by $\lambda_A$ illustrates, 
that $\sigma_A$ is optimal for the posterior-separable cost of 
information defined by $H$, evaluating its expected difference between
prior $p_0=0.5$ and posterior $p$. Thus $(\sigma_A,a)$ satisfies
NIPC. 

I now present an example where the dataset can be rationalized via
Bayesian persuasion by a sender, but not by rational inattention. 

\noindent\textbf{Example 5:} Consider $X={a_1,...,a_5}$, where
\[
v(a_i,p)=\begin{cases}
0.1i-\frac{ip}{3}, & i\in\{1,3\}\\
-0.6+p, & i\in\{2,4\}\\
0, & i=5
\end{cases}
\]
and prior $p_0=0.3$. Consider the menus (Figures 6(a) and 6(b), respectively)
\[
A=\{a_1,a_2,a_5\};\,B=\{a_3,a_4,a_5\}
\]
and suppose that the support for the revealed posteriors are
\[
\mbox{supp}(f_{\sigma_A})=\{(a_1,0);(a_2,1)\};\,\mbox{supp}(f_{\sigma_B})=\{(a_3,0.3)\}
\]

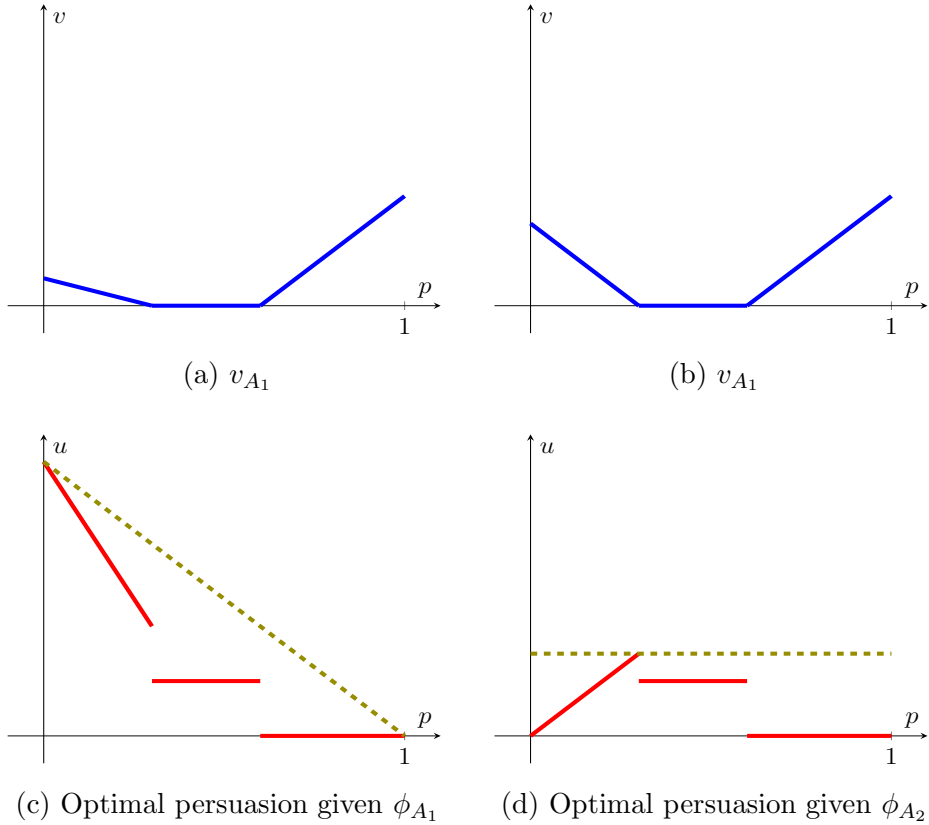
\begin{figure}[t!]
\centering

\begin{subfigure}[t]{0.4\textwidth}
\resizebox{60 mm}{45 mm}{
\begin{tikzpicture}
    \begin{axis}[
        samples=500,
        ytick = {0},
        yticklabels={$0$},
        xtick={0,1},
        ymajorticks=false,
        ymax=1.1,
        ymin=-0.1,
        xmin=-0.1,
        xmax=1.1,
        axis on top=false,
        axis x line = middle,
        axis y line = middle,
        axis line style={black},
        ylabel={$v$},
        xlabel={$p$},
    ]     
    \addplot[blue,samples=200,line width = 2][domain=0:0.3] {0.1-x/3};
    \addplot[blue,samples=200,line width = 2][domain=0.3:0.6] {0};
    \addplot[blue,samples=200,line width = 2][domain=0.6:1] {x-0.6};
    \end{axis}
\end{tikzpicture}
}

\caption{$v_{A_1}$}
\end{subfigure}
\hspace{1 mm}
\begin{subfigure}[t]{0.4\textwidth}
\resizebox{60 mm}{45 mm}{
\begin{tikzpicture}
    \begin{axis}[
        samples=500,
        ytick = {0},
        yticklabels={$0$},
        xtick={0,1},
        ymajorticks=false,
        ymax=1.1,
        ymin=-0.1,
        xmin=-0.1,
        xmax=1.1,
        axis on top=false,
        axis x line = middle,
        axis y line = middle,
        axis line style={black},
        ylabel={$v$},
        xlabel={$p$},
    ]     
    \addplot[blue,samples=200,line width = 2][domain=0:0.3] {0.3-x};
    \addplot[blue,samples=200,line width = 2][domain=0.3:0.6] {0};
    \addplot[blue,samples=200,line width = 2][domain=0.6:1] {x-0.6};
    \end{axis}
\end{tikzpicture}
}

\caption{$v_{A_2}$}
\end{subfigure}

\vspace{5 mm} 

\begin{subfigure}[t]{0.4\textwidth}
\resizebox{60 mm}{45 mm}{
\begin{tikzpicture}
    \begin{axis}[
        samples=500,
        ytick = {0},
        yticklabels={$0$},
        xtick={0,1},
        ymajorticks=false,
        ymax=1.1,
        ymin=-0.1,
        xmin=-0.1,
        xmax=1.1,
        axis on top=false,
        axis x line = middle,
        axis y line = middle,
        axis line style={black},
        ylabel={$u$},
        xlabel={$p$},
    ]     
    \addplot[red,samples=200,line width = 2][domain=0:0.3] {1-2*x};
    \addplot[red,samples=200,line width = 2][domain=0.3:0.6] {0.2};
    \addplot[red,samples=200,line width = 2][domain=0.6:1] {0};
    \addplot[olive,samples=200,dashed,line width = 2][domain=0:1] {1-x};
    \end{axis}
\end{tikzpicture}
}

\caption{Optimal persuasion given $\phi_{A_1}$}
\end{subfigure}
\hspace{1 mm}
\begin{subfigure}[t]{0.4\textwidth}
\resizebox{60 mm}{45 mm}{
\begin{tikzpicture}
\begin{axis}[
        samples=500,
        ytick = {0},
        yticklabels={$0$},
        xtick={0,1},
        ymajorticks=false,
        ymax=1.1,
        ymin=-0.1,
        xmin=-0.1,
        xmax=1.1,
        axis on top=false,
        axis x line = middle,
        axis y line = middle,
        axis line style={black},
        ylabel={$u$},
        xlabel={$p$},
    ]     
    \addplot[red,samples=200,line width = 2][domain=0:0.3] {x};
    \addplot[red,samples=200,line width = 2][domain=0.3:0.6] {0.2};
    \addplot[red,samples=200,line width = 2][domain=0.6:1] {0};
    \addplot[olive,samples=200,dashed,line width = 2][domain=0:1] {0.3};
    \end{axis}
\end{tikzpicture}
}

\caption{Optimal persuasion given $\phi_{A_2}$}
\end{subfigure}

\caption{Rationalizing $\sigma_{A_1},\sigma_{A_2}$ via Bayesian persuasion}
\end{figure}

This is inconsistent with NIAC \citep{caplin2015revealed}, and certainly 
with NIPC \citep{denti2022posterior}, as switching 
$0.3\delta_{1}+0.7\delta_{0}$ with $\delta_{0.3}$ across the two menus, 
where $\delta_p$ is the Dirac measure on posterior $p$ (i.e., switching the 
information choices for the two menus) increases the expected payoff of 
the DM, regardless of what the information cost is. If one is willing to 
acquire a certain amount of information in one menu when the stakes are low, 
then one would not be willing to acquire less information when the stakes 
are higher.

However, this information choice is easily justified by Bayesian 
persuasion. For instance, under the payoffs
\[
u(a_i,p)=\begin{cases}
1-2p, & i=1\\
p, & i=3\\
0, & i\in\{2,4\}\\
0.2, & i=5
\end{cases}
\]
the stated distributions of revealed posteriors are optimal (Figures 6(c)
and 6(d)): since the sender prefers the DM to have information in menu 
$A_1$, but not in $A_2$, one can explain the differences in stochastic 
choices between the two menus, despite the increase in benefit to the DM 
in $A_2$. Thus, this SDSC would satisfy NBPS.

\subsection{Varying the priors}

Changes in the menu that the DM faces may, naturally, also come along
with changes in the prior when she face the menu. One may therefore
wonder how to extend the main result to datasets where the prior
depends on the menu. Indeed, previous work on revealed preference in the 
context of rational inattention \citep{caplin2022rationally, 
denti2022posterior, mensch2025posterior} discuss the properties that
distinguish such theories from those that are restricted to a single
prior. Below, I discuss the implications for both how optimal persuasion 
must look, and how to modify NIAS and NBPS to adapt to variations in prior.

Similarly to \cite{denti2022posterior} and \cite{mensch2026posterior},
the adaptation to variable priors involves a straightforward 
strengthening of NBPS, replacing the fixed prior of Sections 2
and 3 with a variable prior. I index the pairs of menus and priors 
by $i$, allowing for repeats of menus with potentially different 
priors. Thus, for each menu $A_i$, let the respective prior be 
$p_0^i$, which has full support over $\Omega$; the respective
variables are modified analogously as needed to include
dependence on $p_0^i$ in their subscript.

I now present the modifications of the two main axioms.

\begin{axiom}[Uniform No Improving Action Switches (UNIAS)]
\label{UNIAS}
For all $(A_i,p_0^i)$ and $a\in\mbox{supp}(\sigma_{A_i,p_0^i})$,
$b\in A_i$,
\begin{equation}
\sum_{\omega\in\Omega} v(a,\omega)p_{A_i,p_0^i, a}(\omega)\geq \sum_{\omega\in\Omega} v(b,\omega)p_{A_i,p_0^i, a}(\omega)
\end{equation}
\end{axiom}

\begin{axiom}[Uniform No Balanced Prior/Outer-Point Signals (UNBPS)]
\label{UNBPS}
For $i\in\{1,...,n\}$, let $A_i\in\mathcal{A}$, $\beta_i\in\mathbb{R}_+$,
and distributions $\pi_i\in \Delta(A\times \Delta(\Omega))$ such that 
$\mbox{supp}(\pi_i)\subset\mbox{supp}(\pi_{\sigma_{A_i,p_0^i}})$.
There is no sequence of POP signals $\{\tilde{\pi}_i\}_{i=1}^n$
such that%
\footnote{The potential support of $\tilde{\pi}_i$ depends only on 
$p_0^i$ through Bayes plausibility and the changing value 
of $p_0^i\in P_{A_i,p_0^i,a}$, since the posterior cover depends only
on $A_i$. Thus, if $A_i$=$A_j$ but $p_0^i\neq p_0^j$, then
\[
(P_{A_i,p_0^i,a}\cup P_{A_j,p_0^j,a})\setminus (P_{A_i,p_0^i,a}\cap P_{A_j,p_0^j,a})=\{p_0^i,p_0^j\}
\]}
\begin{equation}
\label{ubalact}
\sum_{i=1}^n \beta_i p_{A_i,p_0^i,a}\pi_i(a,p_{A_i,p_0^i,a})=\sum_{i=1}^n \beta_i\sum_{p\in P_{A_i,p_0^i,a}} p\tilde{\pi}_i(a,p),\,\forall a\in X
\end{equation}
\begin{equation}
\label{ubalmenu}
\sum_{a\in A_i}\beta_ip_{A_i,p_0^i,a}\pi_i(a,p_{A_i,p_0^i,a})= \sum_{a\in A_i}\beta_i\sum_{p \in P_{A_i,p_0^i,a}} p\tilde{\pi}_i(a,p),\,\forall i
\end{equation}
\begin{equation}
\label{uonprior}
\sum_{i=1}^n\sum_{a\in A_i} \tilde{\pi}_i(a,p_0^i)>0
\end{equation}
\end{axiom}

\begin{theorem}
\label{UKGperstest}
A SDSC dataset with varying priors is consistent with nontrivial 
Bayesian persuasion if and only if it satisfies UNIAS and UNBPS. 
\end{theorem}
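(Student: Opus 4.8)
The plan is to follow the template of Theorem \ref{KGperstest} essentially \emph{verbatim}, since the only structural change is that each observation now carries its own prior $p_0^i$, and the posterior cover $\mathcal{C}_{A_i}$ --- hence the outer points and the sufficiency of Corollary \ref{outsuff} --- depends on the receiver's preferences alone and so is unaffected by the prior. The sole place the prior enters the combinatorics is through the augmented set $P_{A_i,p_0^i,a}$, which appends $p_0^i$ precisely when $a$ is receiver-optimal at $p_0^i$ and the revealed posterior differs from $p_0^i$; as the footnote to UNBPS notes, two copies of the same menu under distinct priors differ only in which prior is adjoined.

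For necessity, I would first establish UNIAS directly from obedience: for any action $a$ chosen with positive probability at prior $p_0^i$, the incentive constraint in (\ref{obj}) evaluated at the revealed posterior is exactly the UNIAS inequality. For UNBPS I would argue by contradiction. Suppose the data are rationalized by some sender utility $u$ and that a balanced POP sequence $\{\tilde{\pi}_i\}$ exists. Apply Proposition \ref{KG} to each informative menu to obtain a hyperplane $\lambda_{A_i}$ satisfying (\ref{lagrange1}) on the support and $\lambda_{A_i}\cdot p\geq \phi_{A_i}(p)$ elsewhere. Because (\ref{ubalact}) and (\ref{ubalmenu}) preserve the $\beta$-weighted joint distribution of actions and states, pairing each $\lambda_{A_i}$ against both sides shows the perturbation preserves the total weighted sender payoff. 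But nontriviality forces $\lambda_{A_i}\cdot p_0^i>\phi_{A_i}(p_0^i)$ wherever persuasion is informative, so the positive weight placed on $p_0^i$ in (\ref{uonprior}) strictly lowers the payoff at that menu relative to the hyperplane; to keep the total fixed, some other menu $A_j$ must have a recommendation $p$ with $\lambda_{A_j}\cdot p<\phi_{A_j}(p)$, contradicting (\ref{lagrange2}).

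For sufficiency I would invoke a theorem of the alternative. The negation of UNBPS is the solvability, in nonnegative $\beta_i$ and POP distributions $\tilde{\pi}_i$, of the linear system (\ref{ubalact})--(\ref{uonprior}); by Corollary \ref{outsuff} it suffices to range $\tilde{\pi}_i$ over the finite supports $P_{A_i,p_0^i,a}$, so this is a finite-dimensional feasibility problem. Applying Farkas' lemma to its infeasibility yields dual vectors which I would read off as a single state-action payoff $u(a,\omega)$ (from the per-action, state-dependent balance (\ref{ubalact}), which is what ties a given $a$ together across all menus into one utility vector) together with per-menu multipliers $\lambda_{A_i}$ (from the Bayes-plausibility balance (\ref{ubalmenu})). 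I would then verify that these satisfy (\ref{lagrange1})--(\ref{lagrange2}) at each menu, so the revealed posteriors are optimal; UNIAS guarantees receiver obedience, and the multiplier associated with the strict inequality (\ref{uonprior}) delivers the strict gap $\lambda_{A_i}\cdot p_0^i>\phi_{A_i}(p_0^i)$, i.e.\ nontrivial benefit from persuasion.

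The main obstacle will be the bookkeeping in the sufficiency direction: ensuring the dual multipliers assemble into \emph{one} sender utility $u(a,\cdot)$ valid simultaneously across every menu in which $a$ appears, even though each observation carries its own prior and its own hyperplane $\lambda_{A_i}$. The state-dependent form of (\ref{ubalact}), summed over all $i$ for each fixed $a\in X$, is exactly the constraint that forces the dual variable attached to $a$ to be prior-independent, which is what permits a coherent $u$; by contrast, the per-menu constraint (\ref{ubalmenu}) legitimately allows the $\lambda_{A_i}$ to vary. Confirming that the prior-augmentation in $P_{A_i,p_0^i,a}$ is the \emph{only} modification needed --- so that the finite-support reduction and the Farkas construction carry over unchanged from the fixed-prior proof --- is the crux that makes the extension genuinely a strengthening rather than merely an analogy.
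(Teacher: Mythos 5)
Your proposal is correct and matches the paper's approach: the paper explicitly omits this proof as ``substantially identical'' to that of Theorem \ref{KGperstest}, and your argument is exactly that adaptation --- necessity via the per-observation hyperplanes $\lambda_{A_i}$ from Proposition \ref{KG} together with the strict gap at each $p_0^i$, and sufficiency via Farkas' lemma with the per-action constraints (\ref{ubalact}) tying each $u(a,\cdot)$ together across observations while (\ref{ubalmenu}) lets $\lambda_{A_i}$ vary with the prior. Your closing observation that the only substantive change is the prior-dependence of $P_{A_i,p_0^i,a}$ is precisely why the paper treats the extension as routine.
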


I omit the proof of Theorem \ref{UKGperstest} since, as in the 
related literature on such extensions, it is substantially
identical to that of Theorem \ref{KGperstest}.

As a final remark, I discuss whether the dataset will satisfy an
analogue of ``Locally Invariant Posteriors" (LIP) of 
\cite{caplin2022rationally} as one varies the prior. LIP states that, 
if one changes the prior at menu $A$ from $p_0$ to $\hat{p}_0$, while 
$\hat{p}_0\in \mbox{co}(\bigcup_{a\in \mbox{supp}(\sigma_A)}P_{A,a})$, 
then there exists an optimal signal $\pi_{A,\hat{p}_0}$ such that 
$\mbox{supp}(\pi_{A,\hat{p}_0})\subset \mbox{supp}(\pi_{A,p_0})$.
The answer is partially in the affirmative. Notice that the proof 
of the proposition involves constructing payoffs for each action, 
and an optimal hyperplane for each menu, such that $\phi_A(p_0)$ 
lies strictly below this hyperplane. For $\hat{p}_0$, the same 
hyperplane will remain optimal, and so $\pi_{A,\hat{p}_0}$ (whose 
support lies within that of $\pi_{A,p_0}$) will be optimal. Thus,
the constructed utility functions from datasets that satisfy
UNIAS and UNBPS satisfy LIP. 

However, it may not be the case that the sender continues to 
\emph{strictly} benefit from persuasion if the prior changes, 
as seen in the following example.

\noindent\textbf{Example 6:} Suppose that there are two states 
$\Omega=\{\omega_1,\omega_2\}$, and we indicate the probability 
of $\omega_2$ by $p$ as in previous examples. Let $A=\{a,b,c\}$
be optimal for the DM in the regions 
$\{[0,\frac{1}{3}],[\frac{1}{3},\frac{2}{3}],[\frac{2}{3},1]\}$,
respectively. The payoffs for the sender for each action are $0$
for $b$, and $1$ for $a$ and $c$. 

Starting with $p_0=0.5$, suppose that $\mbox{supp}(\pi_{A,0.5})=\{(a,0);(c,1)\}$. 
Then $\sigma_{A,0.5}$ is optimal, and the sender strictly benefits
from persuasion. This is no longer the case, though, when 
$\hat{p_0}=0.2$: the support remains optimal, but the sender no
longer strictly benefits; thus LIP is not sufficient for the
sender to strictly benefit.

\subsection{Exclusion of other posteriors}

The hypothesis tested in Theorem 1 was that the prior is 
strictly suboptimal if the sender provides nontrivial information.
However, one might be interested in testing the hypothesis
that other posteriors $p\neq p_0$ are also strictly suboptimal 
if not chosen. One can easily accommodate such hypotheses 
by modifying (12) in NBPS to also hold for these other 
posteriors $\{p\}$ as well. Thus, there would be no way to 
keep the same joint distribution of actions and states across
menus while placing positive weight at any such $p$ counterfactually.
Formally, this would alter $P_{A,a}$ to include some additional
points beyond $\mbox{ext}(C_{A,a})$ (and $p_0$, if $p_0\in C_{A,a}$). 
Let us label the set of such points, for a given $(A,a)$, as 
$\bar{P}_{A.a}$. One would then alter inequality (\ref{onprior}) 
to state that the sum of the weights on such points must 
be strictly positive:
\[
\sum_{i=1}^n\sum_{a\in A_i} \sum_{p\in \bar{P}_{A_i,a}} \tilde{\pi}_i(a,p)>0
\]
The exercise for doing this is analogous to that
of testing with respect to different priors in the previous
subsection (as the different priors themselves serve as different
posteriors for which to be tested), and so the proof is omitted
for the same reason.

\subsection{Testing for receiver preferences}

Lastly, the entire analysis in this paper has focused on the case 
where the receiver's preferences are known, but the sender's 
are not. What can be said when this is reversed: the sender's
preferences are known, but the receiver's are not?

Part of what makes the former case easier to solve is that
the optimality of $\{\pi_A\}_{A\in\mathcal{A}}$ can be 
represented as a set of linear inequalities in the (unknown)
sender payoffs $u$ and hyperplanes given by $\lambda_A$. This
observation lends itself to the use of Farkas' lemma to find
the relevant condition. Indeed, such a condition, NBPS, 
provides a rather intuitive rationale as to why it is the 
correct one to guarantee a Bayesian persuasion representation.

When the receiver's preferences are unknown, though, one
must characterize the set of preferences for which the distribution
of posteriors $\pi_A$ remains optimal. Even restricting to the case
where the posteriors $p_{A,a}\in\mbox{supp}(\pi_A)$ satisfy 
$\mbox{ext}(p_{A,a})=\{p_{A,a}\}$ (i.e., the revealed posterior
is at an extreme point of $C_{A,a}$), there are many possible sets 
$C_{A,a}$ with which this observation could be consistent. Indeed, 
there could be different preferences $v,\hat{v}$ for which the 
possible sets $C_{A,a}\cap \hat{C}_{A,a}$ (respectively)
are not identical; in the worst case, they might just intersect 
at $p_{A,a}$. Finding an axiomatization for receiver preferences
thus remains an interesting and challenging open question.

\section{Conclusion}

The results in this paper show how, with a sufficiently detailed 
dataset, one can test the hypothesis that a decision maker is being
persuaded by a (possibly unobserved) sender. The key insight is that,
if one can choose an alternative distribution of posteriors that, on
average preserves the joint distribution of actions and states, while 
allowing the sender to do so without persuasion, then the hypothesis
is falsified. This is because the expected utility for such an 
alternative is preserved, regardless of what the sender's preferences
would be; if he does not need to persuade the receiver to achieve this
payoff, then this cannot be the reason for the distribution. 

The use of SDSC data for testing hypotheses of stochastic choice given
information has been well-established in the laboratory. There remain, 
however, several challenges for applying these tests for data outside 
of the laboratory; as a result, such applications, to date, are extremely 
limited. First, it is very difficult to get the level of detailed data 
that is needed: one must observe (i) the true realized state of the world; 
(ii) a sufficient number of decisions from the same menu from decision 
makers with the same preferences; (iii) the preferences of the decision 
makers; and (iv) the priors. Second, decisions may be noisy, for all sorts 
of reasons, and as a result one must account for this when testing the
theory by allowing for some degree of error. 

One promising direction toward constructing a SDSC dataset outside
of the laboratory would be in the context of media engagement on news
websites. Suppose that a media company has a political slant, and so
while they do not falsify the news, they attempt to frame it in a 
particular way to achieve their agenda. Recent work by, for instance, 
\cite{levy2021social} exploits detailed data on social media engagement 
by individual users. Hypothetically, one could similarly construct a 
dataset for a news source's website, where, for a particular news cycle, 
the company decides which stories to cover, where to place them on the 
page, and how to phrase the titles of the stories. One could then track 
which users click on which links. With some assumptions about the priors 
and/or the similarity of preferences of users, or with many independent 
observations of the same user over time, one can potentially get detailed 
enough data for the testing of the hypothesis of Bayesian persuasion
by the media companies.

An alternative direction to try to overcome these limitations would
be to relax the requirements on the datasets. There has been some
work to consider datasets where, instead of seeing the joint 
distributions of actions and states, the analyst sees only the marginal
distributions of each \citep{rehbeck2023revealed, doval2024revealed}. 
However, these works only provide an equivalent of the NIAS axiom of 
\cite{caplin2015testable}, and do not show how to relax those axioms that 
appear in the literature alongside NIAS. In other words, it provides
a way to test whether the choices are consistent with the theory that
the DM is choosing \emph{given} the information at hand, but does not
provide a way of testing \emph{how} that information is generated. 
Developing a testable implication of this weaker hypothesis is an
interesting open question for further work.

In any case, there will need to be some allowance for adaptation of 
the theory to the particular dataset being tested. For instance, 
as mentioned above, one needs to allow for some noise in the process of 
decision making. Laboratory experiments that use SDSC datasets 
\citep{dean2023experimental, denti2022posterior} have attempted to 
deal with this issue by introducing logit error terms in the 
probabilities of choices. Different adaptations may be needed for 
the particular dataset in question, either due to the nature of the
data, or to test the hypothesis of a particular form of sender 
preferences. The present work provides a foundation on
which to construct some adaptations: namely, by checking whether there
are alternative arrangements of posteriors that preserve the joint 
distribution, while placing individual posteriors in a way that violates
the hypothesis, one can test whether the dataset is consistent with 
Bayesian persuasion.

\bibliographystyle{jpe.bst}
\bibliography{IBP}

\pagebreak{}

\appendix

\begin{center}
\section*{Appendix: Proofs}
\end{center}

\section{Proofs from Section 5}

\subsection{Proof of Theorem \ref{KGperstest}: Necessity}

Suppose that $\{\pi_A\}$ is optimal for 
the sender. That NIAS is satisfied is trivial by the incentive
compatibility for the DM of the recommendations of the 
persuasion strategy $\pi_A$ for each menu, namely that $p_{A,a}\in C_{A,a}$. 

It remains to show that NBPS is satisfied. To do this, 
I show that if NBPS were violated, there would exist some
alternative set of distributions $\{\hat{\pi}_A\}_{A\in\mathcal{A}}$
that is feasible and increases the sender's payoff in at least
one menu $A^*$. 

Suppose that $\pi_{\sigma_A}$ is optimal for each $A$. Let 
$\{\beta_i,\pi_i,\tilde{\pi}_i\}_{i=1}^n$ satisfy (\ref{balact})-%
(\ref{onprior}) for the sequence of menus $\{A_i\}_{i=1}^n$. 
Now define $\hat{\pi}_{A_i}$ as follows. Let 
\[
\delta\coloneqq\max_i \beta_i
\]
Let 
\begin{equation}
\label{hatpi}
\hat{\pi}_{A_i}(a,p)=\pi_{\sigma_{A_i}}(a,p)-\frac{\beta_i\pi_i(a,p)}{\delta}+\frac{\beta_i\tilde{\pi}_i(a,p)}{\delta},\,\forall p
\end{equation}
By construction, $\hat{\pi}_{A_i}(a,p)\geq 0$, 
$\forall p\in P_{A_i,a}\cup\{p_{A_i,a}\}$, and $0$ elsewhere.
Moreover, by equation (\ref{balmenu}), simple algebra yields
\[
\sum_{a\in A_i}\sum_{p\in P_{A_i,a}\cup \{p_{A_i,a}\}}p\hat{\pi}_{A_i}(a,p)=(1-\frac{\beta_i}{\delta})\sum_{a\in A_i}p_{A_i,a}\pi_{\sigma_{A_i}}(a,p_{A_i,a})+ \frac{\beta_i}{\delta}\sum_{a\in A_i}\sum_{p\in P_{A_i,a}}p\tilde{\pi}_{A_i}(a,p)=p_0
\]
and
\[
\sum_{a\in A_i}\sum_{p\in P_{A_i,a}\cup \{p_{A_i,a}\}}\hat{\pi}_{A_i}(a,p)=(1-\frac{\beta_i}{\delta})\sum_{a\in A_i}\pi_{\sigma_{A_i}}(a,p_{A_i,a}) +\frac{\beta_i}{\delta}\sum_{a\in A_i}\sum_{p\in P_{A_i,a}}\tilde{\pi}_{A_i}(a,p)=1
\]
Thus, $\hat{\pi}_{A_i}$ yields a Bayes-plausible distribution of posteriors
given prior $p_0$. The sum of expected utilities across menus
is given by
\[
\sum_{a\in X}\sum_{i=1}^n \sum_{p}\sum_{\omega\in\Omega}u(a,\omega)p(\omega)\hat{\pi}_{A_i}(a,p)=\sum_{a\in X}\sum_{i=1}^n \sum_{p}\sum_{\omega\in\Omega}u(a,\omega)p(\omega)[\pi_{\sigma_{A_i}}-\frac{\beta_i\pi_i}{\delta}+\frac{\beta_i\tilde{\pi}_i}{\delta}](a,p)
\]
\[
=\sum_{a\in X}\sum_{i=1}^n\sum_{p}\sum_{\omega\in\Omega} u(a,\omega)p(\omega)\pi_{\sigma_{A_i}}(a,p)+\sum_{a\in X}\sum_{\omega\in\Omega}u(a,\omega)[-\sum_{i=1}^n\sum_p p\frac{\beta_i\pi_i(a,p)}{\delta}+\sum_{i=1}^n\sum_p p\frac{\beta_i\tilde{\pi}_i(a,p)}{\delta}]
\]
\begin{equation}
\label{equtil}
    =\sum_{a\in X}\sum_{i=1}^n\sum_{p}\sum_{\omega\in\Omega} u(a,\omega)p(\omega)\pi_{\sigma_{A_i}}(a,p)
\end{equation}
However, recall that by the optimality of $\pi_{\sigma_{A_i}}$ for 
each $A_i$, for all $(a,p)\in\mbox{supp}(\pi_{\sigma_{A_i}})$,
\[
\sum_{\omega\in\Omega}u(a,\omega)p(\omega)=\lambda_{A_i}\cdot p,\,\forall a\in\mbox{supp}(\sigma_{A_i}),i\in\{1,...,n\}
\]
and so 
\[
\sum_{a\in X}\sum_{i=1}^n\sum_{p}\sum_{\omega\in\Omega}u(a,\omega)p(\omega)\pi_{\sigma_{A_i}}(a,p)=\sum_{a\in X}\sum_{i=1}^n\sum_{p}(\lambda_{A_i}\cdot p)\pi_{\sigma_{A_i}}(a,p)
\]
By (\ref{onprior}), there exists some $A_i,\,a\in\ A_i$ such that 
$\hat{\pi}_{A_i}(a,p_0)>0$ and for which 
$\lambda_{A_i}\cdot p_0>\phi_{A_i}(p_0)$. 
Let $a$ be an action chosen with positive probability at menu $A$ 
at posterior $p_0$. By equation (\ref{equtil}), there then exists 
$j\in\{1,...,n\}$ and $p\in P_{A_j,a}$ such that 
$\sum_{\omega\in\Omega}u(a,\omega)p(\omega)>\lambda_{A_j}\cdot p$.
But this contradicts the optimality of $\pi_{\sigma_{A_j}}$. $\square$

\subsection{Proof of Theorem \ref{KGperstest}: Sufficiency}

Inequalities (\ref{balact})-(\ref{onprior})
are equivalent to there not existing nonnegative scalars 
\begin{equation}
\label{hatbeta}
    \hat{\beta}_{i,a,p}\coloneqq \beta_i \pi_i(a,p)
\end{equation}
and
\begin{equation}
\label{tildebeta}
    \tilde{\beta}_{i,a,p}\coloneqq \beta_i\tilde{\pi}_i(a,p)
\end{equation}
satisfying these same inequalities with the right-hand sides
of (\ref{hatbeta}) and (\ref{tildebeta}) replaced with the 
left-hand sides. Define the matrix $\mathbf{A}$
with $\vert\mathcal{A}\vert\times\vert \Omega\vert+\vert X\vert \times\vert\Omega\vert$ 
rows (indexed by $i$, corresponding to pairs of actions or menus 
with states) and $\sum_{A\in\mathcal{A}}\sum_{a\in A} [1+\vert P_{A,a}\vert]$
columns (indexed by $j$, corresponding to posteriors $p=p_{A,a}$ 
or $p\in P_{A,a}$). The entries of $\mathbf{A}$ are given by
\[
\mathbf{A}_{i,j}=\begin{cases}
p(\omega), & i=(A,\omega),\,j=p_{A,a}\\
-p(\omega), & i=(A,\omega),\,j=p\in P_{A,a}\\
-p(\omega), & i=(a,\omega),\,j=p_{A,a}\\
p(\omega), & i=(a,\omega),\,j=p\in P_{A,a}\\
0, & \mbox{otherwise}
\end{cases}
\]
Simultaneously, define the vector $\mathbf{b}$ of length 
$\sum_{A\in\mathcal{A}}\sum_{a\in A} [1+\vert P_{A,a}\vert]$ by 
\begin{equation}
\label{farkas_b}
    \mathbf{b}_i=\begin{cases}
    -1, & p=p_0,\,p\notin\mbox{supp}(\pi_A)\\
    0, & \mbox{otherwise}
\end{cases}
\end{equation}
One can express the failure of conditions (\ref{balact})-(\ref{onprior})
by the nonexistence of vector 
$\begin{pmatrix} \hat{\beta} \\ \tilde{\beta}\end{pmatrix}\geq 0$ that 
satisfies
\[
\mathbf{A}\begin{pmatrix} \hat{\beta} \\ \tilde{\beta} \end{pmatrix}= 0
\]
while
\[
\begin{pmatrix} \hat{\beta} \\ \tilde{\beta}\end{pmatrix}\cdot \mathbf{b}<0
\]
By Farkas' lemma (\cite{aliprantis2006infinite}, Corollary 5.85), there
exists real vector $\begin{pmatrix} \lambda \\ u\end{pmatrix}$ of 
length $\vert\mathcal{A}\vert\times\vert \Omega\vert+\vert X\vert \times\vert\Omega\vert$
such that
\[
\mathbf{A}^T \begin{pmatrix} \lambda \\ u\end{pmatrix}\leq \mathbf{b}
\]
Thus, for each $p\in \mbox{supp}(\pi_{\sigma_{A}})\cup P_{A,a}$, one has the inequalities
\begin{equation}
\label{farkopt}
    \sum_{\omega\in\Omega} u(a,\omega)p_{A,a}(\omega)\geq \lambda_{A}\cdot p_{A,a},\,\forall A,a\in \mbox{supp}(\sigma_A)
\end{equation}
\begin{equation}
\label{farksubopt}
    \sum_{\omega\in\Omega} u(a,\omega)p(\omega)\leq \lambda_{A}\cdot p,\,\forall p\in P_{A,a},\forall A,a
\end{equation}
\[
    \sum_{\omega\in\Omega} u(a,\omega)p(\omega)\leq \lambda_{A}\cdot p-1,\,p=p_0\in P_{A,a},\forall A,a
\]

As a result, for all $p\in\mbox{supp}(\pi_{\sigma_A})$, 
$\sum_{\omega\in\Omega}u(a,\omega)p(\omega)=\lambda_A\cdot p$,
i.e. the posteriors $p$ lie on the optimal hyperplane for menu $A$. 
Moreover, the payoffs at all other posteriors 
$p\in \mbox{out}(\mathcal{C}_A)$ lie weakly below the hyperplane, with 
this being strict at $p_0$: 
\begin{equation}
    \label{farkprior}
    \sum_{\omega\in\Omega} u(a,\omega)p(\omega)< \lambda_{A}\cdot p,\,p=p_0\in P_{A,a},\forall A,a
\end{equation}
Lastly, by Corollary \ref{outsuff}, one can always find an optimal 
persuasion strategy with support on $\{(a,p): a\in A, p\in\mbox{ext}
(C_{A,a})\}$; therefore, as $\pi_{\sigma_A}$ is weakly preferred to all 
signals with this support, it is optimal. $\square$

\subsection{Proof of Proposition \ref{onemenu}}

    Suppose that $\pi_{\sigma_A}$ satisfies conditions 1 and 2 of the 
    proposition. Then for some $(a^*,p_{A,a^*})\in\mbox{supp}(\pi_{\sigma_A})$
    such that $p_0\in C_{A,a^*}$, there exists vector 
    $\beta\in \Delta(P_{A,a^*})$ such that 
    $\sum_{p\in P_{A,a^*}}p\beta(p)=p_{A,a^*}$ and $\beta(p_0)>0$. 
    But then there is a balanced POP signal that replaces signal 
    $(a^*,p)$ with the signals $\{a^*,p\}_{p\in P_{A,a^*}}$ with 
    respective probabilities $\beta\pi_{\sigma_A}(a^*,p_{A,a^*})$. 
    By Theorem \ref{KGperstest}, $\pi_{\sigma_A}$ is not 
    rationalizable by nontrivial Bayesian persuasion. 

    Conversely, suppose that $\pi_{\sigma_A}$ does not satisfy 
    conditions 1 and 2 of the proposition. If $\pi_{\sigma_A}$ is 
    uninformative, then the signal is rationalized by setting 
    \[
    u(a,\omega)=\begin{cases}
        1, & (a,p_0)\in\mbox{supp}(\pi_{\sigma_A}),\,\forall \omega\\
        0, & \mbox{otherwise}
    \end{cases}
    \] 
    Otherwise, by hypothesis, $\pi_{\sigma_A}$ is informative, and
    $p_{A,a}\neq p_0,\,\forall a\in A$. For all 
    $a\notin\arg\max \sum v(a,\omega)p_0(\omega)$, set $u(a,\omega)=0$. 
    For all $a\in \arg\max \sum v(a,\omega)p_0(\omega)$, let
    \begin{multline*}
        \mbox{ext}(p_{A,a})\coloneqq \{p\in \mbox{ext}(C_{A,a}:\,\exists \beta\in\Delta(\mbox{ext}(C_{A,a}))\mbox{ s.t. } \beta(p)>0\\
        \mbox{and } \sum_{p\in \mbox{ext}(C_{A,a})}p\beta(p)=p_{A,a}\}
    \end{multline*}
    That is, $\mbox{ext}(p_{A,a})$ is the set of extreme points of 
    $C_{A,a}$ that can be used, with positive probability, to make 
    convex combinations equal to $p_{A,a}$. By condition 2(b), 
    $p_0\notin co(\mbox{ext}(p_{A,a}))$.%
    \footnote{Recall that any distribution $\pi_{\sigma_A}$ that is
    both informative and has support on the prior always falsifies the
    hypothesis, and therefore this case is omitted.}
    Therefore, $co(\mbox{ext}(p_{A,a}))\cap ri(C_{A,a})=\emptyset$. 
    Since $C_{A,a}$ is a convex polytope, by \cite{rockafellar1970convex}, 
    Theorem 11.6, there exists some affine function 
    $\upsilon_a:\Delta(\Omega)\rightarrow\mathbb{R}$ that achieves its 
    maximum in $C_{A,a}$ at exactly the points $p\in \mbox{ext}(p_{A,a})$, 
    which we can without loss set equal to $0$ at $p_{A,a}$. Setting 
    $\sum u(a,\omega)p(\omega)=\upsilon_a(p)$, we have  
    $\upsilon_a(p)< 0$, for all $p\notin\mbox{ext}(p_{A,a})$.

    By construction, then, for all $p\in \Delta(\Omega)$, 
    $\sum u(a^*_A(p),\omega)p(\omega)\leq 0$, while for 
    $(a,p)\in\mbox{supp}(\pi_{\sigma_A})$, $\sum u(a,\omega)p(\omega)=0$.
    Therefore, $\pi_{\sigma_A}$ is an optimal signal for the sender.

\subsection{Proof of Proposition \ref{partialID}}

\begin{enumerate}[i.]
    \item Immediate from Farkas' lemma defining a set of linear 
    inequalities (\ref{farkopt})-(\ref{farkprior}), and 
    \cite{rockafellar1970convex}, Corollary 2.1.1.
    \item As all inequalities (\ref{farkopt})-(\ref{farkprior})  are 
    affine in $(u,\lambda)$, they are preserved by affine 
    transformation. Therefore, if one applies affine transformation
    $\tilde{u}=\alpha u+\beta$, then applying the transformation 
    $\tilde{\lambda}_A=\alpha\lambda_A+\beta$ also rationalizes the 
    dataset.
    \item For all $A\in\mathcal{A}$ such that $b\in A$, $b$ is suboptimal 
    as a recommendation for the receiver if and only if 
    \[\sum_{\omega} u(b,\omega)p(\omega)\leq \lambda_A\cdot p,\forall p\in \]
    These inequalities are preserved by subtracting $\kappa_b$ from the 
    left-hand side.
    \item Immediate since $u(b,\cdot)$ does not appear anywhere in (\ref{farkopt})-(\ref{farkprior}). 
\end{enumerate}

\section{Proofs from Section 6}

The proof closely follows that of Theorem \ref{KGperstest}. In the 
direction of necessity, NIAS is satisfied as before by $\pi_{\sigma_A}$.
At the same time, if SINBPS were not to hold, let 
$\{\beta_i,\pi_i,\tilde{\pi}_i\}_{i=1}^n$ satisfy (\ref{balmenu}), 
(\ref{onprior}), and construct $\hat{\pi}_i$ as before in equation 
(\ref{hatpi}). The sum of expected utilities across menus under 
$\hat{\pi}$ is 
\[
\sum_{a\in X}\sum_{i=1}^n \sum_{p}u(a)\hat{\pi}_{A_i}(a,p)=\sum_{a\in X}\sum_{i=1}^n \sum_{p}u(a)[\pi_{\sigma_{A_i}}-\frac{\beta_i\pi_i}{\delta}+\frac{\beta_i\tilde{\pi}_i}{\delta}](a,p)
\]
\[
\sum_{a\in X}\sum_{i=1}^n\sum_{p} u(a)\pi_{\sigma_{A_i}}(a,p)+\sum_{a\in X}u(a)[-\sum_{i=1}^n\sum_p \frac{\beta_i\pi_i(a,p)}{\delta}+\sum_{i=1}^n\sum_p \frac{\beta_i\tilde{\pi}_i(a,p)}{\delta}]
\]
\[
=\sum_{a\in X}\sum_{i=1}^n\sum_{p}u(a)\pi_{\sigma_{A_i}}(a,p)
\]
However, recall that by the optimality of $\pi_{\sigma_{A_i}}$ for 
each $A_i$, for all $(a,p)\in\mbox{supp}(\pi_{\sigma_{A_i}})$,
\[
u(a)=\lambda_{A_i}\cdot p,\,\forall a\in\mbox{supp}(\sigma_{A_i}),i\in\{1,...,n\}
\]
and so 
\[
\sum_{a\in X}\sum_{i=1}^n\sum_{p}u(a)\pi_{\sigma_{A_i}}(a,p)=\sum_{a\in X}\sum_{i=1}^n\sum_{p}(\lambda_{A_i}\cdot p)\pi_{\sigma_{A_i}}(a,p)
\]
Thus as in the proof of Theorem \ref{KGperstest}, there exists
$j\in\{1,...,n\}$ and $p\in P_{A_j,a}$ such that 
$\sum_{\omega\in\Omega}u(a,\omega)p(\omega)>\lambda_{A_j}\cdot p$,
contradicting the optimality of $\pi_{\sigma_{A_j}}$.

For sufficiency, define the matrix $\mathbf{A}$ with 
$\vert\mathcal{A}\vert\times\vert \Omega\vert+\vert X\vert $ 
rows (indexed by $i$, corresponding to actions, or pairs of menus 
with states) $\sum_{A\in\mathcal{A}}\sum_{a\in A} [1+\vert P_{A,a}\vert]$
columns (indexed by $j$, corresponding to posteriors $p=p_{A,a}$ 
or $p\in P_{A,a}$). The entries of $\mathbf{A}$ are given by
\[
\mathbf{A}_{i,j}=\begin{cases}
p(\omega), & i=(A,\omega),\,j=p_{A,a}\\
-p(\omega), & i=(A,\omega),\,j=p\in P_{A,a}\\
-1, & i=a,\,j=p_{A,a}\\
1, & i=a,\,j=p\in P_{A,a}\\
0, & \mbox{otherwise}
\end{cases}
\]
Simultaneously, define the vector $\mathbf{b}$ as in (\ref{farkas_b}).
By Farkas' lemma (\cite{aliprantis2006infinite}, Corollary 5.85), there
exists real vector $\begin{pmatrix} \lambda \\ u\end{pmatrix}$ of 
length $\vert\mathcal{A}\vert\times\vert \Omega\vert+\vert X\vert$
such that
\[
\mathbf{A}^T \begin{pmatrix} \lambda \\ u\end{pmatrix}\leq \mathbf{b}
\]
Thus, for each $p\in \mbox{supp}(\pi_{\sigma_{A}})\cup P_{A,a}$, one has 
the inequalities
\[
u(a)\geq \lambda_{A}\cdot p_{A,a},\,\forall A,a\in \mbox{supp}(\sigma_A)
\]
\[
u(a)\leq \lambda_{A}\cdot p,\,\forall p\in P_{A,a},\forall A,a
\]
\[
u(a)\leq \lambda_{A}\cdot p-1,\,p=p_0\in P_{A,a},\forall A,a
\]
As in Theorem \ref{KGperstest}, this implies that $\pi_{\sigma_A}$ is 
optimal for the sender with payoff $u$, and such that $p_0$ is strictly 
suboptimal whenever $\pi_{\sigma_A}$ reveals information to the DM.

\section{Proofs from Section 7}

\subsection{Proof of Lemma \ref{outmsuff}}

Let $F_{A}^*$ be an optimal CDF of posteriors with the fewest 
actions $a$ that have recommendations $(a,z)$ with $z\notin Z_{A,a}$.
If $z_{A,a}\in Z_{A,a}$ for all $a$, then we are done. Otherwise,
suppose that $z_{A,a}\notin Z_{A,a}$. As shown in Section 3, it
is without loss to assume that there is a single such signal realization
$(a,z_{A,a})$ in the support of $f_A^*$. Since $Z_{A,a}$ is finite, 
there exist consecutive $z_1,z_2\in Z_{A,a}$ such that 
$z\in(z_1,z_2)$; let $\alpha=\frac{z-z_1}{z_2-z_1}$. Consider the 
distribution $\hat{F}_A$ defined as follows. For each $a\in A$,
define $\hat{f}_A(a,\cdot)$ by
\[
\hat{f}_A(a,z)=\begin{cases}
    \alpha f_{\sigma_A}(a,z_{A,a}), & z=z_2\\
    (1-\alpha) f_{\sigma_A}(a,z_{A,a}), & z=z_1\\
    0, & \mbox{otherwise}
\end{cases}
\]
Meanwhile, for $b\neq a$, set $\hat{f}_A(b,z)=f_A^*(a,z)$. Thus
$\hat{f}_A$ is a valid probability over $A\times \Delta(\Omega)$.
We now check that $\hat{F}_A$ is a valid CDF given $F_0$.
Clearly, $I_{F_0.\hat{F}_A}(z)=I_{F_0,F_A}(z)$ for all $z\notin (z_1,z_2)$. 
Suppose now that for some $z\in (z_1,z_2),$ $I_{F_0.\hat{F}_A}(z)<0$.
By construction, $F_0(\hat{z})=F_0(z_1),\,\forall \hat{z}\in [z_1,z_2)$.
Therefore, there exists $z^*\in[z_1,z_2)$ such that 
$\hat{F}_A(z^*)>F_0(z^*)$. It follows that 
$I_{F_0,\hat{F}_A}(\hat{z})<0,\,\forall \hat{z}\in(z^*,z_2]$. But this 
contradicts the fact that $I_{F_0,\hat{F}_A}(z_2)=I_{F_0,F_A}(z_2)$. 
Thus $\hat{F}_A$ is a feasible mean-preserving contraction of $F_0$
such that $E_F[u(a,z)]=E_{\hat{F}}[u(a,z)]$. 

\subsection{Proof of Proposition \ref{pmben}}

    Suppose that $\Lambda_A$ is affine and $\phi_A(z_0)=\Lambda_A(z_0)$.
    Then there is a mean-preserving contraction that replaces $F_A$ with 
    $\delta_{z_0}$, for which
    \[
    \sum_{z\in\mbox{supp}(F_A)} \phi_A(z)f_A(a^*(z),z)=\sum_{z\in\mbox{supp}(F_A)} \Lambda_A(z)f_A(a^*(z),z)=\Lambda_A(z_0)
    \]
    Hence, if $\phi_A(z_0)=\Lambda_A(z_0)$, one achieves the same utility 
    from no information, and so persuasion is not strictly beneficial.

    Conversely, suppose that $\Lambda_A$ is not affine. Then
    \begin{align*}
        \sum_{z\in\mbox{supp}(F_A)} \phi_A(z)f_A(a^*(z),z) &=\sum_{z\in\mbox{supp}(F_A)} \Lambda_A(z)f_A(a^*(z),z)\\
        &>\Lambda_A(z_0)\\
        &\geq \phi_A(z_0)
    \end{align*}
    and so the sender strictly benefits from persuasion. Alternatively, if 
    $\phi_A(z_0)<\Lambda_A(z_0)$, then 
    \begin{align*}
        \sum_{z\in\mbox{supp}(F_A)} \phi_A(z)f_A(a^*(z),z) &=\sum_{z\in\mbox{supp}(F_A)} \Lambda_A(z)f_A(a^*(z),z)\\
        &\geq \Lambda_A(z_0)\\
        >\phi_A(z_0)
    \end{align*}
    In either case, $\delta_{z_0}$ is strictly worse for the sender than 
    $F_A$.

\subsection{Proof of Theorem \ref{PMperstest}}

    Suppose that the dataset is the solution to some Bayesian persuasion
    problem when the sender's payoffs are given by $u$. NIAS is trivially
    satisfied, and so it remains to check NBPS-M. Suppose that it is not 
    satisfied for some $\{F_i\},\{G_i\}$. Fix $\epsilon>0$, and for 
    each $i$, define the functions
    \[
    \hat{F}^\epsilon_i=F_{\sigma_{A_i}}-\epsilon\beta_i[F_i-G_i]
    \]
    
    We must check that $\hat{F}_i$ is a feasible distribution
    for sufficiently small $\epsilon$. The following lemma will 
    be useful.

    \begin{lemma}[\cite{mensch2026posterior}, Lemma 5]
        \label{MM2025}
        Suppose that for priors $F_0^i\in \mathcal{F}$ and distributions
        $F_i\in \mathcal{I}_{F_0^i}$, 
        \begin{enumerate}[(i)]
            \item $\mbox{supp}(F_i)\subset \mbox{supp}(F_{\sigma_{A_i}})$,
            and
            \item for all $z\in Z$,
            \begin{equation}
                \label{finsuff}
                I_{F_0,F_{\sigma_{A_i}}}(z)=0\implies I_{F_0^i,F_i}(z)=0
            \end{equation}
            Then (\ref{finsuff}) also holds for all $z\in [0,1]$.
        \end{enumerate}
    \end{lemma}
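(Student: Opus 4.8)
The plan is to exploit the piecewise-linear structure of the map $z\mapsto I_{F_0,F}(z)$ (defined in (\ref{MPC})) whenever $F$ is a finitely supported mean-preserving contraction of a prior whose support is contained in $Z$. The first step I would carry out is a sublemma — essentially the content of the commented argument preceding the statement — that $I_{F_0,F}$ can switch between the value $0$ and strictly positive values only at points of $Z$. To see this, fix $z^*\notin Z$; since $\mbox{supp}(F_0)\subset Z$, the prior CDF $F_0$ is constant, say equal to $c$, on a neighborhood of $z^*$. If $I_{F_0,F}(z^*)=0$ while $I_{F_0,F}(z)>0$ for $z\nearrow z^*$, then monotonicity of $F$ forces $F(z^{*-})>c$, hence $F(z^*)\geq F(z^{*-})>c=F_0(z^*)$, so the integrand $F_0-F$ is strictly negative just to the right of $z^*$ and $I_{F_0,F}$ would dip below $0$ there — a contradiction. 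The reverse switch (from $0$ on the left to positive on the right at an off-grid point) is ruled out because it would require $F(z^*)<c=F(z^{*-})$, contradicting monotonicity of $F$. Consequently the zero set of $I_{F_0,F}$ is a finite union of closed intervals all of whose endpoints (and any isolated zeros) lie in $Z$.

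The second step applies this to $J\coloneqq I_{F_0,F_{\sigma_{A_i}}}$. I would take an arbitrary $z^*\in[0,1]$ with $J(z^*)=0$; the case $z^*\in Z$ is immediate from hypothesis (ii), so assume $z^*\notin Z$. By the sublemma, $z^*$ lies in a maximal interval $[z_1,z_2]$ on which $J\equiv 0$, with $z_1,z_2\in Z$. On $(z_1,z_2)$, $J\equiv 0$ forces $F_0-F_{\sigma_{A_i}}=0$ almost everywhere, so $F_{\sigma_{A_i}}$ agrees with $F_0$ there; in particular $\mbox{supp}(F_{\sigma_{A_i}})\cap(z_1,z_2)=Z\cap(z_1,z_2)$. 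Hypothesis (i), $\mbox{supp}(F_i)\subset\mbox{supp}(F_{\sigma_{A_i}})$, then confines every atom of $F_i$ inside the interval to a grid point.

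The final step shows that $K\coloneqq I_{F_0^i,F_i}$ vanishes on all of $[z_1,z_2]$. The breakpoints of the piecewise-linear function $K$ are contained in $\mbox{supp}(F_0^i)\cup\mbox{supp}(F_i)$, which inside $(z_1,z_2)$ lies in $Z$ by the previous step together with $\mbox{supp}(F_0^i)\subset Z$. At each such grid point, and at the endpoints $z_1,z_2$, we have $J=0$, so (ii) delivers $K=0$. Since $K\geq 0$ (as $F_i\in\mathcal{I}_{F_0^i}$) and is affine between consecutive breakpoints while vanishing at every one of them, $K\equiv 0$ on $[z_1,z_2]$, whence $K(z^*)=0$, as required. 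The main obstacle is the sublemma of the first paragraph: the whole extension from $Z$ to $[0,1]$ rests on controlling where $I_{F_0,F}$ can touch zero off the grid, and that argument must be run carefully through one-sided limits and possible atoms of $F$, using crucially that $F_0$ is flat between consecutive points of $Z$.
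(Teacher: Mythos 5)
The paper itself supplies no proof of this lemma---it is imported verbatim as Lemma 5 of \cite{mensch2025posterior}---so there is no in-text argument to compare yours against; judged on its own, your proof is correct. Your sublemma (that $I_{F_0,F}$ can pass between zero and strictly positive values only at points of $Z$) is precisely a lemma the author drafted and then commented out of the source, proved there by essentially your Case A/Case B analysis, so your route matches the evidently intended one. The only step that needs to be made explicit is the one you yourself flag: in Case B the identity $F(z^{*-})=c$ presupposes that $I$ vanishes on an entire left neighborhood of $z^*$, i.e.\ that $z^*$ is the right endpoint of a nondegenerate zero interval; an isolated off-grid zero must instead be dispatched by Case A. Since $F_0$ and $F_{\sigma_{A_i}}$ are finitely supported, $I$ is piecewise affine, so these two cases do exhaust the off-grid boundary points of the zero set and your structural conclusion is sound. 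For what it is worth, both halves of your argument admit a curvature shortcut that bypasses the sublemma entirely: on a cell $[w_1,w_2]$ between consecutive points of $Z$ the prior CDF is flat, so $J=I_{F_0,F_{\sigma_{A_i}}}$ is concave there, and a nonnegative concave function vanishing at an interior point vanishes on the whole cell (yielding $J(w_1)=J(w_2)=0$, hence $K(w_1)=K(w_2)=0$ by hypothesis (ii), and the flatness of $F_{\sigma_{A_i}}$, hence of $F_i$, on $(w_1,w_2)$); then $K=I_{F_0^i,F_i}$ is convex on $[w_1,w_2]$ because $F_i$ is flat there, and a nonnegative convex function vanishing at both endpoints vanishes identically, which delivers $K(z^*)=0$ without even invoking $\mbox{supp}(F_0^i)\subset Z$ (an assumption your final step borrows from the statement of NBPS-M rather than from the lemma's stated hypotheses).
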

    
    Returning to the proof, we have that the value of 
    $I_{F_0,\hat{F}_i^\epsilon}(z)$ for $\hat{F}^\epsilon_i$ is
    \[
        \int_0^z [F_0(s)-\hat{F}^\epsilon_i(s)] = \int_0^z [(F_0(s)-F_{\sigma_{A_i}}(s))-\epsilon(F_0^i(s)-F_i(s))+\epsilon(F_0^i(s)-G_i(s))]ds
    \]
    For $z\in Z$, 
    \[
    I_{F_0,F_{\sigma_{A_i}}}(z)=0\iff I_{F_0^i,F_i}(z)=0
    \]
    For $z$ such that $\int_0^z [(F_0(s)-F_{\sigma_{A_i}}(s)]ds=0$,
    we have that $\int_0^z [(F_0^i(s)-F_i(s)]=0$ as well by Lemma 
    \ref{MM2025}. Thus, (\ref{MPC}) is satisfied for $\hat{F}_i(z)$ since 
    $G_i\in \mathcal{I}_{F_0^i}$. We next check for violations of 
    (\ref{MPC}) at values of $z$ such that 
    $\int_0^z [(F_0(s)-F_{\sigma_{A_i}}(s)]>0$. Since $f_0^i(a,z)$ and 
    $f_i(a,z)$ are finite for all $z$, for sufficiently small $\epsilon$, 
    one has
    \[
    \int_0^z [(F_0(s)-F_{\sigma_{A_i}}(s)]ds\geq \epsilon\int_0^z [F_0^i(s)-F_i(s)]ds
    \]
    Therefore, $\hat{F}^\epsilon_i\in \mathcal{I}_{F_0}$. 

    By hypothesis due to the violation of NBPS-M, there exists some
    menu $A_j$ such that $F_{\sigma_{A_j}}\neq \delta_{z_0}$, 
    $I_{F_0,\hat{F}^\epsilon_j}(z)>0$ for all $z\in(0,1)$, and 
    $\sum_{a\in A_j}\hat{f}^\epsilon_j(a,z_0)>0$. As a result, since 
    $F_{\sigma_{A_j}}$ is optimal, $\hat{F}^\epsilon_j$ is 
    suboptimal by Proposition \ref{pmben}. By conditions 2 and 3 
    of NBPS-M, 
    \[
    \sum_{i=1}^n \int_0^1 u(a^*_{A_i}(z),z)dF_{\sigma_{A_i}}(z)=\sum_{i=1}^n \int_0^1 u(a^*_{A_i}(z),z)d\hat{F}^\epsilon_i(z)
    \]
    Therefore, there must be some menu $A_k$ such that 
    \[
    \int_0^1 u(a^*_{A_k}(z),z)dF_{\sigma_{A_k}}(z)<\int_0^1 u(a^*_{A_k}(z),z)d\hat{F}^\epsilon_k(z)
    \]
    which contradicts the optimality of $F_{\sigma_{A_k}}$.
    
    Conversely, suppose that the dataset satisfies NIAS and NBPS-M. 
    For each $A$, let $Z^*_A=\{z\in Z: I_{F_0,F_{\sigma_A}}(z)=0\}$.
    We rewrite NBPS to define a convex cone of vectors using the 
    following lemma.
    
    \begin{lemma}
    \label{pmcone}
        NBPS-M holds if and only if there do not exist nonnegative 
        scalars $\hat{\beta}_{i,a,z_{A_i,a}}$, $\tilde{\beta}_{i,a,z}$, 
        $y_{A_i}$ such that:
    \begin{equation}
        \label{farkm1}
        \sum_{a\in A_i}\hat{\beta}_{i,a,z_{A_i,a}}=\sum_{a\in A_i}\sum_{z\in Z_{A_i,a}}\tilde{\beta}_{i,a,z},\,\forall i
    \end{equation}
    \begin{equation}
        \label{farkm2}
        \sum_{a\in A_i}\sum_{z_{A_i,a}<z^*} [z^*-z_{A_i,a}]\hat{\beta}_{i,a,z_{A_i,a}}\geq \sum_{a\in A_i}\sum_{z\in Z_{A_i,a}, z<z^*} [z^*-z]\tilde{\beta}_{i,a,z}+y_{A_i},\,\forall i,z^*\in Z^*_{A_i}\setminus\{0,1\}
    \end{equation}
    \begin{equation}
        \label{farkm3}
        \sum_{a\in A_i}\sum_{z_{A_i,a}<1} [1-z_{A_i,a}]\hat{\beta}_{i,a,z_{A_i,a}}= \sum_{a\in A_i}\sum_{z\in Z_{A_i,a},z<1} [1-z]\tilde{\beta}_{i,a,z},\,\forall i
    \end{equation}
    \begin{equation}
        \label{farkm4}
        \sum_{i=1}^n z_{A_i,a} \hat{\beta}_{i,a,z_{A_i,a}}=\sum_{i=1}^n \sum_{z\in Z_{A_i,a}}z \tilde{\beta}_{i,a,z},\,\forall a\in X
    \end{equation}
    \begin{equation}
        \label{farkm5}
        \sum_{i=1}^n \hat{\beta}_{i,a,z_{A_i,a}}=\sum_{i=1}^n \sum_{z\in Z_{A_i,a}} \tilde{\beta}_{i,a,z},\,\forall a\in X
    \end{equation}
    \begin{equation}
        \label{farkm6}
        \tilde{\beta}_{i,a,z_0}\geq y_{A_i},\,\forall i
    \end{equation}
    while
    \begin{equation}
        \label{farkm7}
        -\sum_{i=1}^n y_A<0
    \end{equation}
    \end{lemma}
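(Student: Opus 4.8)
The plan is to prove the biconditional as a direct translation of NBPS-M into the language of nonnegative scalars, so that a violating family of distributions exists if and only if the homogeneous system (\ref{farkm1})--(\ref{farkm6}) together with the single strict inequality (\ref{farkm7}) admits a solution. The dictionary I would use is $\hat\beta_{i,a,z_{A_i,a}}=\beta_i f_i(a,z_{A_i,a})$ and $\tilde\beta_{i,a,z}=\beta_i g_i(a,z)$, with the slacks $y_{A_i}$ recording the strict gap in the mean-preserving-contraction constraint (\ref{MPC}) for $G_i$. By Lemma \ref{outmsuff} it is without loss to take each $G_i$ supported on $\bigcup_{a\in A_i}(a,Z_{A_i,a})\cup\{(a,z_0)\}$, so that $\tilde\beta$ is indexed exactly by the finite sets $Z_{A_i,a}$ that appear in the system, and both directions of the proof become a matter of reading the conditions off one another.

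First I would record the atom identity $\int_0^{z^*}F(s)\,ds=\sum_{z\le z^*}(z^*-z)f(z)$, valid for any finitely supported $F$ with mass $f$, which rewrites every truncated integral $I_{F_0^i,\cdot}(z^*)$ as a linear form in the scalars. With this in hand the matching is mechanical: (\ref{farkm1}) says that each $G_i$ is a probability measure (its total mass equals that of $F_i$, scaled by $\beta_i$); (\ref{farkm3}) is mean preservation, $I_{F_0^i,G_i}(1)=0$, which together with (\ref{farkm2}) encodes $G_i\in\mathcal{I}_{F_0^i}$; and (\ref{farkm4}), (\ref{farkm5}) are verbatim restatements of the second and third conditions of NBPS-M, namely that the weighted joint law of actions and posterior means, and the weighted marginal action frequencies, are preserved across menus. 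The only inequalities requiring thought are (\ref{farkm2}), (\ref{farkm6}) and (\ref{farkm7}), which jointly encode feasibility of $G_i$ and the fourth condition.

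The key observation for the slack encoding is that at a binding point $z^*\in Z^*_{A_i}$ one has $I_{F_0,F_{\sigma_{A_i}}}(z^*)=0$, and hypothesis (ii) of NBPS-M forces $I_{F_0^i,F_i}(z^*)=0$ as well; hence the truncated integrals of $F_i$ on the two sides cancel and (\ref{farkm2}) collapses to $\beta_i\,I_{F_0^i,G_i}(z^*)\ge y_{A_i}$. With $y\ge0$ this is exactly feasibility $I_{F_0^i,G_i}(z^*)\ge0$; and coupling it with (\ref{farkm6}), $\tilde\beta_{i,a,z_0}\ge y_{A_i}$, and the strict inequality (\ref{farkm7}), $\sum_i y_{A_i}>0$, guarantees that for at least one menu $j$ the slack $y_{A_j}$ is strictly positive, which simultaneously makes $I_{F_0^j,G_j}$ strictly positive on the binding set and places strictly positive weight on the prior mean $z_0$. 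This is precisely the fourth condition for menu $j$. In the forward direction (NBPS-M fails, so a violating $\{G_i\}$ exists) feasibility of the system is then immediate: such a $G_j$ has $I_{F_0^j,G_j}>0$ on all of $(0,1)$, so I would set $y_{A_j}=\min_{z^*\in Z^*_{A_j}}\beta_j I_{F_0^j,G_j}(z^*)$ truncated below the prior-weight scalar $\tilde\beta_{j,a,z_0}$, and $y_{A_i}=0$ otherwise.

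The main obstacle is the reverse direction, where from a scalar solution one must recover a family $\{G_i\}$ meeting the fourth condition \emph{as literally stated}, with $I_{F_0^j,G_j}(z)>0$ for all $z\in(0,1)$ rather than merely on $Z^*_{A_j}$. My plan here is to exploit that $I_{F_0^j,G_j}$ is piecewise affine with breakpoints contained in $Z\cup\{z_0\}$ and nonnegative throughout, so that strict positivity can only fail at an interior breakpoint; by Lemma \ref{MM2025} the binding set of the data distribution is pinned down by $Z$, and at every interior point outside $Z^*_{A_j}$ the original integral $I_{F_0,F_{\sigma_{A_j}}}$ is already strictly positive, leaving room. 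Any residual interior touch of $G_j$ against $F_0^j$ would then be removed by a local mean-preserving adjustment confined to a single optimality region $C_{A_j,a}$: since such an adjustment alters neither the conditional mean nor the mass that action $a$ contributes, it preserves the equalities (\ref{farkm3})--(\ref{farkm5}) and the prior weight while lifting $I_{F_0^j,G_j}$ strictly off zero on the interior. Verifying that this adjustment can always be performed without reintroducing infeasibility elsewhere is the delicate part of the argument; once it is in place, conditions (1)--(4) of NBPS-M hold for the reconstructed family and the equivalence is complete.
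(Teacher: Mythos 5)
Your translation dictionary ($\hat\beta_{i,a,z}=\beta_i f_i(a,z)$, $\tilde\beta_{i,a,z}=\beta_i g_i(a,z)$, $y_{A_i}$ as the slack in (\ref{farkm2})), the atom identity for rewriting truncated integrals, and the entire forward direction match the paper's proof. The reverse direction, however, has a genuine gap: NBPS-M existentially quantifies over the priors $F_0^i$ (they are part of the data of a violation, required only to have support in $Z$ and to satisfy hypotheses (i)--(ii)), so from a scalar solution you must \emph{construct} $F_0^i$, not merely recover $G_i$ relative to a prior you treat as already given. You never do this. The paper's key move is to define $F_0^i$ with $\mbox{supp}(F_0^i)=Z^*_{A_i}$ and $\int_0^{z^*}F_0^i(z)\,dz=\frac{1}{\beta_i}\sum_{a,\,z_{A_i,a}<z^*}[z^*-z_{A_i,a}]\hat\beta_{i,a,z_{A_i,a}}$, which simultaneously forces $I_{F_0^i,F_i}(z^*)=0$ on $Z^*_{A_i}$ (hypothesis (ii)) and turns the left-hand side of (\ref{farkm2}) into $\beta_i\int_0^{z^*}F_0^i$, so that (\ref{farkm2})--(\ref{farkm3}) become exactly feasibility of $G_i$ at the points of $Z^*_{A_i}$, extended to all $z$ by a first-violation argument.

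This construction also dissolves the difficulty you flag as ``the delicate part.'' Because $F_0^i$ is chosen with support \emph{exactly} $Z^*_{A_i}$, the function $I_{F_0^i,G_i}$ is piecewise affine and can attain an interior local minimum only where $F_0^i-G_i$ jumps upward, i.e.\ only at atoms of $F_0^i$, which all lie in $Z^*_{A_i}$; there (\ref{farkm2}) with $y_{A_j}>0$ gives $I_{F_0^j,G_j}(z^*)\geq y_{A_j}/\beta_j>0$, so strict positivity on all of $(0,1)$ follows without any ex-post surgery on $G_j$. Your proposed ``local mean-preserving adjustment confined to a single optimality region'' is both unverified (you concede as much) and unnecessary once the prior is built correctly; worse, as stated it risks moving mass off the grid $Z_{A_j,a}$ or perturbing the conditional means that (\ref{farkm4})--(\ref{farkm5}) pin down. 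To complete the proof along your lines, replace that repair step with the explicit definition of $F_0^i$ above and the observation about where interior minima of $I_{F_0^i,G_i}$ can occur.
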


    \begin{proof}
        Let 
    \begin{equation}
        \label{hatbetam}
        \hat{\beta}_{i,a,z}=\beta_i f_i(a,z)
    \end{equation}
    and
    \begin{equation}
        \label{tildebetam}
        \tilde{\beta}_{i,a,z}=\beta_i g_i(a,z)
    \end{equation}

    In the ``only if" direction, equation (\ref{farkm1}) holds due 
    to both $F_i$ and $G_i$ being CDFs; in the ``if" direction, this
    constructs $F_i$ and $G_i$ as CDFs. Equations (\ref{farkm4})-(\ref{farkm7}) 
    are simple rewritings of conditions (1)-(4) of NBPS-M. Thus, it
    remains to show that the existence of some prior
    $F_0^i$ (as in NBPS-M) is equivalent to the conditions 
    (\ref{farkm2}) and (\ref{farkm3}).

    For all $z^*\in Z_{A_i}^*$, $I_{F_0^i,F_i}(z^*)=0$, and therefore the 
    left-hand side of (\ref{farkm2}) equals $\int_0^{z^*}F_0^i(z)dz$;
    since $G_i\in\mathcal{I}_{F_0^i}$, (\ref{farkm2}) is satisfied
    for some $y_A\geq 0$. (\ref{farkm3}) is satisfied for the same 
    reason. 

    Conversely, suppose that
    $\{\tilde{\beta}_{i,a,z_{A_i,a}},\hat{\beta}_{i,a,z},y_{A_i}\}$ as in
    the hypothesis of the lemma. Define $F_0^i$ such that 
    $\mbox{supp}(F_0^i)=Z_{A_i}^*$ and
    \[
    \int_{0}^{z^*} F_0^i(z)dz=\frac{1}{\beta_i}\sum_{a\in A_i,\,z_{A_i,a}<z^*} [z^*-z_{A_i,a}]\hat{\beta}_{i,a,z_{A_i,a}}
    \]
    where $\beta_i$ is as in (\ref{hatbetam}). Then 
    $G_i\in \mathcal{I}_{F_0^i}$: by (\ref{farkm2}) and (\ref{farkm3}),
    $I_{F_0^i,G_i}(z^*)\geq 0$ for $z^*\in Z_{A_i}^*$. If there is some
    other $z\in [0,1]$ where $I_{F_0^i,G_i}(z)< 0$, then at such $z$,
    $G_i(z)>F_0^i(z)$. Letting $z^*\in Z_{A_i}^*$ be the smallest 
    such that $z^*>z$, one would also have $I_{F_0^i,G_i}(z^*)<0$,
    contradiction. Meanwhile, $F_i\in\mathcal{I}_{F_0^i}$, and satisfies
    the rest of the conditions of NBPS-M by construction.    
    \end{proof}
        
    Define the matrix $\mathbf{A}$ with 
    $\vert \mathcal{A}\vert \times (1+\vert Z\vert) +2\vert X\vert$ rows 
    (indexed by $i$, corresponding to menus, pairs of menus with 
    posterior means in the support of the prior, or actions with $0$ or 
    $1$) and $\sum_{A\in\mathcal{A}}[1+\sum_{a\in A}(1+\vert Z_{A,a}\vert)]$ 
    columns (indexed by $j$, corresponding to menus $A$, posterior 
    means $z_{A,a}$, or posterior means contained in $Z_{A,a}$). 
    The entries of $\mathbf{A}$ are given by 
    \[
    \mathbf{A}_{i,j}=\begin{cases}
        1, & i=(A,0), j=(A,a,z_{A,a})\\
        -1, &  i=(A,0); j=(A,a,z),z\in Z_{A,a}\\
        z^*-z_{A,a}, & i=(A,z^*), z^*\in Z_A^*\setminus\{0,1\}; j=(A,a,z_{A,a})\\
        z-z^*, & i=(A,z^*), z^*\in Z_A^*\setminus\{0,1\}; j=(A,a,z), z\in Z_{A,a}\\
        -1, & i=(A,z^*), z^*\in Z_A^*\setminus\{0,1\}; j=A\\
        1-z_{A,a}, & i=(A,1); j=(A,a,z_{A,a})\\
        z-1, & i=(A,1); j=(A,a,z),z\in Z_{A,a}\\
        z_{A,a}, & i=(a,1); j=(A,a,z_{A,a})\\
        -z, & i=(a,1); j=(A,a,z),z\in Z_{A,a}\\
        1, & i=(a,0); j=(A,a,z_{A,a})\\
        -1, & i=(a,0); j=(A,a,z),z\in Z_{A,a}\\
        1, & i=A; j=(A,a,z),z\in Z_{A,a}\\
        -1, & i=A; j=A\\
        0, & \mbox{otherwise}
    \end{cases}
    \]
    Meanwhile, define the vector $\mathbf{b}$ of length 
    $\sum_{A\in\mathcal{A}}[1+\sum_{a\in A}(1+\vert Z_{A,a}\vert)]$
    such that (using the same indexes $j$ as in $\mathbf{A}$)
    $$\mathbf{b}_j=\begin{cases}
    -1, & j=A\\
    0, & \mbox{otherwise}
    \end{cases}
    $$
    
    By applying Farkas' lemma (\cite{aliprantis2006infinite}, 
    Corollary 5.85), there exist (i) scalars $u_A^k$, for $a\in X$ 
    and $k\in\{0,1\}$, (ii) scalars $\lambda_{z,A}$ for $A\in\mathcal{A}$ 
    and $z\in Z^*_A$ such that $\lambda_{z,A}\geq 0$ for 
    $z\notin \{0,1\}$, and (iii) scalars $\gamma_A\geq 0$ for $A$ such 
    that $F_{\sigma_A}\neq \delta_{z_0}$, that satisfy:    
\begin{equation}
\label{farkas_f}
    \lambda_0^A+\sum_{z^*>z} [z^*-z]\lambda_{z^*,A}-z_{A,a} u^1_{a}-u_a^0\leq 0,\qquad\forall A, a\in \mbox{supp}(\sigma_A)
\end{equation}
\begin{equation}
\label{farkas_g}
    -\lambda_0^A-\sum_{z^*>z} [z^*-z]\lambda_{z^*,A}+zu_a^1+u_a^0\leq 0,\qquad\forall (A,a), z\in Z_{A,a}
\end{equation}
\begin{equation}
\label{farkas_z0}
    -\lambda_0^A-\sum_{z^*>z_0} [z^*-z_0]\lambda_{z^*,A} +\gamma_A^0+z_0 u_{a^*(z_0)}^1+u_{a^*(z_0)}^0\leq 0,\qquad\forall (A,a), \qquad F_{\sigma_A}\neq \delta_{z_0}
\end{equation}
\begin{equation}
\label{farkas_y}
    -\sum_{z^*\in Z^*_A\setminus\{0,1\}} \lambda_{z^*,A}-\gamma_A^0\leq -1,\qquad \forall A:\,F_{\sigma_A}\neq \delta_{z_0}
\end{equation}

Set the payoffs for each $a$ as $u(a,z)=u_a^0+zu_a^1$, and the 
price function for each $A$ as 
\[
\Lambda_A(z)=\lambda_0^A+\sum_{z^*\in Z^*_A} [z^*-z]\lambda_{z^*,A}\mathbf{1}[z^*\geq z]
\]
Note that $\lambda_A$ is convex, and affine on any intervals on which
$I_{F_0,F_{\sigma_A}}(z)>0$. By (\ref{farkas_f}) and (\ref{farkas_g}), 
it follows that $\Lambda_A(z_{A,a})=u(a,z_{A,a})$. 
By (\ref{farkas_g}), for all $z\in \cup_{a} Z_{A,a}$, 
$u(a,z)\leq \Lambda_A(z)$. By Lemma \ref{outmsuff}, it follows that 
$F_{\sigma_A}$is optimal for menu $A$, as there is always an optimal 
distribution whose support is contained in $\bigcup_{a} Z_{A,a}$. 
By (\ref{farkas_y}), for each $A$, at least one of $\lambda_{z^*,A}$ or 
$\gamma_A^0$ is strictly positive whenever 
$F_{\sigma_A}\neq \delta_{z_0}$. If one of the former is strictly 
positive, then $\Lambda_A$ is not affine; if one of the latter, then by 
(\ref{farkas_z0}), $u(a^*_A(z_0),z_0)<\Lambda_A(z_0)$. In 
both cases, the sender strictly benefits by persuasion by Lemma 
\ref{pmben}.

\end{document}